\tikzstyle{gate}=[fill=white, draw=black, shape=rectangle, minimum height=0.43cm, minimum width=0.43cm, inner sep=0.1em]
\tikzstyle{control}=[fill=black, draw=black, shape=circle, scale=0.38]
\tikzstyle{not}=[shape=circle, path picture={ 
\tikzstyle{wcontrol}=[fill=white, draw=black, shape=circle, scale=0.35]
\tikzstyle{empty}=[fill=white, draw=black, shape=rectangle, inner sep=0.4em, emptyborder]
\tikzstyle{globalphase}=[fill=white, draw=black, inner sep=0.15em, shape=rounded rectangle, minimum height=0.4cm]
\tikzstyle{ancilla}=[fill=black, draw=black, shape=rectangle, minimum width=0.01cm, minimum height=0.25cm, inner sep=0.01em]
\tikzstyle{ground}=[fill=white, path picture={\draw[black](-1.5mm,0)--(-0.6mm,0);\draw[black,thick](-0.6mm,-1.75mm)--(-0.6mm,1.75mm) (0mm,-0.9mm)--(0mm,0.9mm) (0.6mm,-0.5mm)--(0.6mm,0.5mm);}, minimum width=0.1mm, draw=none, outer sep=0pt]
\tikzstyle{gate22}=[fill=white, draw=black, shape=rectangle, minimum height=.68cm, minimum width=0.6cm]
\tikzstyle{void}=[shape=rectangle, minimum height=0.5cm]
\tikzstyle{gate44}=[fill=white, draw=black, shape=rectangle, minimum height=1.43cm, minimum width=0.5cm]
\tikzstyle{divider}=[fill={rgb,255: red,220; green,220; blue,220}, draw=black, shape border rotate=90, regular polygon, regular polygon sides=3, inner sep=1.5pt, rounded corners=0.5mm]
\tikzstyle{gatherer}=[fill={rgb,255: red,220; green,220; blue,220}, draw=black, shape border rotate=-90, regular polygon, regular polygon sides=3, inner sep=1.5pt, rounded corners=0.5mm]
\tikzstyle{hyperedge}=[fill=white, draw=black, shape=rectangle, rounded corners=0.1cm, minimum height=.6cm, minimum width=.6cm]
\tikzstyle{square}=[fill=white, draw=black, shape=rectangle, minimum height=0.20cm, minimum width=0.20cm, inner sep=0.1em, thick]
\tikzstyle{gphase}=[rounded rectangle, rounded rectangle arc length=120, fill={zx_grey}, inner sep=2pt, font={\tiny\boldmath}, label distance=1mm, fill opacity=.8, text opacity=1, tikzit category=ZX]
\tikzstyle{customcontrol}=[fill=white, draw=black, inner sep=0.1em, shape=rounded rectangle, minimum height=0.2cm]
\tikzstyle{emptyborder}=[-, dash pattern=on 0.16em off 0.16em on 0.16em off 0.16em on 0.16em off 0em]
\tikzstyle{etc}=[-, draw=black, densely dashed, thick]
\tikzstyle{greyetc}=[-, draw={rgb,255: red,161; green,161; blue,161}, densely dashed, thick]
\tikzstyle{dots}=[-, dotted, draw=black, thick]
\tikzstyle{big}=[-, thick]
\tikzstyle{register}=[-, double]
\tikzstyle{grey}=[-, draw={rgb,255: red,161; green,161; blue,161}]
\tikzstyle{border}=[-, fill=white]
\newcommand{\tf}[1]{\scalebox{0.90}{\tikzfig{#1}}}
\theoremstyle{plain}
\newtheorem{theorem}{Theorem}
\newtheorem{lemma}[theorem]{Lemma}
\newtheorem{proposition}[theorem]{Proposition}
\theoremstyle{definition}
\newtheorem{definition}[theorem]{Definition}
\newtheorem{example}[theorem]{Example}
\newtheorem{remark}[theorem]{Remark}
\theoremstyle{definition}
\newcommand{\C}{\mathbb{C}}
\newcommand{\qreg}[1]{\mathsf{q}#1}
\newcommand{\qsem}[1]{\sem{\mathsf{q}#1}}
\newcommand{\unitary}[1]{\qreg{#1} \leftrightarrow \qreg{#1}}
\newcommand{\id}[1]{\mathsf{id}_{#1}}
\newcommand{\cat}[1]{\ensuremath{\mathbf{#1}}}
\newcommand{\sem}[1]{\ensuremath{\llbracket #1 \rrbracket}}
\newcommand{\dg}{^\dagger}
\newcommand{\spp}{\sem p^\perp}
\newcommand{\ifthen}[2]{\mathop{\mathsf{if\ let}}#1\mathbin{\mathsf{then}}#2}
\newcommand{\pattern}[2]{\mathsf{q}#1 < \mathsf{q}#2}
\newcommand{\Phase}[1]{\mathsf{Ph}(#1)}
\newcommand{\Gate}[1]{\mathsf{#1}\xspace}
\newcommand{\letin}[1]{\mathop{\mathsf{let}}#1\mathrel{\mathsf{in}}}
\DeclareMathOperator{\doubleplus}{+\kern-1ex+}
\newcommand{\iid}{\ensuremath{\mathrm{id}}}
\newcommand{\evalu}[2]{\mathsf{eval}^{\mathsf{u}}_{#1}(#2)}
\newcommand{\evalp}[2]{\mathsf{eval}^{\mathsf{p}}_{#1}(#2)}
\newcommand{\QFT}[1]{\ensuremath{\mathsf{QFT}_{#1}}}
\tikzset{curve/.style={settings={#1},to path={(\tikztostart)
    .. controls ($(\tikztostart)!\pv{pos}!(\tikztotarget)!\pv{height}!270:(\tikztotarget)$)
    and ($(\tikztostart)!1-\pv{pos}!(\tikztotarget)!\pv{height}!270:(\tikztotarget)$)
    .. (\tikztotarget)\tikztonodes}},
    settings/.code={\tikzset{quiver/.cd,#1}
        \def\pv##1{\pgfkeysvalueof{/tikz/quiver/##1}}},
    quiver/.cd,pos/.initial=0.35,height/.initial=0}
\tikzset{tail reversed/.code={\pgfsetarrowsstart{tikzcd to}}}
\tikzset{2tail/.code={\pgfsetarrowsstart{Implies[reversed]}}}
\tikzset{2tail reversed/.code={\pgfsetarrowsstart{Implies}}}
\tikzset{no body/.style={/tikz/dash pattern=on 0 off 1mm}}
\title{Quantum circuits are just a phase}
\author{Chris Heunen}
\affiliation{
  \department{School of Informatics}
  \institution{University of Edinburgh}
  \city{Edinburgh}
  \postcode{EH8 9AB}
  \country{United Kingdom}
}
\email{chris.heunen@ed.ac.uk}
\author{Louis Lemonnier}
\affiliation{
  \department{School of Informatics}
  \institution{University of Edinburgh}
  \city{Edinburgh}
  \postcode{EH8 9AB}
  \country{United Kingdom}
}
\email{louis.lemonnier@ed.ac.uk}
\author{Christopher McNally}
\affiliation{
  \department{Center for Quantum Engineering}
  \institution{Massachusetts Institute of Technology}
  \city{Cambridge}
  \postcode{MA 02139-4307}
  \country{United States}
}
\email{mcnallyc@mit.edu}
\author{Alex Rice}
\affiliation{
  \department{School of Informatics}
  \institution{University of Edinburgh}
  \city{Edinburgh}
  \postcode{EH8 9AB}
  \country{United Kingdom}
}
\email{alex.rice@ed.ac.uk}
\begin{document}

\begin{abstract}
  Quantum programs today are written at a low level of abstraction---quantum circuits akin to assembly languages---and the unitary parts of even advanced quantum programming languages essentially function as circuit description languages.
  This state of affairs impedes scalability, clarity, and support for higher-level reasoning.
  More abstract and expressive quantum programming constructs are needed.

  To this end, we introduce a simple syntax for generating unitaries from ``just a phase''; we combine a (global) phase operation that captures phase shifts with a quantum analogue of the ``if let'' construct that captures subspace selection via pattern matching.
  This minimal language lifts the focus from gates to eigen\-decomposition, conjugation, and controlled unitaries; common building blocks in quantum algorithm design.

  We demonstrate several aspects of the expressive power of our language in several ways.
  Firstly, we establish that our representation is universal by deriving a universal quantum gate set.
  Secondly, we show that important quantum algorithms can be expressed naturally and concisely, including Grover's search algorithm, Hamiltonian simulation, Quantum Fourier Transform, Quantum Signal Processing, and the Quantum Eigenvalue Transformation.
  Furthermore, we give clean denotational semantics grounded in categorical quantum mechanics.
  Finally, we implement a prototype compiler that efficiently translates terms of our language to quantum circuits, and prove that it is sound with respect to these semantics.
  Collectively, these contributions show that this construct offers a principled and practical step toward more abstract and structured quantum programming.
\end{abstract}

\maketitle

\section{Introduction}

Quantum computers can accommodate algorithms that solve certain classes of problems exponentially faster than the best known classical algorithms~\cite{nielsenchuang}.
Spurred on by this promise, quantum hardware has developed to the point where it's now a commercial reality. The current state of the art is still modest---qubit counts in the hundreds, coherence times in the microseconds, and gate error rates in the hundredths of a percent---but capabilities keep advancing at pace~\cite{mckinsey}.


As quantum computing hardware keeps developing, the bottleneck to useful application is increasingly shifting to quantum software development.
One reason for this lag is the low level of abstraction at which quantum computers are currently programmed.
Most quantum software today is written in terms of quantum circuits (see \cref{fig:toffoli,fig:grover} for an example)---or worse, using hardware-specific execution instructions.
These representations suffice for small-scale experimentation and applications, but face several challenges in the longer term\todo[malin]{Suggestion: for more complex systems in the longer term}:
\begin{itemize}
  \item \emph{Scalability.} Proven classical software engineering practice and principles show that developing and maintaining programs at larger scales needs functionality supporting modularity~\cite{sullivanetal:modularity}.
  A related challenge is that interfacing with existing classical (high-performance) infrastructure similarly requires more structured representations.

  \item \emph{Automatability.} Empirical evidence shows that the vast majority of quantum circuits implementing useful quantum algorithms is taken up by `bookkeeping'~\cite{valironetal:quipper}. The burden of having to write this boilerplate code can be shifted from the programmer to automated support systems when the representation has a high enough level of abstraction.\todo[kim]{How do we tackle this?}

  \item \emph{Understandability}. At the root of these challenges lies the problem that quantum circuits are too fine-grained for human programmers to understand quantum algorithms at a natural level. Intuition for quantum algorithms comes from quantum information theory, and ultimately linear algebra. Having to translate this in terms of specific quantum gate sets is merely an obscuring step.
  According to the empirically supported weak Sapir-Whorf hypothesis from cognitive linguistics, a language's structure influences a speaker's ability to perceive ideas, without strictly limiting or obstructing them~\cite{ahearn:sapirwhorf}. Having a more abstract principled representation of quantum programs can therefore aid in the discovery of new quantum algorithms.
\end{itemize}
Additionally---but we will not address this challenge explicitly here---to enable optimising compiler passes, it is helpful to start at a higher level of abstraction, so that as much as possible of the programmer's intent is retained~\cite{hoos:prematureoptimisation}. At higher levels of abstraction, different optimisations become apparent. For example, the Toffoli gate is easily seen to commute with \(\Gate{Z}\) measurements, but this property is obscured when it is expressed in terms of hardware-native gates as in \cref{fig:toffoli}.
To better support reasoning, optimisation, program synthesis and verification, principled quantum programming needs a representation with more expressive power\todo[malin]{Is expressive the right word} and structural clarity.
Working primarily with quantum circuits is \emph{just a phase} in the coming of age of quantum programming.

\begin{figure}
	\[
		\tf{toffoli}
	\]
  \caption{The Toffoli gate expanded into a quantum circuit of native gates. The expansion obscures that the operation commutes with \(\Gate{Z}\) measurement.}
  \label{fig:toffoli}
\end{figure}


Unfortunately, quantum programming requires different abstractions to classical programming; existing classical constructs do not transfer cleanly. Conditional if-then-else constructions need care within quantum computing~\cite{badescupanangaden:qif,bisioetal:qif}. More generally, the quantum setting allows causal constructs fundamentally incompatible with classical control flow~\cite{chiribellaetal:switch,procopioetal:orders}. Control structures such as for and while loops are limited because they cannot inspect the quantum variable controlling the loop without altering its value~\cite{sabryvalironvizzotto:control,andresmartinezheunen:while}. The no-cloning theorem makes practical implementations of recursion schemes over quantum states very difficult~\cite{wootterszurek:nocloning,ying:foundations,zhangying:recursive}. Similarly, there are foundational challenges to higher-order structure~\cite{selinger:higherorder,paganietal:higherorder}.

As a result, the abstractions available in current quantum programming languages are limited and broadly fit into two categories:
\begin{itemize}
\item Many languages and libraries directly describe circuits~\cite{heimetal:quantumprogramming, valironetal:quipper}. These languages often have various quantum gates directly as primitives, often adding orthogonal features such as classical control~\cite{protoquipper,svoreetal:qsharp,crossetal:openqasm} or uncomputation~\cite{bichseletal:silq,hirataheunen:qurts}.
\item Other languages take a much larger departure from the circuit model. Some are based on reversible computing~\cite{caretteetal:sqrtpi,information-effects}, including languages which utilise symmetric pattern matching~\cite{sabryvalironvizzotto:control,phd-louis,daveetal:control} to define unitary operations. Alternative approaches~\cite{botoforslund:zeta} build on models of quantum computing such as the ZX-calculus~\cite{DBLP:conf/icalp/CoeckeD08}. Although these languages offer different abstractions over the circuit model, it is unclear how (or known to be hard~\cite{debeaudrap_et_al:LIPIcs.ICALP.2022.119}) to compile them down to circuits.
\end{itemize}

In this paper, we position ourselves between these two settings, offering an abstraction over quantum circuits while retaining a linear time compilation algorithm. We do this by introducing a new quantum programming construct, a quantum analogue of the ``if let'' statement, as used in Rust~\cite{klabniketal:rust}, which subsumes common operations such as conjugation and controlled blocks. When combined with an explicit treatment of global phase, often primitive quantum gates can be derived.
It leverages the fact that many quantum algorithms, and in fact many linear algebra techniques, have at their heart a decomposition into eigenspaces and a manipulation of eigenvectors; the expression
\[
  \ifthen{p}{e}
\]
represents the former, while the latter is captured by the global phase operator
\[
  \Phase{\theta}\text.
\]
The pattern $p$ specifies a case split by selecting a subspace of a variable's state space, with the ``if let'' expression applying its body to this subspace. Crucially, however, this subspace is not limited to align to classical values like $\ket{0}$ or $\ket{1}$, but also quantum values like $\ket{+}$ or higher-dimensional subspaces.
Our syntax consists of \emph{just a phase}, and is a simple but useful expression of the essence of eigendecompositions from linear algebra in quantum programming. We argue this in four ways.

First, this construct is expressive enough to serve as a foundational abstraction.
For example, conventional gate-level operations, that are usually taken as primitive, instead emerge as derived constructs. Here is a standard computationally universal gate set in the combinator language introduced in \cref{s:syntax}:
\begin{equationarray*}{rrlcrrl}
    \Gate{Z} &\coloneq& \ifthen{\ket{1}}{\Phase{\pi}} & \qquad & \Gate{X} &\coloneq& \ifthen{\ket{-}}{\Phase{\pi}}\\
    \Gate{T} &\coloneq& \ifthen{\ket{1}}{\Phase{\nicefrac{\pi}{4}}} & \qquad & \Gate{Y} &\coloneq& \ifthen{\Gate{S} \cdot \ket{-}}{\Phase{\pi}}\\
    \Gate{H} &\coloneq& \ifthen{\Gate{Y}^{\nicefrac{1}{4}} \cdot \ket{1}}{\Phase{\pi}}& \qquad & \Gate{CX} &\coloneq& \ifthen{\ket{1} \otimes \id{1}}{\Gate{X}}
\end{equationarray*}

\begin{figure}
	\[
		\tf{grover}
	\]
  \caption{A circuit for an instance of Grover's algorithm searching for two marked bit strings  011 and 101~\cite{figgattetal:grover}. The intent of the programmer (and the meaning of the program) is obfuscated by the circuit representation.}
  \label{fig:grover}
\end{figure}

As a second argument, we show that this language captures a broad class of quantum algorithms. For example, Grover's search algorithm (whose circuit representation is given in \cref{fig:grover}) iterates two main subroutines. The most important one, the diffusion operator
\begin{equation*}
  \Phase{\pi} \otimes \id n; \ifthen{\ket{+} \otimes \cdots \otimes \ket{+}}{\Phase{\pi}}
\end{equation*}
is a one-liner.
The programmer has to supply the oracle operator
\begin{equation*}
  \ifthen{\ket{\omega_1} \otimes \cdots \otimes \ket{\omega_n}} {\Phase{\pi}}\text,
\end{equation*}
where $\omega_j$ is the $j$\textsuperscript{th} bit in the binary expansion of the marked element,
which also simplifies.
In a similar way, we show that our representation can succinctly express important quantum algorithms including Quantum Fourier Transform, Hamiltonian simulation, Quantum Signal Processing, and Quantum Eigenvalue Transformation.

Third, we validate the practicality of our approach through a prototype compiler that translates our higher-level constructs into standard quantum circuits efficiently, which we present as an evaluation function converting any term to a canonical ``circuit-like'' form. This may also be regarded as an operational semantics for the language.

Fourth, we equip the language with a categorical denotational semantics which naturally relates to established models of quantum theory~\cite{dimeglio:rstar,caretteetal:sqrtpi}.
More precisely, we build on rig dagger categories with independent coproducts. This  semantics-first approach to language design fits in the larger programme of categorical quantum theory~\cite{heunenvicary:cqm,heunenkornell:hilb}. We employ these semantics to prove soundness of our compilation algorithm.

\subsubsection*{Implementation}
We include a prototype implementation of the combinator language~\cite{ricePhaseLanguageImplementation2025}.
This takes the form of a Rust executable and library, and can be built with \texttt{cargo} (tested with version 1.86.0). This implementation parses terms, performs typechecking, computes inverses and square roots, runs evaluation to a circuit, and finally outputs the matrix represented by the term. HTML documentation of this library is provided at \url{alexarice.github.io/phase-rs}.

\subsubsection*{Related work}

Our work is set in the general framework of quantum control
flow~\cite{valiron:habilitation,valiron:review}, as opposed to classical
control flow. In quantum control, the \emph{branching} is decided by quantum
data, leaving the result in a potential quantum superposition. There is a
number of ongoing research projects around quantum control flow, and more
specifically, on the integration of quantum control in programming languages
paradigms.

\begin{itemize}
	\item Symmetric pattern
		matching~\cite{sabryvalironvizzotto:control,phd-kostia,phd-louis} is a
		proof-of-concept programming language for quantum control, in which
		control is only quantum. Similar to a $\lambda$-calculus, its only
		primitives are type connectives, completed with complex numbers for the
		quantum aspect of the language; these complex numbers allow for the
		expression of any unitary operator in the language. However, like the
		$\lambda$-calculus, it is an abstract language, and does not reasonably
		compile to quantum circuits or any quantum hardware. Symmetric pattern
		matching can be seen as an improvement of
		QML~\cite{altenkirchgrattage:qml}, also entirely based on quantum if
		statements, but less scalable.
	\item There exists a whole research program on proof languages that include
		quantum control~\cite{diazcaro:review} based on intuitionistic linear
		logic. While this approach is fully scalable and mathematically sound,
		it is geared towards logical intuition and understandability rather than quantum programming.
	\item Most quantum programming languages in the literature are circuit
		description languages, possibly with a quantum if statement~\cite{ying:foundations,yingzhang:case,fuProtoQuipperReversingControl2024a}, effectively acting as classical programming languages whose
		values are quantum circuits. In these languages, unitaries come as
		constants that can be called and used as black boxes. This is hard to scale or automate.
	\item Qunity~\cite{qunity} mixes symmetric pattern matching and circuit
		description languages to allow for some form of quantum control while
		keeping a syntax relatively close to quantum circuits. It, however,
		still contains most unitary operations as constants. The compiler has exponential blow-ups so scalability is a challenge.
	\item Silq~\cite{bichseletal:silq,hirataheunen:qurts} has support for a quantum if but only
		for the type (qu)bit, and therefore lacks in scalability compared to
		what we are able to achieve. The language also does not come equipped
		with a compositional denotational semantics to support the validity of
		the operational semantics.
	\item The zeta calculus~\cite{botoforslund:zeta} is an abstract
		language---in the sense of the $\lambda$-calculus, which offers a
		compilation to the ZX-calculus, a graphical language for linear maps
		between finite-dimensional Hilbert spaces. However, the operations
		that the zeta calculus represent are not only unitary, since it allows
		to \emph{copy} and to discard on the Z and X bases. There is no known
		way of compiling the zeta calculus (or the ZX calculus) to quantum
		circuits.
	\item Universal quantum if conditional~\cite{bisioetal:qif}. On a more
		foundational aspect of quantum computing, it is known that there is no
		quantum operation that operates a generic ``quantum if'' on a black box oracle.  It means in
		particular that quantum theory does not allow for an operator $\lambda
		xy . \textsf{if}~x~\textsf{then}~y$.
\end{itemize}

\subsubsection*{Structure of this article}

After briefly reviewing the necessary background about quantum computing in \Cref{s:background}, we introduce the syntax of our combinator language in \Cref{s:syntax}, in addition to describing certain meta-operations on terms.
Next, \Cref{s:examples} details four examples of important (families of) quantum algorithms in our language.
In \Cref{s:compiler}, we discuss a prototype compiler from the language into quantum circuits.
Denotational semantics are developed in \Cref{s:semantics}, and are used to exhibit equalities that hold within the language, and prove our compilation algorithm is sound.
We finally discuss different potential settings for our ``if let'' construction in \cref{sec:beyond-combinators}, describing alternative nominal representations of the core combinator language.
\Cref{s:conclusion} concludes by discussing future developments.

\subsubsection*{Acknowledgements}

The proofs of \Cref{th:ic-smc,th:rig} came out of discussions with the authors of~\cite{meglioetal:relations}.
We extend our thanks to the people of the Quantum Programming group in the
University of Edinburgh for their support and proofreading.  This research was
funded by the Engineering and Physical Sciences Research Council (EPSRC) under
project EP/X025551/1 "Rubber DUQ: Flexible Dynamic Universal Quantum
programming". C.M. is supported by the U.S. Army Research Oﬃce Grant No. W911NFF-23-1-0045 (Extensible and Modular Advanced Qubits). The views and conclusions contained herein are those of the authors and should not be interpreted as necessarily representing the official policies or endorsements, either expressed or implied, of the U.S. Government.

\section{Quantum Computing}\label{s:background}

We start with the essential background on quantum computing. We will only be concerned with unitary quantum computing, and have no need to consider the measurement readout at the end of the computation in detail. For more details we refer to textbooks such as~\cite{nielsenchuang,yanofskymannucci}.

\subsubsection*{Qubits}

The unit of quantum information is the \emph{qubit}. The state of a qubit is a unit vector $\varphi = \left(\begin{smallmatrix} x\\y \end{smallmatrix}\right)$ in the Hilbert space $\mathbb{C}^2$, that is, a pair of complex numbers $x$ and $y$ such that $|x|^2+|y|^2=1$. This vector is often written in \emph{ket notation} $\ket{\varphi}$. Two special vectors are the \emph{computational basis} states $\ket{0}=\left(\begin{smallmatrix} 1 \\ 0 \end{smallmatrix}\right)$ and $\ket{1}=\left(\begin{smallmatrix} 0 \\ 1 \end{smallmatrix}\right)$; a general state $\ket{\varphi}$ is in a \emph{superposition} of these two. Two such states that we will often use are $\ket{+}=\tfrac{1}{\sqrt{2}}\ket{0}+\tfrac{1}{\sqrt{2}}\ket{1}$ and $\ket{-}=\tfrac{1}{\sqrt{2}}\ket{0}-\tfrac{1}{\sqrt{2}}\ket{1}$.

\subsubsection*{Entanglement}

In a system with multiple qubits, the state is a unit vector in the \emph{tensor product}. For example, if the first qubit is in state $\ket{0}$, and the second qubit is in state $\ket{1}$, then the state of the compound system is the vector $\ket{0} \otimes \ket{1} \in \mathbb{C}^2 \otimes \mathbb{C}^2$, also written as $\ket{01}$. Not all states of a system with multiple qubits are of this form. For example, the state $\tfrac{1}{\sqrt{2}}\left(\ket{00}+\ket{11}\right)$ is \emph{entangled}: it cannot be written in the form $\ket{\varphi} \otimes \ket{\psi}$. In general, the states of a system with $n$ qubits can range over the unit vectors in $\mathbb{C}^2 \otimes \cdots \otimes \mathbb{C}^2 \simeq \mathbb{C}^{(2^n)}$.

\subsubsection*{Unitaries}

Qubits can undergo operations specified by unitary matrices. These are $2^n$-by-$2^n$ matrices $U$ with complex entries satisfying $U^\dag U = 1$ (and hence also $UU^\dag = 1$), that is, the matrix is invertible and its inverse is its conjugate transpose. On single qubits, standard operations include:
\[
  \Gate{H} = \frac{1}{\sqrt{2}}\begin{bmatrix} 1 & 1 \\ 1 & -1 \end{bmatrix}
  \qquad
  \Gate{X} = \begin{bmatrix} 0 & 1 \\ 1 & 0 \end{bmatrix}
  \qquad
  \Gate{Y} = \begin{bmatrix} 0 & -i \\ i & 0 \end{bmatrix}
  \qquad
  \Gate{Z} = \begin{bmatrix} 1 & 0 \\ 0 & -1 \end{bmatrix}
  \qquad
  \Gate{T} = \begin{bmatrix} 1 & 0 \\ 0 & e^{\nicefrac{i\pi}{4}} \end{bmatrix}
\]
The matrix $\Gate{H}$ is called the \emph{Hadamard} transformation, and satisfies $\Gate{H}\ket{0}=\ket{+}$ and $\Gate{H}\ket{1}=\ket{-}$.

\subsubsection*{Control}

Another way to combine two $n$-qubit systems is by direct sum as in the left-hand side of
\[
  \mathbb{C}^{(2^n)} \oplus \mathbb{C}^{(2^n)} \simeq \mathbb{C}^{(2^{n+1})} \simeq \mathbb{C}^2 \otimes \mathbb{C}^{(2^n)}\text.
\]
Notice how this `sum type' can also be described as a `product type' with a new qubit as in the right-hand side. This new qubit is called the \emph{control} qubit, because it controls which operation is applied to the other, \emph{target}, qubits, as follows.
Any $n$-qubit unitary $U$ can be extended to an $(n+1)$-qubit unitary $CU$, defined as \(I \oplus U\) or given by the block diagonal matrix
\[
CU = \begin{bmatrix}
I & 0 \\ 0 & U
\end{bmatrix}
\]
where $I$ is the identity on $\mathbb{C}^{(2^n)}$.
This controlled-$U$ will apply $U$ only if the control qubit was in the state $\ket{1}$; otherwise
it will do nothing. For example, the controlled-$X$ gate $CX$ is
given by
\[
\Gate{CX} = \left[\begin{smallmatrix}
  1 & 0 & 0 & 0 \\
  0 & 1 & 0 & 0 \\
  0 & 0 & 0 & 1 \\
  0 & 0 & 1 & 0
\end{smallmatrix}\right]
\]
Since the control qubit can be in superposition, both branches are executed in superposition, which is a key difference from classical control flow, where only one branch is executed.
Observe especially that this quantum control is predicated on the \emph{computational basis} $\{\ket{0},\ket{1}\}$ of the control qubit $\mathbb{C}^2$. Many important quantum algorithms need to perform controlled operations on different subspaces than those aligned along the linear spans of $\ket{0}$ and $\ket{1}$.

\subsubsection*{Phases}

Two unitary matrices $U$ and $V$ are indistinguishable in their measurable effect on qubits when they are equal up to a \emph{global phase}: $U=e^{i \theta}V$ for some $\theta \in [0,2\pi)$. In other words, we consider the identity matrix to model the same computation multiplying with the scalar $e^{i\theta}$.
Nevertheless, \emph{local phases}
\[
  \Gate{P}(\theta) = \begin{bmatrix}
    1 & 0 \\ 0 & e^{i\theta}
  \end{bmatrix}
\]
are \emph{not} identified with the identity, and in fact form the heart of many quantum algorithms, despite the fact that they can be regarded as controlled $0$-qubit global phase operations. For example, notice that $\Gate{Z}=\Gate{P}(\pi)$ and $\Gate{T}=\Gate{P}(\nicefrac{\pi}{4})$.

\subsubsection*{Circuits}

Any quantum computation is described by a unitary matrix. These are typically built up from one- and two-qubit unitaries, also called \emph{quantum gates}, that are combined using tensor products and matrix multiplication. It is customary to draw such a composite unitary as a \emph{quantum circuit}: a graphical depiction where horizontal wires represent qubits acted upon by quantum gates while they flow from left to right. A controlled $\Gate{U}$ gate is depicted as on the left below, and in particular the controlled $\Gate{X}$ gate is depicted as on the right below.
  \[
	  \tf{control}
  \]
Gates can similarly be controlled on multiple qubits, such as the $\Gate{CCX}$ matrix, or \emph{Toffoli gate}, in \cref{fig:toffoli}.
See \cref{fig:grover}  \todo[kim]{What is this figure showing?} for another example of a quantum circuit.

\subsubsection*{Eigendecomposition} For a linear map \(U : \mathcal{H} \to \mathcal{H}\), its \emph{eigenvalues} are scalars \(\lambda\) such that the subspace \(\{v : Uv = \lambda v\}\) is non-zero. Such subspaces are referred to as \emph{eigenspaces} and the vectors they contain are called \emph{eigenvectors} of the map \(U\). When \(U\) is unitary, there exists a basis \(\{\ket {\lambda_i}\}\) of eigenvectors for \(\mathcal{H}\), each with eigenvalue \(\lambda_i\), allowing us to obtain the \emph{eigendecomposition}:
\[ U = \sum_i \lambda_i \ketbra{\lambda_i}\]
The maps \(\ketbra{\lambda_i}\) are \emph{projections}, maps \(p\) such that \(p^2 = p\).

Further, \(U\) admits a \emph{diagonalisation} \(U = Q\Lambda Q^\dagger\), where \(\Lambda\) is the diagonal matrix with entries \(\lambda_i\). In this sense, every unitary map can be realised as a sequence of phase rotations applied to its eigenspaces, motivating our representation of them utilising \emph{just a phase}.

\section{Syntax}\label{s:syntax}

We are now positioned to introduce the core language of this paper, a combinator-based language for describing unitary linear transformations, with two basic building blocks: a global phase unitary, and a quantum ``if let'' allowing a restricted form of pattern matching.

In this combinator-style language, terms represent unitary maps on a set number of qubits. The types for unitaries are therefore very simple, and are in direct correspondence with natural numbers: a term \(t\) of type \(\unitary{n}\) will represent a unitary map \(\C^{2^n} \to \C^{2^n}\), and this is the only possible type a unitary can have. This language has no variables, and hence its typing derivations do not require a context and are simply written:
\[\vdash t : \unitary{n}\]
for a term \(t\) and \(n \in \mathbb{N}\).

We introduce a (global) phase operation as the only primitive ``gate''. It takes the form of a \(0\)-qubit unitary (and should not be confused with the 1-qubit phase gate commonly referred to as \(\Gate{S}\)). For each angle\footnote{In practice we must fix a (countable) group of angles in order for this syntax to be finitary.} \(\theta \in \mathbb{R}\) we write:
\begin{mathpar}
  \inferrule{ }{\vdash \Phase{\theta} : \unitary{0}}
\end{mathpar}

To create larger programs, we must be able to compose unitaries together. This can be done sequentially or in parallel. We further require an explicit identity term. These have the following typing rules:
\begin{mathpar}
  \inferrule{\vdash s : \unitary{n} \and \vdash t : \unitary{n}}{\vdash s; t : \unitary{n}}\and
  \inferrule{\vdash s : \unitary{n} \and \vdash t : \unitary{m}}{\vdash s \otimes t : \unitary{(n + m)}} \and
  \inferrule{ }{\vdash \id{n} : \unitary{n}}
\end{mathpar}\todo[malin]{Should id have a subscript when qn doesn't?}

At this point, this syntax can only represent unitaries that perform a global phase, which, as already noted in \cref{s:background}, have no computational effect in a quantum circuit. The ability to perform arbitrary quantum gates will be derived from our quantum ``if let'' construction. This construction allows a unitary to be performed on a subspace specified by a \emph{pattern}. Patterns $p$ are given types of the form \(\pattern{n}{m}\) and correspond to isometries \(i \colon \C^{2^n} \to \C^{2^m}\). The ``if let'' expression then has the following typing rule:
\begin{mathpar}
  \inferrule{\vdash p : \pattern{n}{m} \and \vdash s : \unitary{n}}{\vdash \ifthen{p}{s} : \unitary{m}}
\end{mathpar}
One intuition for the action of the ``if let'' expression is the following: if \(s\) represents the unitary \(U\), and \(p\) represents the isometry \(i\), then the unitary represented by ``\(\ifthen p s\)'' performs $U$ on the range subspace of $i$, and the identity on its orthogonal complement. The ``if let'' construction can be viewed as a restricted form of symmetric pattern matching~\cite{sabryvalironvizzotto:control}.

Patterns are given by a separate but related syntax to terms, for which the rules are given below:
\begin{mathpar}
  \inferrule{ }{\vdash \ket{0} : \pattern{0}{1}}\and
  \inferrule{ }{\vdash \ket{1} : \pattern{0}{1}}\and
  \inferrule{ }{\vdash \ket{+} : \pattern{0}{1}}\and
  \inferrule{ }{\vdash \ket{-} : \pattern{0}{1}}\and
  \inferrule{\vdash s : \unitary{n}}{\vdash s : \pattern{n}{n}}\and
  \inferrule{\vdash p : \pattern{n}{m} \and \vdash q : \pattern{l}{n}}{\vdash p \cdot q : \pattern{l}{m}}\and
  \inferrule{\vdash p : \pattern{n}{m} \and \vdash q : \pattern{n'}{m'}}{\vdash p \otimes q : \pattern{(n + n')}{(m + m')}}\and
\end{mathpar}
The latter three rules observe that all unitary maps are also isometries, and that isometries are closed under composition and tensor products. Each pattern \(\ket x\) represents the isometry \(z \mapsto z\ket x : \C \to \C^2\). We note that the composition for patterns is in function composition order, in contrast to the diagrammatic composition order for terms.

We highlight three important cases of this construction:
\begin{itemize}
\item If \(p : \pattern{0}{n}\), then the term \(\ifthen{p}{\Phase{\theta}}\) represents the unitary which maps \(p(\alpha) + v\) to \(e^{i\theta}p(\alpha) + v\) (where \(\langle p(1), v\rangle = 0\)), which has eigenvalues \(1\) and \(e^{i\theta}\).
\item Let \(s : \unitary{n}\) represent the unitary \(U\) and consider the term:
  \[ \ifthen {\ket 1 \otimes \id{n}} s \]
The unitary represented by this term sends any input of the form \(\ket 1 \otimes v\) to \(\ket 1 \otimes U(v)\), and leaves any input of the form \(\ket 0 \otimes v\) unchanged. This term therefore represents the controlled \(U\) operation.
\item Suppose \(s, t : \unitary{n}\), representing unitaries \(U\) and \(V\). Then the unitary represented by
  \[\ifthen s t\]
    is the unitary \(U \circ V \circ U^\dagger\), which sends \(U(v)\) to \(U(V(v))\).
\end{itemize}

\begin{example}[\(\Gate{X}\) gate]
  Our first example is the term:
  \[ \ifthen {\ket -} {\Phase{\pi}} \]
  By the intuition above, this represents a unitary which maps \(\ket -\) to \(e^{i\pi}\ket - = - \ket -\), and \(\ket +\) (which is orthogonal to \(\ket -\)) to \(\ket +\). Its action on other vectors is determined by linearity; \(\ket 0 = \frac{1}{\sqrt 2}\left(\ket + + \ket -\right)\) is sent to \(\frac{1}{\sqrt 2}\left(\ket + - \ket -\right) = \ket 1\) and similarly \(\ket 1\) is sent to \(\ket 0\), making this the quantum \(\Gate{X}\) gate.
\end{example}

The syntax presented here allows the simple definition of two important meta-level operations: inversion and exponentiation. The ability to obtain the inverse of a quantum program is not uncommon, yet we highlight the simplicity of the definition below.
\begin{definition}[Inversion]\label{def:syn-inverse}
  Given a term \(\vdash t : \unitary{n}\), we define its \emph{inverse} \(\vdash t^\dagger : \unitary{n}\) by structural induction on the syntax:
  \[ \Phase{\theta}^\dagger = \Phase{-\theta} \qquad (\ifthen p s)^\dagger = \ifthen p {s^\dagger} \qquad (s \otimes t)^\dagger = s^\dagger \otimes t^\dagger \qquad (s ; t)^\dagger = t^\dagger;s^\dagger\]
  A simple induction shows the resulting term is well-typed.
\end{definition}

For terms which do not contain the composition constructor (in particular terms which consist of a single ``if let'' statement), we can perform the much more general operation of exponentiation. The ability to define exponentiation exemplifies the utility of our syntax.

\begin{definition}[Exponentials]
  Let \(\vdash t : \unitary{n}\) be a ``composition-free'' term, i.e. a term containing no instances of ``\(;\)''. For a real number \(\alpha\), define the \emph{exponentiation} \(\vdash t^\alpha : \unitary{n}\) by structural induction:
  \[ \Phase{\theta}^\alpha = \Phase{\alpha \theta} \qquad (\ifthen p s)^\alpha = \ifthen p {s^\alpha} \qquad (s \otimes t)^\alpha = s^\alpha \otimes t^\alpha\]
  Similar to above, a simple induction shows exponentiation is well-typed. We note that the case where \(\alpha = -1\) coincides with the inversion operation. We write \(\sqrt t\) for \(t^{\nicefrac{1}{2}}\).
\end{definition}
We highlight the use of the exponentiation operation by applying it to the \(\Gate{X}\) gate.
\begin{example}
  The quantum \(\Gate{V} = \sqrt{\Gate{X}}\) gate, which satisfies \(\Gate V \circ \Gate V = \Gate X\) is given by the matrix:
  \[ \Gate{V} = \frac{1}{2}
    \begin{bmatrix}
      1 + i & 1 - i\\
      1 - i & 1 + i
    \end{bmatrix}
  \]
  Deriving this matrix from only the definition of \(\Gate{X}\) (or indeed obtaining a quantum circuit for this gate) is non-trivial, yet its definition in our language is immediate from the definition of the \(\Gate{X}\) and exponentiation:
  \[ \Gate{V} = \Gate{X}^{0.5} = \ifthen {\ket -} {\Phase{\nicefrac{\pi}{2}}}\]
  This term represents the unitary given by the matrix above.
\end{example}

We are now able to recover the definitions of many common quantum gates, which are given in \cref{fig:defs}. We emphasise that these gates are approximately universal, and hence all unitaries (of dimension \(2^n\)) can be represented using this language.
\begin{figure}
  \centering
  \begin{equationarray*}{rrlcrrll}
    \Gate{Z} &\coloneq& \ifthen{\ket{1}}{\Phase{\pi}} & \qquad & \Gate{S} &\coloneq & \sqrt{\Gate{Z}} &= \ifthen{\ket{1}}{\Phase{\nicefrac{\pi}{2}}}\\
    \Gate{X} &\coloneq& \ifthen{\ket{-}}{\Phase{\pi}} & \qquad & \Gate{V} &\coloneq & \sqrt{\Gate{X}} &= \ifthen{\ket{-}}{\Phase{\nicefrac{\pi}{2}}}\\
    \Gate{Y} &\coloneq& \ifthen{\ket{-}}{\Phase{\pi}} & \qquad & \Gate{T} &\coloneq & \sqrt{\Gate{S}} &= \ifthen{\ket{1}}{\Phase{\nicefrac{\pi}{4}}}\\
    \Gate{CZ} &\coloneq& \ifthen{\ket{1} \otimes \ket{1}}{\Phase{\pi}} & \qquad\qquad\qquad & \Gate{CX} &\coloneq & \multicolumn{2}{l}{\ifthen{\ket{1} \otimes \ket{-}}{\Phase{\pi}}}\\
    \Gate{H} &\coloneq&\multicolumn{6}{l}{\ifthen{\Gate{Y}^{\nicefrac{1}{4}} \cdot \ket{1}}{\Phase{\pi}}}\\
    &=&\multicolumn{6}{l}{\ifthen {(\ifthen {S \cdot \ket{-}} {\Phase{\nicefrac{\pi}{4}}}) \cdot \ket 1} {\Phase{\pi}}}
  \end{equationarray*}
  \caption{Definitions of common quantum gates in the combinator syntax. A fully universal gate set can be generated with \emph{just a phase}.}
  \label{fig:defs}
\end{figure}

\begin{example}
  The 5-qubit GHZ state (\(= \nicefrac{1}{\sqrt{2}}(\ket {00000} + \ket {11111})\)) can be prepared from the zero state as follows, where \(\Gate{H}\) and \(\Gate{X}\) are defined in \cref{fig:defs}:
  \[\Gate{H} \otimes \id{4}; \ifthen{\ket{1}\otimes \id{4}}{\Gate{X} \otimes \Gate{X} \otimes \Gate{X} \otimes \Gate{X}}\]
  This highlights one of the flaws of the combinator style syntax, to apply a Hadamard gate to the first qubit, we must explicitly tensor it with the remaining qubits. Further, the pattern \(\ket{1} \otimes \id{4}\) causes the body to be controlled by the first qubit, but also must explicitly tensored with the remaining qubits.

  The combinator syntax also enforces a total ordering on the qubits, which may or may not be desirable. If we had decided to create the GHZ state with successive \(\Gate{CX}\) gates, we would find that there is no trivial way to apply such a 2-qubit gate to the first and third qubits.
\end{example}

\begin{remark}
  Instead of introducing \(\ket +\) and \(\ket -\) as primitive patterns, we could have instead introduced the Hadamard gate \(\Gate{H}\) as a primitive unitary, defining \(\ket + = \Gate{H} \cdot \ket 0\) and \(\ket - = \Gate{H} \cdot \ket 1\). Presenting the language in this way may be beneficial for contexts where the Hadamard gate is an important or easy operation, as the definition of the Hadamard gate in \cref{fig:defs} is more involved. The set up taken above, however, allows an arguably more minimal presentation by having the phase rotation be the only primitive unitary, and enables the exponentiation operation; it is unclear what the square root of a primitive Hadamard operation should be, yet its definition is immediate when presented as a single ``if let'' statement.
\end{remark}

We end this section with one further illustrative example.

\begin{example}
  Let \(\Gate{XC} = \ifthen {\ket{-} \otimes \ket{1}} {\Phase{\pi}}\), a controlled not operation where the second qubit is the control qubit and the first is the target. We can then define concisely the swap gate as:
  \[\Gate{Swap} \coloneq \ifthen {\Gate{CX}} {\Gate{XC}} \]
  The unitary pattern \(\Gate{CX}\) acts on the body of the ``if let'' by conjugation, allowing us to recover the more usual definition \(\Gate{CX}; \Gate{XC}; \Gate{CX}\) of the swap. By a simple manipulation, we can also recover another definition of the swap gate:
  \[\ifthen {\Gate{CX} \cdot (\ket{-} \otimes \ket{1})} {\Phase{\pi}}\]
  This presents an alternative way of understanding the action of this gate. By observing that \(\Gate{CX} \ket{-1} = \frac{1}{\sqrt 2}(\ket{01} - \ket{10})\), we notice that the term above multiplies this component of the input by \(-1\), permuting the \(\ket{01}\) and \(\ket{10}\) components.
\end{example}


\section{Algorithms}\label{s:examples}

Many well-known quantum algorithms can be expressed naturally in terms of conditional phases.
Below, we give implementations of several of the most prominent algorithms and prove their correctness.
We begin with Grover's search algorithm, whose \emph{oracle} and \emph{diffusion} operators are precisely conditional phases on the target subspace and the uniform subspace, respectively.
Next, quantum simulation algorithms are often based on finite-difference time-domain Hamiltonian evolution. In these algorithms, we take a number of time-steps, in each of which the state advances by applying conditional phases corresponding to the eigenspaces of the Hamiltonian.
After that, we show how the \emph{quantum Fourier transform} (QFT) is naturally expressed as a sequence of conditional phases.
Finally, we present implementations of \emph{quantum signal processing} and the \emph{quantum eigenvalue transform}, which implement functions of black-box unitaries by applying conditional phases.

\subsection{Grover's Algorithm}

The celebrated quantum database search algorithm of Grover~\cite{grover} has a simple formulation in this language.
Assume we are searching a database $X = \{0, 1, \dots, N - 1\}$ of size $N = 2^{n}$ for elements $x$ on which a function $f : X \rightarrow \{0, 1\}$ takes the value $1$. We may assume that $f(x) = 1$ at only a single element $0 \le \omega < N$, as multiple marked elements can be handled by a sequence of such oracles.
Recall that the algorithm consists of three steps~\cite{nielsenchuang}:

\begin{enumerate}
  \item Preparation of a uniform superposition $\ket{s} = \frac{1}{\sqrt{|X|}} \sum_{x \in X} \ket{x}$

  \item Repeat $\lceil \pi \sqrt{N} / 4 \rceil$ times:
        \begin{enumerate}
          \item Apply the \emph{oracle operator} $U_{f} = \sum_{x \in X} (-1)^{f(x)} \ketbra{x} = 1 - 2 \ketbra{\omega}$.
                \item Apply the \emph{diffusion operator} $U_{s} = 2 \ketbra{s} - I$.
        \end{enumerate}
  \item Measure the quantum state.
\end{enumerate}
With high probability, the measurement result is $\ket{\omega}$.

The oracle operator $U_{f}$ can be implemented by the following program:
\[ \ifthen {\ket{\omega_{0}} \otimes \cdots \otimes \ket{\omega_{n - 1}}} {\Phase{\pi}}\]
  where $\omega_{j}$ is the $j$\textsuperscript{th} bit in the binary expansion of $\omega$. Similarly, the diffusion operator \(U_s\) is given by the program
$
    \Phase{\pi} \otimes \id n ; \ifthen{\ket + \otimes \cdots \otimes \ket +} {\Phase \pi}
$.
Compare these pieces of syntax with the circuits in \Cref{fig:grover}.
Grover's algorithm is then simply an interleaving of these operators.

\subsection{Quantum Simulation}

We will now implement the Trotter simulation algorithm~\cite{lloyd1996universal} by applying a sequence of conditional phases.
This is perhaps not how computer scientists naturally think about quantum simulation, but physicist practitioners naturally think in terms of dynamics as composing programs `spectrally' and applying differential phases.
This is completely independent from the mechanism of the if-let construct, which we focus on.
Once the Hamiltonian is decomposed into projections, we construct a program that realises conditional phases on the subspace picked out by each projection.

Let $H \in \mathcal{B}(\mathcal{H})$ be a positive self-adjoint operator on a $2^{n}$-dimensional Hilbert space.
Then there is a (possibly empty) decomposition $H = \sum_{i = 1}^{K} \lambda_{i} \Pi_{i}$, where each $\Pi_{i}$ is a projection. Write $\tilde{H} = ((\lambda_{i}, \Pi_{i}))_{i = 1}^{K}$ for the $K$-tuple of spectral components consisting of ordered pairs of eigenvalues and projectors. Now, the decomposition above is not unique, so neither is $\tilde{H}$. In fact, it need not be a \emph{spectral} decomposition in the usual sense. We have two choices: impose uniqueness by requiring that the $\Pi_{i}$ be orthonormal, that the range of $\Pi_{i}$ coincides with the kernel of $H - \lambda_{i} I$, that $\lambda_{1} \le \dots \le \lambda_{K}$; or, take $\tilde{H}$ to be primary and $H$ to be derived. Let us adopt the latter approach.

Now suppose we are given a set of patterns $\{p_{i}\}_{1 \le i \le K}$.
By possibly padding out the range of the projection $\Pi_i$ to power of two dimension, we may assume without loss of generality that $\vdash p_i : \pattern{m_i}{n}$, that is \(\Pi_i = \iota_i \iota_i^\dagger\) where \(\iota\) is the isometry represented by \(p_i\).

Then we define a program $U_{\tilde{H}}(t)$ inductively, by
\begin{align*}
  U_{\emptyset}(t) &\coloneq \id{n} \\
  U_{((\lambda_{i}, \Pi_{i}))_{i = 1}^{k + 1}}(t) &\coloneq \ifthen{p_{k + 1}}{\Phase{-\lambda_{i + 1} t} \otimes \id{m_i}}; U_{((\lambda_{i}, \Pi_{i}))_{i = 1}^{k}} (t).
\end{align*}
If the $\Pi_{i}$ are mutually orthogonal, then $U_{\tilde{H}}(t)$ represents the unitary \(e^{- i H t}\). This does not hold in general, for non-commuting projectors, but \(U_{\tilde H}(t/N)^N\) represents a unitary which converges to \(e^{-iHt}\) as \(N \to \infty\) by the well-known \emph{Trotterisation} formula~\cite{trotter1959product}.

Let us now consider a simple concrete example of interacting spin-1/2 magnetic dipoles in an external magnetic field $\boldsymbol{B}$~\cite{sakurai2020modern}. The Hamiltonian is
\begin{equation*}
    H = H_{\mathrm{free}} + H_{\mathrm{int}} \qquad H_{\mathrm{free}} = - \frac{\hbar}{2} \sum_{j = 1}^{2} \gamma_{j} \boldsymbol{\sigma}_{i} \cdot \boldsymbol{B} \qquad H_{\mathrm{int}} = \frac{\mu_{0} \gamma_{1} \gamma_{2} \hbar^{2}}{16 \pi r^{3}} \left( \boldsymbol{\sigma}_{1}\cdot\boldsymbol{\sigma}_{2} - 3 (\hat{\boldsymbol{r}} \cdot \boldsymbol{\sigma}_{1})(\hat{\boldsymbol{r}} \cdot \boldsymbol{\sigma}_{2}) \right)
\end{equation*}
where $\boldsymbol{\sigma}_{j}$ is the vector of Pauli operators on the $j^{\mathrm{th}}$ spin, $\gamma_{j}$ is the gyromagnetic moment of the $j^{\mathrm{th}}$ spin, $\boldsymbol{r}$ is the displacement between the two spins, $r = \lVert \boldsymbol{r} \rVert$, and $\hat{\boldsymbol{r}} = \boldsymbol{r} / r$. We can simplify this expression to
$
  H
  = \omega_{1} \sigma^{z} \otimes I + \omega_{2} I \otimes \sigma^{z} + J \left(\sigma^{x} \otimes \sigma^{x} + \sigma^{y} \otimes \sigma^{y} - 2 \sigma^{z} \otimes \sigma^{z}\right)$.
Not having imposed uniqueness or even orthogonality on the decomposition $\tilde{H}$, we can work term-by-term. Writing
\[\Pi_{+z} = \ketbra{0} \quad \Pi_{-z} = \ketbra{1} \quad \Pi_{+x} = \ketbra{+} \quad \Pi_{-x} = \ketbra{-} \quad \Pi_{+y} = \ketbra{i} \quad \Pi_{-y} = \ketbra{-i}\text, \]
we obtain
\begin{alignat*}{5}
  H ={} &\omega_{1} (\Pi_{+z} - \Pi_{-z}) &&\otimes (\Pi_{+z} + \Pi_{-z}) &&{}+{} &&\omega_{2} (\Pi_{+z} + \Pi_{-z}) &&\otimes (\Pi_{+z} - \Pi_{-z}) \\
   {}+{} &J (\Pi_{+x} - \Pi_{-x}) &&\otimes (\Pi_{+x} - \Pi_{-x}) &&   {}+{} &&J (\Pi_{+y} - \Pi_{-y}) &&\otimes (\Pi_{+y} - \Pi_{-y}) \\
   {}-{} &2J (\Pi_{+z} - \Pi_{-z}) &&\otimes (\Pi_{+z} - \Pi_{-z})\text.
\end{alignat*}
Finally, we can distribute the tensor products over the sums, obtaining the decomposition $\tilde{H}$ into projectors. It remains to find the corresponding \textit{patterns}. Recall that for each $\Pi_{i}$ we must find a corresponding pattern $p_{i}$. The following patterns $p_{\pm x}, p_{\pm y}, p_{\pm z}$ suffice:
\begin{equation*}
  p_{+z} = \ket{0} \quad p_{-z} = \ket{1} \quad p_{+x} = \ket + \quad p_{-x} = \ket - \quad p_{+y} = \Gate{S} \cdot \ket + \quad p_{-y} = \Gate{S} \cdot \ket -
\end{equation*}
These patterns allow us to compute a term representing \(e^{-iHt}\), as required.

\subsection{Quantum Fourier Transform}

The quantum Fourier transform over $\mathbb{Z} / 2^{n} \mathbb{Z}$ is the $n$-qubit unitary operator \(F_{2^n}\) such that $\bra{y} F_{2^{n}} \ket{x} = 2^{- n / 2} \omega^{xy}$, for $0 \le x, y < 2^{n}$, where $\omega$ is a primitive $2^{n}$\textsuperscript{th} root of unity~\cite{nielsenchuang}. If $x_{1}x_{2}\dots x_{n}$ is the binary expansion of $0 \le x < 2^{n}$, then
\begin{equation*}
    F_{2^{n}} \ket{x_{1}} \otimes \cdots \otimes \ket{x_{n}} = 2^{- n / 2} \bigotimes_{j = 1}^{n} \left(\ket{0} + e^{2 \pi i x 2^{j - 1 - n}}\ket{1}\right).
\end{equation*}
The textbook algorithm for implementing this operation consists of a sequence of controlled phases. Define the \emph{dyadic rational phase gate}
$
  \Gate{R}_n \coloneq \ifthen{\ket{1}}{\Phase{\nicefrac{2\pi}{2^n}}}
$.
To define \(\vdash \QFT{n} : \unitary{n}\), the term representing the \(n\) qubit quantum Fourier transform, we can leverage that chains of controlled gates with the same control qubit can be replaced by a single control block. Using this we obtain the following recursive definition:
\begin{align*}
  \QFT{0} &\coloneq \id{0}\\
  \QFT{n+1} &\coloneq \Gate{H} \otimes \id{n}; \ifthen {\ket 1 \otimes \id{n}} {\Gate{R}_2 \otimes \cdots \otimes \Gate{R}_{n+1}} ; \id{1} \otimes \QFT{n}
\end{align*}
In order recover the original quantum Fourier transform \(F_{2^n}\), the order of the output qubits must be inverted, which could be done with the addition of the appropriate \(\Gate{Swap}\) gates.

\subsection{Quantum Signal Processing}\label{sec:qsp}

\emph{Quantum signal processing} (QSP) is a procedure that transforms a parametric single-qubit rotation to modify its sensitivity to the parameter~\cite{low2016methodology,low2017optimal}. Following~\cite{martyn2021}, we write
\begin{equation*}
    W(a) = e^{i \theta X} = \begin{bmatrix}
      a & i \sqrt{1 - a^{2}} \\
      i \sqrt{1 - a^{2}} & a
      \end{bmatrix}
\end{equation*}
for the parametric unitary to which the user has black-box access (a rotation about the $X$-axis by $\theta = -2 \cos^{-1} a$). QSP makes a number of calls to $W(a)$, as well as a number of phases, to implement a modified unitary,
\begin{equation*}
    \tilde{W}(a) = \begin{bmatrix}
      P(a) & i Q(a) \sqrt{1 - a^{2}} \\
      i Q^{*}(a) \sqrt{1 - a^{2}} & P^{*}(a)
      \end{bmatrix}
\end{equation*}
This is useful in, among other applications, quantum control of large ensembles of quantum systems subject to inhomogeneous coherent control~\cite{brown2004arbitrarily}.
We regard $a$ as the amplitude of a control field, which due to spatial gradients couples more strongly to some elements of an ensemble than to others.
When the amplitude of the control field varies over an interval $I = [a_{0}, a_{1}]$ within a sample, we may choose $P, Q$ such that $\tilde{W}(I)$ is approximately constant.
Dually, in sensing applications, we may wish to \emph{enhance} sensitivity to the parameter $a$.

There turns out to be a construction of a composite control sequence that implements any $\tilde{W}$ for any polynomials $P$ (resp. $Q$) of degree $d$ (resp. $d - 1$) and parity $d \bmod 2$ (resp. $(d - 1)\bmod 2$), subject to the constraints imposed by unitarity of $\tilde{W}$, using $d$ calls to $W$.
All such $\tilde{W}$ can be realised as a product,
\begin{equation*}
    \tilde{W}(a) = W_{\vec{\phi}}(a) \coloneqq S(\phi_{0}) W(a) S(\phi_{1}) \cdots S(\phi_{d - 1}) W(a) S(\phi_{d}),
\end{equation*}
for some tuple $\vec{\phi} \in [0, 2 \pi)^{d + 1}$, where $S(\phi) = e^{i \phi_{0} Z}$.

As a single-qubit protocol, the translation to a program is easy. Define
\begin{align*}
  \Gate{R}_{z}(\alpha) &\coloneq \Phase{- \alpha / 2} \otimes \id{1}; \ifthen{\ket{0}}{\Phase{\alpha}} \\
  \Gate{R}_{x}(\alpha) &\coloneq \Phase{- \alpha / 2} \otimes \id{1}; \ifthen{\ket{+}}{\Phase{\alpha}}.
\end{align*}
We then define a \emph{quantum signal processing} program $\mathsf{QSP}(a; \vec{\phi})$ inductively by
\begin{align*}
  \mathsf{QSP}(a; (\phi_{0})) &\coloneq \Gate{R}_{z}(2 \phi_{0}) \\
  \mathsf{QSP}(a; (\phi_{0}, \dots, \phi_{k}, \phi_{k + 1})) &\coloneq \mathsf{QSP}(a; (\phi_{0}, \dots, \phi_{k})); \nonumber \\
  &\hspace{2em} \Gate{R}_{x}(-2 \cos^{-1}(a)); \Gate{R}_{z}(2 \phi_{k + 1})
\end{align*}
(recalling that $\vec{\phi}$ is by hypothesis of length $\ge 1$).

\subsection{Quantum Eigenvalue Transform}

QSP generalises rather dramatically to the \emph{quantum eigenvalue transform} (QET)~\cite{martyn2021,low2024quantum}.
Like QSP, the QET applies a polynomial transform to an operator, but in this instance an operator on a larger finite-dimensional quantum system, not merely a qubit.
Still following the presentation and conventions of~\cite{martyn2021}, we proceed to an implementation of the QET.

We are given a Hamiltonian $H$ acting on a finite-dimensional Hilbert space $\mathcal{H} = \mathbb{C}^{2^m}$, and unitary $U$ acting on $\mathbb{C}^{2^n}$ such that $U$ contains a copy of $H$ in a block determined by a projector \(\Pi\), which is also given to us. Then the QET produces a unitary $\tilde U$, such that (using the notation of \cite{martyn2021}):
\begin{equation*}
  U = \kbordermatrix{&\Pi & \\
          \ \Pi\!\!\!\!& \mathcal{H}     & \cdot \\
          & \cdot & \cdot
        }\qquad
\tilde U = \kbordermatrix{&\Pi & \\
          \ \Pi\!\!\!\!& P(\mathcal{H})     & \cdot \\
          & \cdot & \cdot
        }
\end{equation*}
where $P(x)$ is some degree \(d\) polynomial function.

As an example, in the case where \(\Pi = \ketbra 0\), we may take:
\begin{equation*}
    U = \Gate{Z} \otimes H + \Gate{X} \otimes \sqrt{I - H^{2}} = \sum_{\lambda \in \sigma(H)} R(\lambda) \otimes \ketbra{\lambda}
\end{equation*}
where $\Gate{R}(\lambda)$ is the operator \(\lambda \Gate{Z} + \sqrt{1 - \lambda^2}\Gate{X}\) qubit operator and $\lambda$ ranges over the eigenvalues of $H$ such that \(H = \sum_\lambda \lambda \ketbra{\lambda}\).

The crux of QET is that there is a tuple \(\vec{\phi} = (\phi_1, \dots, \phi_d)\) such that
\begin{equation*}
    \tilde{U} \coloneq \begin{cases}
      \prod_{k = 1}^{d / 2} \Pi_{\phi_{2 k - 1}} U^{\dagger} \Pi_{\phi_{2k}} U & d~\text{even} \\
      \Pi_{\phi_{1}} U \prod_{k = 1}^{(d - 1) / 2} \Pi_{\phi_{2 k}} U^{\dagger} \Pi_{\phi_{2k + 1}} U & d~\text{odd}
    \end{cases}
\end{equation*}
where \(\Pi_{\phi}\) is the ``projector controlled phase shift'' and is defined to be:
\[ \Pi_{\phi} = e^{i\phi(2\Pi - I)}\]
If we assume that the projector \(\Pi\) is provided to us via a pattern \(\vdash p_\Pi : \pattern{m} {n}\) (such that \(p_\Pi\) represents an isometry \(\iota\) such that \(\Pi = \iota\iota^\dagger\)), we can implement this projector controlled phase shift as a single ``if let'':
\[ \Gate{R}_{p_\Pi}(\phi) \coloneq \Phase{-\phi} \otimes \id{n}; \ifthen {p_\Pi} {\Phase{2\phi} \otimes \id{m}} \]
If we further assume we are given a term \(\vdash s_{U} : \unitary{n}\) which represents the unitary \(U\), then we define \(\mathsf{QET}(s_U, p_\Pi, \vec{\phi})\) inductively by:
\begin{align*}
  \mathsf{QET}(s_{U}, p_\Pi; ()) &\coloneq \id{n} \\
  \mathsf{QET}(s_{U}, p_\Pi; (\phi_{1})) &\coloneq s_{U}; \mathsf{R}_{p_\Pi}(\phi_{1}) \\
  \mathsf{QET}(s_{U}, p_\Pi; (\phi_{1}, \dots, \phi_{k - 1}, \phi_{k})) &\coloneq \mathsf{QET}(s_{U}; (\phi_{0}, \dots, \phi_{k - 2})); \\
  &\hspace{2em} s_{U}; \mathsf{R}_{p_\Pi}(\phi_{k}); s_{U}^\dagger; \mathsf{R}_{p_\Pi}(\phi_{k - 1})
\end{align*}
Note that there are two base cases: one each for $d$ even and $d$ odd.




\section{Compilation}\label{s:compiler}

In this section we will describe a conversion from the combinator language in \cref{s:syntax} to a more traditional quantum circuit representation of unitary programs.
Our strategy for the compilation will be to provide a normalisation algorithm, putting terms in a canonical ``circuit-like'' form. From this form, a circuit representation can be trivially extracted. The circuit representation we compile to here will consist of a fixed number of qubits (which can be referred to by index), a Hadamard gate, and arbitrary multi-controlled phase gates (both zero- and one- controlled with arbitrary angle). The problem of further compiling multi-qubit gates to a finite gate set is well studied~\cite{comp-control-gates,Khattar2025riseofconditionally} and outside the scope of this work.

We begin by defining this canonical form.
\begin{definition}[Normal term]
  Let a \emph{simple} pattern be one in the form:
  \[q_1 \otimes \cdots \otimes q_{n}\]
  where each \(q_i\) is either \(\id{1}\), \(\ket{0}\), \(\ket{1}\), \(\ket{+}\), or \(\ket{-}\). Define a \emph{normal clause} to be a term of the form:
  \[\ifthen q {\Phase{\theta} \otimes \id{k}}\]
  where \(q\) is a simple pattern, and let a \emph{normal term} be a sequential composition of such clauses.
\end{definition}

Normal terms therefore take the following form:
\[
  \ifthen {q_{11} \otimes \cdots \otimes q_{1n}} {\Phase{\theta_1} \otimes \id{j_1}} ; \cdots ; \ifthen {q_{k1} \otimes \cdots \otimes q_{kn}} {\Phase{\theta_k} \otimes{\id{j_k}}}
\]
where each \(q_{ij}\) is either \(\id{1}\), \(\ket{0}\), \(\ket{1}\), \(\ket{+}\), or \(\ket{-}\). A circuit can then be directly extracted from such a form, with each ``if let'' statement being replaced by a multi-controlled phase, conjugated by Hadamard gates on qubits that are plus-controlled or minus-controlled.

To massage terms into such a form, the following cases must be tackled:
\begin{itemize}
\item Tensor products of gates must be reduced to sequential compositions of ``whiskered'' gates, where whiskering refers to taking a tensor product with the identity gate. As an example \(\Gate{Z} \otimes \Gate{Z}\) could be reduced to \(\Gate{Z} \otimes \id{1} ; \id{1} \otimes \Gate{Z}\). The choice to avoid tensor products in our final representation is motivated by tensor products not being stable under control---the control of \(\Gate{Z} \otimes \Gate{Z}\) is not the tensor product of two \(\Gate{CZ}\) gates. In contrast, whiskering and sequential composition \emph{are} stable under control.
\item ``If let'' statements of sequential compositions should be reduced to compositions of ``if let'' statements.
\item Patterns which are unitaries should be reduced to conjugation, for example the swap gate
  \[ \ifthen {(\ifthen {\ket{-1}} {\Phase{\pi}})} {\ifthen {\ket{1-}} {\Phase{\pi}}}\]
  can be reduced to its more usual representation:
  \[ \ifthen {\ket{-1}} {\Phase{\pi}} ; \ifthen {\ket{1-}} {\Phase{\pi}} ; \ifthen {\ket{-1}} {\Phase{\pi}}\]
\item Nested ``if let'' statements can be combined into a single ``if let'' statement with a composed pattern. For example, below the left term can be reduced to the right:
  \[ \ifthen {p} {\ifthen {q} s} \rightsquigarrow \ifthen {p \cdot q} {s} \]
\end{itemize}\todo[kim]{Kim doesn't like the typesetting}

Patterns are not necessarily in the form of a single unitary or a simple pattern, but the evaluation reduces any arbitrary pattern \(p\) to the form \(s \cdot q\), where \(s\) is a unitary term and \(q\) is a simple pattern. This allows normalisation to proceed by conjugating with the unitary \(s\).

We evaluate terms in an \emph{evaluation context} \((q, l, r) : k \to n\), where \(\vdash q : \pattern{m}{n}\) is a simple pattern, and \(l, r \in \mathbb{N}\) such that \(l + k + r = m\).
We motivate our evaluation context as follows: the simple pattern \(q\) allows us to track that the term we are currently evaluating should be applied to a certain subspace, and allows us to evaluate under an ``if let'' statement. The numbers \(l\) and \(r\) record how many identities the term being evaluated has been whiskered with on either side.
\begin{itemize}
\item For a context \((q, l, r) : k \to n\) and unitary term \(s : \unitary{k}\), its evaluation \(\evalu {q,l,r} s\) is a list \([c_1, \dots, c_N]\) where each \(c_i : \unitary{n}\) is a normal clause. For such a list, its \emph{inverse} \([c_1, \dots, c_N]^\dagger = [c_N^\dagger, \dots, c_1^\dagger]\), where \(c_i^\dagger\) is the result of the inversion meta operation defined in \cref{s:syntax}. We write \(c_1 \doubleplus c_2\) for the concatenation of lists \(c_1\) and \(c_2\).
\item For a context \((q,l,r) : k \to n\) and a pattern \(p : \pattern{j}{k}\), its evaluation \(\evalp {q,l,r} p\) is a tuple \(([c_1, \dots, c_N], q')\) where \(q' : l + j + r \to n\) is a simple pattern, and each \(c_i : \unitary{n}\) is a normal clause.
\end{itemize}
For intuition, if \(\evalu {q,l,r} s = [c_1,\dots, c_N]\), then \(c_1 ; \dots; c_N\) should be equivalent to
\[\ifthen {q} {\id{l} \otimes s \otimes \id{r}}\]
  and if \(\evalp {q,l,r} p = ([c_1, \dots, c_N], q')\) then \((c_1 ; \dots; c_N) \cdot q'\) should be equivalent to
  \[q \cdot (\id{l} \otimes p \otimes \id{r})\]
  We make this intuition precise in \cref{s:semantics}.
\begin{definition}[Substitution]
  Given a simple pattern \(q : \pattern{m}{n}\) let \(q[\ket x / i]\) be the result of substituting the \(i\)\textsuperscript{th} \(\id{}\) in \(q\) with \(\ket x\). For example, if \(q = \ket 0 \otimes \id{} \otimes \id{}\) then \(q[\ket - / 0] = \ket 0 \otimes \ket - \otimes \id{}\).
\end{definition}

The evaluation functions are now defined by induction using the rules in \cref{fig:eval}. The normal term of \(s\) can then be extracted by composing the final list of clauses (taking the normal term to be \(\id{n}\) in the empty case, where \(n\) is the number of qubits of the input term).

\begin{figure}\centering

  \begin{mathpar}
    \inferrule{ }{\evalu {q,l,r} {\Phase{\theta}} = [\ifthen q {\Phase{\theta} \otimes \id{l+r}}]}\and
      \inferrule{\evalu {q,l,r} s = c \and \evalu {q,l,r} t = c'}{\evalu {q,l,r} {s ; t} = c \doubleplus c'}\and
      \inferrule{\vdash s : \unitary{k_1} \and \vdash t : \unitary{k_2} \and \evalu {q,l,r+k_2} s = c \and \evalu {q,l+k_1,r} t = c'}{\evalu {q,l,r} {s \otimes t} = c \doubleplus c'}\and
      \inferrule{ }{\evalu {q,l,r} {\id{k}} = []}\and
      \inferrule{\evalp {q,l,r} p = (c, q') \and \evalu {q', l, r} s = c'}{\evalu {q,l,r} {\ifthen p s} = c^\dagger \doubleplus c' \doubleplus c}\and
        \inferrule{x \in \{0, 1, +, -\}}{\evalp {q,l,r} {\ket{x}} = ([], q[\ket{x}/l])}\and
        \inferrule{\evalu {q, l, r} s = c} {\evalp {q,l, r} s = (c, q)}\and
        \inferrule{\evalp {q,l,r} {p_1} = (c, q') \and \evalp {q', l, r} {p_2} = (c', q'')}{\evalp {q,l,r} {p_1 \cdot p_2} = (c' \doubleplus c, q'')}\and
        \inferrule{\vdash p_1 : \pattern {j_1} {k_1} \and \vdash p_2 : \pattern {j_2} {k_2} \and \evalp {q,l,r + k_2} {p_1} = (c, q') \and \evalp {q', l + j_1,r} {p_2} = (c', q'')}{\evalp {q, l, r} {p_1 \otimes p_2} = (c' \doubleplus c, q'')}
      \end{mathpar}
      \caption{Rules for defining the evaluation functions. In each case we assume that \((q, l, r) : k \to n\).}
      \label{fig:eval}
    \end{figure}

We end the section by proving type soundness, which claims that our evaluation function preserves typing judgements.
\begin{theorem}[Type soundness]\label{thm:type-soundness}
  Let \((q, l, r) : k \to n\) be an evaluation context. The following rules are derivable:
  \begin{mathpar}
    \inferrule{\vdash s : \unitary k \and \evalu {q,l, r} s = [c_1, \dots, c_N]} {\vdash c_1; \dots; c_N : \unitary n}\and
    \inferrule{\vdash p : \pattern {m} k \and \evalp {q,l,r} p = ([c_1, \dots, c_N], q')} {\vdash c_1; \dots; c_N : \unitary n \and (q', l, r) : m \to n}
  \end{mathpar}
\end{theorem}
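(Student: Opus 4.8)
The plan is to prove both derivability statements simultaneously by mutual induction following the recursive structure of the evaluation functions in \cref{fig:eval}; equivalently, by induction on the derivations of $\vdash s : \unitary{k}$ and $\vdash p : \pattern{m}{k}$. The two statements are genuinely interdependent: the term clause for $\ifthen{p}{s}$ invokes $\evalp{}{}$, and the pattern clause treating a unitary-as-pattern invokes $\evalu{}{}$, so neither can be established in isolation. Throughout I fix the convention that a context $(q,l,r) : k \to n$ carries a simple pattern $q : \pattern{m_q}{n}$ with $l + k + r = m_q$. The crux of almost every case is to check that each recursive call is made in a \emph{valid} context of the correct middle-arity, after which well-typedness of the emitted clauses is immediate from the inductive hypotheses.

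For the term cases this is mostly routine. In the base case $s = \Phase{\theta}$ we have $k = 0$, hence $l + r = m_q$, so $q : \pattern{l+r}{n}$ and the single clause $\ifthen{q}{\Phase{\theta} \otimes \id{l+r}}$ is well-typed by the ``if let'' rule, since its body has type $\unitary{l+r}$. The case $s = \id{k}$ emits the empty list (composing to $\id{n}$), and $s = s_1 ; s_2$ concatenates two lists that are well-typed by induction. For $s = s_1 \otimes s_2$ with $s_i : \unitary{k_i}$ I would check that the recursive contexts $(q, l, r + k_2) : k_1 \to n$ and $(q, l + k_1, r) : k_2 \to n$ both satisfy the well-formedness equation, each reducing to $l + k_1 + k_2 + r = m_q$, which is exactly the hypothesis $l + k + r = m_q$, and then conclude by induction. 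The one genuinely structural case is $s = \ifthen{p}{s'}$: here I apply the pattern hypothesis to $\evalp{q,l,r}{p} = (c, q')$ to learn both that $c$ is well-typed and that $(q', l, r)$ is again a valid context of the right middle-arity, feed this context into the term hypothesis for $s'$ to type $c'$, and finally appeal to \cref{def:syn-inverse} (inversion preserves typing, and reverses the list) to type $c^\dagger$, so that the concatenation $c^\dagger \doubleplus c' \doubleplus c$ is well-typed.

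For the pattern cases the content is really the bookkeeping of the offsets $l, r$ together with the substitution $q[\ket{x}/l]$. In the case $p = \ket{x}$ we have $k = 1$ and the emitted list is empty; replacing the identity slot at position $l$ of $q$ by $\ket{x}$ yields a simple pattern $q[\ket{x}/l] : \pattern{m_q - 1}{n}$ with $l + 0 + r = m_q - 1$, so $(q[\ket{x}/l], l, r) : 0 \to n$ is valid. The unitary-as-pattern case delegates directly to the term hypothesis and leaves the context unchanged. For $p = p_1 \cdot p_2$ I would thread the output context $q'$ of the first call into the second, observing that the function-composition order of patterns matches the reversed concatenation $c' \doubleplus c$ of clauses, and chain the two intermediate middle-arities through the hypotheses. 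The tensor case $p = p_1 \otimes p_2$ is the analogue of the term tensor case and is where I expect to spend the most care: one must verify that $(q, l, r + k_2) : j_1 \to n$ is valid for the first call, that its output context $(q', l, r + k_2)$, now of middle-arity $j_1$, carries enough identity slots for the shifted second call $(q', l + j_1, r) : j_2 \to n$ to be valid, and that the final context $(q'', l, r)$ has middle-arity $j_1 + j_2$, as the enclosing rule requires.

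The main obstacle is therefore not any single deep step but the disciplined verification, across the two tensor cases and the pattern-composition case, that the arithmetic of the whiskering counts $l$ and $r$ keeps every recursive call well-formed and delivers an output context of exactly the arity the enclosing rule expects. Setting up the mutual induction so that these offset computations compose cleanly, ideally by tracking the single invariant that $l + (\text{middle arity}) + r$ equals the number of identity slots of the context pattern uniformly across both judgements, is what makes the proof go through, and I would state this invariant explicitly before entering the case analysis.
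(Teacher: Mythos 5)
Your proposal follows essentially the same route as the paper's own proof: a mutual induction on terms and patterns, organised around the single invariant that $l + (\text{middle arity}) + r$ equals the domain of the context's simple pattern $q$ (which, for simple patterns, is indeed its number of identity slots), with the same case analysis and the same key steps in the $\Phase{\theta}$, ``if let'', $\ket{x}$, composition, and tensor cases.

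One correction is needed in the pattern-tensor case, where your indices are switched. Evaluation of a pattern $p$ with $\vdash p : \pattern{j}{k}$ takes place in a context of middle-arity $k$ (its \emph{codomain}) and \emph{produces} a context of middle-arity $j$ (its domain) --- this is exactly what the second rule of the theorem asserts. So the first recursive call must be checked valid as $(q, l, r+k_2) : k_1 \to n$, not $j_1 \to n$; its well-formedness is the equation $l + k_1 + (r+k_2) = m_q$, which is precisely the hypothesis $l + (k_1 + k_2) + r = m_q$. Likewise the second call is $(q', l+j_1, r) : k_2 \to n$, valid because the first call's output context satisfies $l + j_1 + (r + k_2) = m'$ where $q' : \pattern{m'}{n}$; it is the \emph{output} of that second call, $(q'', l+j_1, r)$, that has arity $j_2$, whence the final context $(q'', l, r) : j_1 + j_2 \to n$. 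As literally written --- with $j_1$ and $j_2$ as the arities of the recursive calls --- the well-formedness equations would not follow from the hypothesis (they would force $j_i = k_i$). This is a notational slip rather than a conceptual gap, since your stated invariant, applied with the correct call arities, yields exactly the checks the paper performs; but in a mutual induction whose entire content is this bookkeeping, the domain/codomain distinction is the one thing that must be stated exactly.
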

\begin{proof}
  The proof proceeds by simultaneously proving both rules, mutually inducting on terms and patterns. 
\end{proof}

\section{Categorical Semantics}\label{s:semantics}

We now formalise the meaning of our combinator language by equipping it with categorical semantics. Using this we will prove that the evaluation algorithm in \cref{s:compiler} is valid in all models. 

\subsection{Dagger Categories}\label{s:sem-dg}

Except for measurement, quantum programming is intrinsically reversible,
following the rules of quantum mechanics. We capture this reversibility
mathematically by providing, for each morphism $f \colon X \to Y$, an
associated morphism $f^\dagger \colon Y \to X$ in the converse direction.
Formally, a dagger category is a category equipped with an identity-on-objects
involutive contravariant functor $(-)^\dagger \colon \cat C^{\rm op} \to \cat
C$, meaning that $X\dg = X$ and $(fg)\dg = g\dg f\dg$.

\begin{example}
	Here are some examples of relevant dagger categories for reversible
	programming.

	The category with sets as objects and partial injective functions as
	morphisms is a dagger category. We write $\cat{PInj}$ for this category.
	The dagger of a function $f$ is its partial inverse, as in the following
	example morphism $f \colon \{0,1\} \to \{0,1\}$.
	\[
		f(x) = \left\{
			\begin{array}{ll}
				1 				 & \text{if } x=0 \\
				\text{undefined} & \text{otherwise}
			\end{array}
		\right.
		\qquad\qquad
		f^{\dagger} (x) = \left\{
			\begin{array}{ll}
				0 				 & \text{if } x=1 \\
				\text{undefined} & \text{otherwise}
			\end{array}
		\right.
	\]
	This category represents what we refer to as \emph{classical}
	reversibility.

	The category $\cat{Con}$ of Hilbert spaces and contractive linear
	maps---\textit{i.e.}~maps with norm at most $1$---is a dagger category where the dagger is the adjoint of linear
	maps. This category embodies \emph{pure} quantum operations: it contains
	both unitaries and isometries, and so can model pure quantum states as well
	as programs.
\end{example}

Note that the dagger is not necessarily an inverse, but should rather be seen
as a \emph{partial} inverse. This gives us the mathematical leverage to provide
a semantics to ``if let'' (see next subsection). In a dagger category, a
morphism $f$ is a dagger monomorphism (resp. epimorphism) if $f^\dagger f =
\iid$ (resp. if $ff^\dagger = \iid$), and it is a dagger isomorphism if it is
both dagger monic and epic~\cite{heunenvicary:cqm}.

\begin{example}
	In \cat{PInj}, the dagger monomorphisms are the totally defined injective
	functions, and the dagger isomorphisms are the bijections. In $\cat{Con}$,
	the dagger monomorphisms are the isometries and the dagger isomorphisms are
	the unitaries.
\end{example}

Later in this section, we interpret the programs in our languages as dagger
isomorphisms (\emph{e.g.}~unitaries in a quantum setting), and patterns as
dagger monomorphisms (\emph{e.g.}~isometries).

\begin{definition}[Zero morphism]
	A category has \emph{zero morphisms} if for all $X, Y$, there exists
	$0_{X,Y} \colon X \to Y$ such that $0_{X,Y}\dg = 0_{Y,X}$, and $0_{X',Y}
	\circ f = 0_{X,Y} = g \circ 0_{X,Y'}$ for $f \colon X \to X'$, $g \colon Y'
	\to Y$.
\end{definition}

\begin{example}
	In $\cat{PInj}$ the zero morphism $0_{X,Y} \colon X \to Y$ is the function
	defined on no input. In $\cat{Con}$, it is the constantly zero linear map.
\end{example}

A \emph{dagger preserving} functor $F \colon \cat C \to \cat D$ between two
dagger categories is such that $F(f\dg) = F(f)\dg$ for any morphism $f$ in
$\cat C$. If $(C, \otimes, I)$ is symmetric monoidal and $\otimes$ is a dagger
functor, we say that that $(C, \otimes, I)$ is dagger symmetric monoidal.

A \emph{dagger rig category} is a dagger category equipped with symmetric
monoidal structures $(\otimes, I)$ and $(\oplus, O)$, such that $\otimes$ and
$\oplus$ are dagger functors, with their coherence isomorphisms, and with
additional natural dagger isomorphisms (satisfying coherence
conditions~\cite{laplaza:coherence}):
\[
    \begin{array}{c}
        (X \oplus Y) \otimes Z \stackrel{\thicksim}{\longrightarrow} (X \otimes Z) \oplus (Y \otimes Z),
        \quad
        O \otimes X \stackrel{\thicksim}{\longrightarrow} O,
        \\[1.5ex]
        Z \otimes (X \oplus Y) \stackrel{\thicksim}{\longrightarrow} (Z \otimes X) \oplus (Z \otimes Y),
        \quad
        X \otimes O \stackrel{\thicksim}{\longrightarrow} O.
    \end{array}
\]

\begin{example}
	The categories $\cat{PInj}$ and $\cat{Con}$ are dagger rig categories. The
	monoidal structures are the usual ones for $\cat{PInj}$: the product is the
	product of sets, its unit is a singleton, the sum is the disjoint union of
	sets, and its unit is the empty set. In $\cat{Con}$, the product is the
	tensor product, and its unit is a one-dimensional Hilbert space. Its sum is
	the direct sum of spaces, and the unit is the zero-dimensional Hilbert
	space $\{ 0 \}$.
\end{example}

The rig structure plays a central part in our semantics. We later
make sense of the subspaces pointed out by patterns in the language with the
$\oplus$ tensor, through the notion of \emph{independent coproducts} (see
\cref{def:ind-coproduct}). In this view, for example, $X_1$ and $X_2$ are some
subspaces of $X_1 \oplus X_2$. If we interpret qubits as $I \oplus I$, this
means that the space generated by $\ket 0$ (resp. $\ket 1$) are subspaces---and
they are not the only ones.

Now consider programs that act on the space $(X_1 \oplus X_2) \otimes Y$. The
rig structure allows us to identify $(X_1 \oplus X_2) \otimes Y$ with $(X_1
\otimes Y) \oplus (X_2 \otimes Y)$, which means that $X_1 \otimes Y$ and $X_2
\otimes Y$ are some subspaces of $(X_1 \oplus X_2) \otimes Y$. This identifies subspaces
with patterns of the form $p \otimes \id n$.

We later prove that we can derive a dagger rig structure with independent
coproducts (see \cref{def:ind-coproduct}) that are preserved by the tensor
product (see \cref{def:preserved-ic}).

\subsection{Independent Coproducts}\label{s:sem-is}

Our syntax in Section~\ref{s:syntax} relies on the ``if let'' conditional,
which is an instance of case splitting. Case splitting in programming languages
is usually interpreted with a \emph{disjointness} structure~\cite{coproducts}.
For example, if the case splitting happens on a Boolean, the set of potential
results is $X_1 \sqcup X_2$, where $X_1$ happens when the Boolean is true, and
$X_2$ when it is
false.
In category theory, we capture this behaviour with coproducts, which are cospans:
\(
	\begin{tikzcd}[cramped, ampersand replacement=\&, column sep=1cm]
		X_1 \& X_1 + X_2 \& X_2
		\arrow["i_1", from=1-1, to=1-2]
		\arrow["i_2"', from=1-3, to=1-2]
	\end{tikzcd}
\)
such that for all $f_1 \colon X_1 \to Y$ and $f_2 \colon X_2 \to Y$, there is a
unique mediating morphism $m \colon X_1 + X_2 \to Y$ such that the
diagram
\begin{equation}\label{eq:coprod}
	\begin{tikzcd}[ampersand replacement=\&, column sep=2cm]
		X_1 \& X_1 + X_2 \& X_2 \\
		\& Y \&
		\arrow["i_1", from=1-1, to=1-2]
		\arrow["i_2"', from=1-3, to=1-2]
		\arrow["f_1"', from=1-1, to=2-2]
		\arrow["f_2", from=1-3, to=2-2]
		\arrow["m", dashed, from=1-2, to=2-2]
	\end{tikzcd}
\end{equation}
commutes, without any special conditions on $f_1$ and $f_2$. In our setting, we also need to ensure our program remains reversible and has a reversible
semantics. In the particular case of case splitting, this reversibility
condition requires that we decide deterministically whether the result of type $Y$
through $m$ came from the map $f_1$ or the map $f_2$.

To do so, we introduce a notion of independence of morphisms, characterised by
the condition $f_1^\dagger f_2 = 0$ (see Definition~\ref{def:ind-cospan}),
where $f_1^\dagger$ is thought of as the (partial) inverse of $f_1$. This fits
the notion of compatibility of morphisms in classical reversible
settings~\cite[Definition 8]{kaarsgaardetal:join} and the one for linear maps
between Hilbert spaces~\cite[Section 2.3]{daveetal:control}.

\begin{definition}[Independent cospan]\label{def:ind-cospan}
	In a dagger category with zero morphisms, an \emph{independent cospan} is a pair
	of morphisms $(f_1 \colon X_1 \to Y, f_2 \colon X_2 \to Y)$ such that
	$f_1^\dagger f_2 = 0$.
\end{definition}

Additionally, we introduce a universal equivalent to independent cospans. In
the same vein as coproducts, an \emph{independent coproduct} is an independent
cospan for which there exists a mediating morphism with any other independent
cospan. An independent coproduct is also jointly epic, ensuring uniqueness of
not only mediating morphisms, but of all morphisms that make the diagrams such
as (\ref{eq:coprod}) commute. This uniqueness is essential to prove that our
programs are indeed interpreted as dagger isomorphisms.

\begin{definition}[Independent coproduct]\label{def:ind-coproduct}
	An \emph{independent coproduct} in a dagger category with zero morphisms is
	a jointly epic independent cospan $(\iota_1 \colon X_1 \to X, \iota_2
	\colon X_2 \to X)$ such that $\iota_1$ and $\iota_2$ are dagger
	monomorphisms, and for all independent cospans $(f_1 \colon X_1 \to Y, f_2
	\colon X_2 \to Y)$, there is a unique morphism $u \colon X \to Y$, making
	the following diagrams commute.
	\[
		\begin{tikzcd}[ampersand replacement=\&, row sep=7mm, column sep=2cm]
			X_1 \& X \& X_2  \& X_1 \\
			\& Y \& X \& X_2 \& Y
			\arrow["\iota_1", "\shortmid"{marking}, tail, from=1-1, to=1-2]
			\arrow["f_1"', from=1-1, to=2-2]
			\arrow["{u}", dashed, from=1-2, to=2-2]
			\arrow["\iota_2"', "\shortmid"{marking}, tail, from=1-3, to=1-2]
			\arrow["f_2", from=1-3, to=2-2]
      \arrow["\iota_1"', from=1-4, to=2-3]
      \arrow["0", from=1-4, to=2-4]
      \arrow["f_1", from=1-4, to=2-5]
      \arrow["\iota_2^\dag"', from=2-3, to=2-4]
      \arrow["f_2^\dag", from=2-5, to=2-4]
		\end{tikzcd}
	\]
\end{definition}

\begin{lemma}\label{lem:ic-choice}
	Let $p$ be a morphism. If both $(p,c)$ and $(p,d)$ are independent
	coproducts, then $cc^\dagger = dd^\dagger$.

	It also implies that that $d\dg c$ is a dagger isomorphism. Therefore
	$(p,c)$ and $(p,d)$ are the same independent coproduct up to dagger isomorphism.
\end{lemma}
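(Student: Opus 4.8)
The plan is to reduce the whole statement to two ``absorption'' identities and then finish with elementary dagger algebra. Write $p\colon A \to X$, $c\colon B \to X$ and $d\colon C \to X$ for the three morphisms, so that $cc^\dagger, dd^\dagger \colon X \to X$ are the candidate projections and $d^\dagger c \colon B \to C$ the candidate isomorphism. The two identities I would aim for are
\[ dd^\dagger c = c \qquad\text{and}\qquad cc^\dagger d = d. \]
Throughout I would use only the data guaranteed by the definition of independent coproduct: independence $p^\dagger c = 0$ and $p^\dagger d = 0$ (and their daggers $c^\dagger p = 0$, $d^\dagger p = 0$); the dagger-mono laws $c^\dagger c = \mathrm{id}_B$ and $d^\dagger d = \mathrm{id}_C$; and that each cospan $(p,c)$ and $(p,d)$ is jointly epic.

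The key step, and the one I expect to be the main obstacle, is proving $dd^\dagger c = c$. The naive route would invoke a completeness relation $pp^\dagger + cc^\dagger = \mathrm{id}_X$, but we are in a bare dagger category with zero morphisms and have no addition on hom-sets, so this is unavailable. Instead I would exploit joint epimorphism in \emph{dualised} form: since $(p,d)$ is jointly epic, the pair of daggers $(p^\dagger, d^\dagger)$ is jointly monic, meaning that any $v, v' \colon Z \to X$ with $p^\dagger v = p^\dagger v'$ and $d^\dagger v = d^\dagger v'$ must coincide. I would apply this with $Z = B$, $v = c$ and $v' = dd^\dagger c$. The two comparisons then hold for purely formal reasons: $p^\dagger c = 0$ and $p^\dagger(dd^\dagger c) = (p^\dagger d)\,d^\dagger c = 0$ by the two independence conditions, while $d^\dagger(dd^\dagger c) = (d^\dagger d)\,d^\dagger c = d^\dagger c$ by the dagger-mono law for $d$. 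Hence $c = dd^\dagger c$. Swapping the roles of $c$ and $d$ and using that $(p,c)$ is jointly epic yields $d = cc^\dagger d$ by the identical argument.

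With these two identities both conclusions are short. For the isomorphism, set $m = d^\dagger c$ and note $m^\dagger = c^\dagger d$; then $m^\dagger m = c^\dagger(dd^\dagger c) = c^\dagger c = \mathrm{id}_B$ and $m m^\dagger = d^\dagger(cc^\dagger d) = d^\dagger d = \mathrm{id}_C$, so $d^\dagger c$ is a dagger isomorphism. For the equality of projections, right-multiply $c = dd^\dagger c$ by $c^\dagger$ to get $cc^\dagger = dd^\dagger\,cc^\dagger$, and right-multiply $d = cc^\dagger d$ by $d^\dagger$ to get $dd^\dagger = cc^\dagger\,dd^\dagger$. Writing $e = cc^\dagger$ and $f = dd^\dagger$, these read $e = fe$ and $f = ef$; since $e$ and $f$ are self-adjoint, taking the dagger of $e = fe$ gives $e = ef$, and combining with $f = ef$ yields $e = f$, that is $cc^\dagger = dd^\dagger$. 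The only genuinely non-formal ingredient is the dualisation of joint epimorphism in the crux step, which supplies exactly the ``spanning'' information that a completeness relation would otherwise provide; everything else is bookkeeping with the independence and dagger-mono equations.
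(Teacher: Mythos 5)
Your proof is correct, and it runs on the same engine as the paper's: in a dagger category, joint epicness of a cospan dualises through the dagger to joint monicity of $(p\dg, d\dg)$ (resp.\ of $(p\dg, c\dg)$), and the needed agreements follow from independence and the dagger-mono laws. In fact your two comparisons establishing $c = dd\dg c$ are literally the second and fourth rows of the table of four equations the paper displays, namely $p\dg cc\dg c = 0 = p\dg dd\dg c$ and $d\dg cc\dg c = d\dg c = d\dg dd\dg c$ (recall $cc\dg c = c$). The organisation of the finish differs, though. The paper applies its ``double joint epicness'' principle (set up in the proof of Lemma~\ref{lem:univ-monic}) once, to $g = cc\dg$ and $h = dd\dg$, using all four equations to conclude $cc\dg = dd\dg$ in one shot; the dagger-isomorphism claim about $d\dg c$ is then left implicit. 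You instead stop after the monicity layer, harvest the absorption identities $dd\dg c = c$ and $cc\dg d = d$, and complete both claims by elementary algebra: the isometry computations $c\dg dd\dg c = c\dg c = \iid$ and $d\dg cc\dg d = d\dg d = \iid$ give the dagger isomorphism explicitly, and the self-adjoint idempotent manipulation $e = fe$, $f = ef$, hence $e = ef = f$, gives $cc\dg = dd\dg$ without a second appeal to epicness. What your route buys is an actual proof of the lemma's second assertion (which the paper asserts but never writes out) and a more elementary closing step; what the paper's buys is brevity, reusing a principle it already needed for Lemma~\ref{lem:univ-monic}.
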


\begin{definition}[Independent coproducts]
	A category \emph{has independent coproducts} if for all pairs of
	objects $X_1, X_2$, there exists an object $X_1 \oplus X_2$ and chosen
	morphisms $\iota_i \colon X_i \to X_1 \oplus X_2$ such that $(\iota_1,
	\iota_2)$ is an independent coproduct. Given an independent cospan $(f_1,
	f_2)$, its mediating morphism with the chosen independent coproduct is
	written $[f_1, f_2]$.
\end{definition}

\begin{theorem}\label{th:ic-smc}
	If \(\cat C\) has independent coproducts and a zero object $O$, it is
	symmetric monoidal.
\end{theorem}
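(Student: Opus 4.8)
The plan is to mimic the classical proof that a category with chosen coproducts is symmetric monoidal, but to restrict attention at every stage to \emph{independent} cospans so that the universal property of \cref{def:ind-coproduct} applies, and to use the joint epicness of the chosen injections to obtain both the uniqueness of the structure morphisms and all the coherence equations for free. Throughout, I write $\iota_1^{AB}\colon A\to A\oplus B$ and $\iota_2^{AB}\colon B\to A\oplus B$ for the chosen injections, and $[f_1,f_2]$ for the mediating morphism of an independent cospan.

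First I would make $\oplus$ into a bifunctor. Given $f_i\colon X_i\to Y_i$ I set $f_1\oplus f_2\coloneq[\iota_1^{Y_1Y_2}f_1,\,\iota_2^{Y_1Y_2}f_2]$. This mediator exists because the pair is an independent cospan: $(\iota_1 f_1)^\dagger(\iota_2 f_2)=f_1^\dagger(\iota_1^\dagger\iota_2)f_2=f_1^\dagger\,0\,f_2=0$, using that the chosen injections themselves form an independent cospan. Preservation of identities and composites then follows from the uniqueness clause of \cref{def:ind-coproduct}: in each case the two candidate morphisms agree after precomposition with $\iota_1$ and $\iota_2$, and joint epicness forces them equal.

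Next I would introduce the coherence isomorphisms as mediating morphisms and verify invertibility. The symmetry $\sigma_{X,Y}\coloneq[\iota_2^{YX},\iota_1^{YX}]\colon X\oplus Y\to Y\oplus X$ is well-defined since independence of $(\iota_2^{YX},\iota_1^{YX})$ is just the dagger of the independence of the chosen injections: $(\iota_2^{YX})^\dagger\iota_1^{YX}=\big((\iota_1^{YX})^\dagger\iota_2^{YX}\big)^\dagger=0^\dagger=0$; and $\sigma_{Y,X}\circ\sigma_{X,Y}=\id{}$ follows by checking both sides agree with the identity after precomposing with each injection and invoking joint epicness. The unitors are exactly where the hypothesis that $O$ is a \emph{zero} object is used: because $O$ is initial, $\iota_1^{OX}\colon O\to O\oplus X$ is forced to equal $0_{O,O\oplus X}$, so I can take $\lambda_X\coloneq[0_{O,X},\id X]$ (the cospan is independent as $0_{O,X}^\dagger\,\id X=0$) with inverse $\iota_2^{OX}$, the identity $\iota_2^{OX}\lambda_X=\id{O\oplus X}$ again reducing to a joint-epicness check that relies on $\iota_1^{OX}=0$. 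The associator $\alpha$ is assembled from nested mediators over the triple sum, re-verifying independence at each stage exactly as for the bifunctor.

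The hard part will be the coherence axioms---the pentagon, the triangle, and the hexagon---together with naturality of all the structure families. The uniform mechanism is that any morphism out of an (iterated) independent coproduct is determined by its composites with the injections, so for each equation I would show that its two sides become equal after precomposition with the iterated injections and then appeal to iterated joint epicness; naturality is handled identically. The only genuine care needed is the bookkeeping: at every point where a mediating morphism is introduced one must recheck the condition $f_1^\dagger f_2=0$ of \cref{def:ind-cospan}, which in practice always reduces to the independence of the chosen injections, its dagger, or the vanishing $0^\dagger=0$ and $g\,0=0=0\,g$ of zero morphisms. \cref{lem:ic-choice} guarantees that these choices are harmless up to dagger isomorphism, so the resulting structure is well-defined independently of how the independent coproducts were chosen.
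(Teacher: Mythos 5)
Your proposal is correct and follows essentially the same route as the paper: \(\oplus\) on morphisms, the symmetry, the unitors and the associator are all introduced as mediating morphisms of independent cospans (with independence each time reduced to \(\iota_1\dg\iota_2=0\), its dagger, or zero-morphism absorption), invertibility of the unitor rests on the injection \(O\to O\oplus X\) being forced to be \(0\), and every equation is then extracted from joint epicness of the chosen injections. The only cosmetic divergence is in the coherence axioms: the paper notes that the dagger makes independent coproducts into independent products and then runs the cartesian-style proof with projections, whereas you precompose with injections directly --- these are dagger-dual formulations of the same joint-epicness argument.
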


\begin{example}
	Both $\cat{PInj}$ and $\cat{Con}$ have independent coproducts, given by their usual notion of direct sum.
\end{example}

In our language, we also need to interpret the tensor
product, which manifests as a monoidal product. If this monoidal product is
compatible with independent coproducts (as described below), we obtain a dagger
rig category.

\begin{definition}[Preservation of independent coproducts]\label{def:preserved-ic}
	We say that $(\cat C, \otimes, I)$ \emph{preserves independent coproducts} if for
	all independent coproducts $(\iota_1, \iota_2)$ and objects $Y$, then
	$(\iota_1 \otimes \iid_Y, \iota_2 \otimes \iid_Y)$ is also an independent
	coproduct.
\end{definition}

\begin{theorem}\label{th:rig}
	If $(\cat C, \otimes, I)$ is a dagger symmetric monoidal category with a
	monoidal zero object $O$ (namely, equipped with dagger isomorphisms $O
	\otimes X \cong O$), and independent coproducts preserved by the monoidal
	structure, then $\cat C$ is a dagger rig category.
\end{theorem}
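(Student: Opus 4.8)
The plan is to assemble the two required symmetric monoidal structures together with the four families of distributors, checking at each stage that the comparison maps are genuine dagger isomorphisms and that the relevant diagrams commute. The $(\otimes, I)$ structure is given. For $(\oplus, O)$, \cref{th:ic-smc} already supplies a symmetric monoidal structure from the chosen independent coproducts and the zero object, so the only thing left for the rig definition is that $\oplus$ is a \emph{dagger} functor, i.e. $(f \oplus g)\dg = f\dg \oplus g\dg$. I would prove this directly from the universal property: writing $\iota_i$ for the chosen injections, $f \oplus g$ is the unique mediating morphism of the independent cospan $(\iota_1 f, \iota_2 g)$, and one checks $\iota_j\dg (f\oplus g)\dg \iota_i = \iota_j\dg (f\dg \oplus g\dg)\iota_i$ for all $i,j$ using only $\iota_i\dg\iota_i = \iid$ (dagger monic) and $\iota_1\dg\iota_2 = 0$ (independence). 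Joint epicness of $(\iota_1,\iota_2)$ — equivalently joint monicness of $(\iota_1\dg,\iota_2\dg)$ — then cancels the injections on both sides. The same uniqueness argument shows the associator, unitors and braiding of $\oplus$ are dagger isomorphisms.

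Next I would produce the distributors. The nullary ones are easy: $O \otimes X \cong O$ is a hypothesis, and $X \otimes O \cong O$ follows by composing with the braiding of $\otimes$, which is a dagger isomorphism; naturality is automatic since $O$ is a zero object, so every morphism into or out of $O$ is forced. The essential step is the binary distributor $(X \oplus Y) \otimes Z \xrightarrow{\sim} (X \otimes Z) \oplus (Y \otimes Z)$. Here I would invoke \cref{def:preserved-ic}: preservation of independent coproducts makes $(\iota_1 \otimes \iid_Z, \iota_2 \otimes \iid_Z)$ an independent coproduct of $X \otimes Z$ and $Y \otimes Z$, so it and the chosen coproduct are two independent coproducts of the \emph{same} pair. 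Comparing them yields a canonical mediating map $u$, and I would show it is a dagger isomorphism by the argument behind \cref{lem:ic-choice}: since $u$ carries one family of injections to the other, $\iota_j\dg u\dg u\,\iota_i = \delta_{ij}\iid = \iota_j\dg\iota_i$, whence joint epicness forces $u\dg u = \iid$; as $u$ is already invertible this gives $u\dg = u^{-1}$. The right distributor is obtained symmetrically, using that $\otimes$ is braided so that $(\iid_Z \otimes \iota_1, \iid_Z \otimes \iota_2)$ is also an independent coproduct.

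Finally I would verify naturality of the binary distributors and the Laplaza coherence conditions. Both reduce to the universal property: each coherence diagram is an equation between two morphisms out of an iterated independent coproduct, and by joint epicness it suffices to check it after precomposing with the injections, where both sides collapse to the same mediating data (using the interchange of $\otimes$ with the chosen $\oplus$-injections established above). The main obstacle is administrative rather than conceptual — Laplaza's list is long — but every individual diagram is discharged by uniqueness of mediating morphisms, so no genuinely new idea is needed beyond the dagger-functoriality of $\oplus$ and the dagger-iso distributors.
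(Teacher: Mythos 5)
Your proposal matches the paper's proof: it likewise invokes \cref{th:ic-smc} for the $(\oplus, O)$ symmetric monoidal structure, constructs the distributor as the mediating morphism between the chosen independent coproduct $(\iota_1, \iota_2)$ and the independent coproduct $(\iota_1 \otimes \iid, \iota_2 \otimes \iid)$ supplied by the preservation hypothesis (a dagger isomorphism by uniqueness of mediating morphisms and joint epicness), and discharges the Laplaza coherence diagrams by precomposing with the jointly epic injections. The only differences are cosmetic: the paper already packages dagger functoriality of $\oplus$ inside the proof of \cref{th:ic-smc}, and it leaves the nullary and right-handed distributors implicit, which you spell out via the zero object and the braiding.
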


\begin{example}
	Our two examples $\cat{PInj}$ and $\cat{Con}$ have such a monoidal
	structure, and with independent coproducts, are rig categories under the
	conditions of Theorem~\ref{th:rig}. In fact, the category $\cat{PInj}$ is
	the standard one to model classical reversible
	programming~\cite{kaarsgaard2021join, kostia2024invrec}. It has many
	characteristics in common with $\cat{Con}$, as highlighted by their
	axioms~\cite{heunen2024axioms}. However, Axiom~(5) for $\cat{Con}$ shows
	that `mixture occurs': there is a map $I \to I \oplus I$ that is orthogonal
	to neither injection; physicists call this `superposition'. In this, we
	present $\cat{Con}$ as suitable for quantum computing, as opposed to
	$\cat{PInj}$.
\end{example}

\subsection{Semantics of ``if let''}\label{s:sem-if}

Let $(\cat{C}, \otimes, I)$ be a dagger symmetric monoidal category with a
monoidal zero object and independent coproducts, equipped with a chosen,
\emph{distinguished} independent coproduct
\(
	\begin{tikzcd}[ampersand replacement=\&, cramped]
		I \& I \oplus I \& I
		\arrow["r_1", "\shortmid"{marking}, tail, from=1-1, to=1-2]
		\arrow["r_2"', "\shortmid"{marking}, tail, from=1-3, to=1-2]
	\end{tikzcd}
\)
and a family of scalars $\phi_\theta \colon I \to I$ for $\theta \in \mathbb R$
satisfying $\phi_\theta \circ \phi_{\theta'} = \phi_{\theta + \theta'}$,
$\phi_\theta\dg = \phi_{-\theta}$ and $\phi_0 = \iid_I$.

\begin{remark}
	We choose here to have a phase group parameterised by real numbers, helping
	for a simpler presentation of both the syntax and its semantics. Note that
	the approach to phase can be more refined, and we could parameterise with
	any group.
\end{remark}

We provide a semantics of our language in the category $\cat C$. We
first fix the semantics for qubit types as: $\sem{\qreg 0} = I,$ and
$\qsem{(n{+}1)} = \qsem n \otimes (I \oplus I)$. Note that, due to the rig
structure (see \cref{th:rig}), there is an isomorphism $\qsem{(n{+}m)}
\cong \qsem n \otimes \qsem m$ for all $n,m$.  In the rest of the section, we
allow ourselves to write ``$\cong$'' for coherence morphisms, to keep apparent
only the key points of the semantics.  Both well-formed terms $\unitary n$
and patterns $\pattern n m$ are interpreted as morphisms $\qsem n \to
\qsem n$ and $\qsem n \to \qsem m$, and we later prove that the interpretation
of a term is a dagger isomorphism, and the one of a pattern is dagger monic
(see Theorem~\ref{th:well-defined-sem}).

\begin{remark}
	The data types used in the syntax are only qubits, and the semantics of a
	qubit data is $I \oplus I$. While we keep this restriction, the content of
	this section is generalisable to any size of data (for example qutrits,
	modelled as $I \oplus I \oplus I$). This would, however, require a
	different syntax.
\end{remark}

The full semantics of the language is detailed in Figure~\ref{fig:sem}. The
cornerstone of the language is the ``if let'' expression, and the semantics of
the other terms follow easily. Given a pattern $\vdash p \colon \pattern n m$,
assume that $\sem{\vdash p \colon \pattern n m}$ is part of an independent
coproduct
\(
	\begin{tikzcd}[cramped]
		\qsem n & \qsem m & \bullet\ .
		\arrow["\sem p", "\shortmid"{marking}, tail, from=1-1, to=1-2]
		\arrow["\ \sem p^\perp"', "\shortmid"{marking}, tail, from=1-3, to=1-2]
	\end{tikzcd}
\)
The choice of $\sem p^\perp$ does not matter, up to dagger isomorphism. With a
unitary $\vdash s \colon \unitary n$, we have an independent cospan $(\sem p
\sem s, \sem p^\perp)$ whose mediating morphism with $(\sem p, \sem p^\perp)$
is written \(\left[\sem p \sem s, \sem p^\perp\right]\) as shown below, which we let be the semantics of
$\vdash \ifthen p s \colon \unitary m$.
\begin{equation}\label{eq:iflet-sem}
	\begin{tikzcd}[ampersand replacement=\&, row sep=10mm, column sep=1.5cm]
		\qsem n \& \qsem m \& \ \bullet\  \\
		\& \qsem m
		\arrow["\sem p"', "\shortmid"{marking}, tail, from=1-1, to=1-2]
		\arrow["\sem p \sem s"', from=1-1, to=2-2]
		\arrow["{\left[\sem p \sem s, \sem p^\perp\right]}"{description}, dashed, from=1-2, to=2-2]
		\arrow["\ {\sem p^\perp}", "\shortmid"{marking}, tail, from=1-3, to=1-2]
		\arrow["{\sem p^\perp}", from=1-3, to=2-2]
	\end{tikzcd}
	\qquad
	\begin{tikzcd}[ampersand replacement=\&, row sep=10mm, column sep=1.5cm]
		\qsem n \& \qsem m \& \ \bullet\  \\
		\qsem n \& \qsem m \& \ \bullet\
		\arrow["\sem p"', "\shortmid"{marking}, tail, from=1-1, to=1-2]
		\arrow["\sem s"', from=1-1, to=2-1]
		\arrow["\sem p"', from=2-1, to=2-2]
		\arrow["{\left[\sem p \sem s, \sem p^\perp\right]}"{description}, dashed, from=1-2, to=2-2]
		\arrow["\ {\sem p^\perp}", "\shortmid"{marking}, tail, from=1-3, to=1-2]
		\arrow["\iid", from=1-3, to=2-3]
		\arrow["{\sem p^\perp}", from=2-3, to=2-2]
	\end{tikzcd}
\end{equation}
In other words, the dagger monomorphism $\sem p$ works as a subobject, on which
$\sem s$ is applied. The resulting $\left[\sem p \sem s, \sem p^\perp\right]$ is the morphism that acts as $\sem s$ on
the subobject identified as $\sem p$, and that acts as the identity on its
orthogonal subobject $\spp$.

\begin{figure}[t]
	\begin{align*}
		\sem{\vdash \Phase\theta \colon \unitary 0} &= \phi_\theta \\
		\sem{\vdash \id n \colon \unitary n} &= \iid_{\qsem n} \\
		\sem{\vdash s;t \colon \unitary n} &= \sem{\vdash t \colon \unitary n}
		\circ \sem{\vdash s \colon \unitary n} \\
		\sem{\vdash s \otimes t \colon \unitary{(n{+}m)}} &= ~\cong \circ \left(
		\sem{\vdash s \colon \unitary n} \otimes \sem{\vdash t \colon \unitary n}
		\right) \circ \cong \\
		\sem{\vdash \ifthen p s \colon \unitary m} &= \left[\sem p \sem s, \sem p^\perp\right] \\[1.6ex]
		\sem{\vdash \ket 0 \colon \pattern 0 1} &= \iota_1
		\qquad\qquad\
		\sem{\vdash \ket 1 \colon \pattern 0 1} = \iota_2 \\
		\sem{\vdash \ket + \colon \pattern 0 1} &= r_1
		\qquad\qquad
		\sem{\vdash \ket - \colon \pattern 0 1} = r_2 \\
		\sem{\vdash s \colon \pattern n n} &= \sem{\vdash s \colon \unitary n} \\
		\sem{\vdash p \cdot q \colon \pattern l m} &= \sem{\vdash p \colon \pattern
		n m} \circ \sem{\vdash q \colon \pattern l n} \\
		\sem{\vdash p \otimes q \colon \pattern{(n{+}n')}{(m{+}m')}} &= ~\cong
		\circ \left( \sem{\vdash p \colon \pattern n m} \otimes \sem{\vdash q
		\colon \pattern{n'}{m'}} \right) \circ \cong
	\end{align*}
	\caption{Categorical semantics, defined by induction of the type derivation.}
	\label{fig:sem}
\end{figure}

We prove, by induction on the typing derivations, that our semantics is
well-defined. In particular, we show that programs $s$ are interpreted as
dagger isomorphisms (meaning that $\sem s\dg \sem s = \iid$ and $\sem s\sem
s\dg = \iid$) and that patterns $p$ are interpreted as dagger monomorphisms
(meaning that $\sem p\dg \sem p = \iid$).  The well-definedness also heavily
relies on the existence of an \emph{orthogonal} morphism to $\sem p$ for all
$p$, in order to compute the ``if let'' statement as in (\ref{eq:iflet-sem}).

\begin{theorem}\label{th:well-defined-sem}
	For any well-typed pattern $p$ and unitary $s$, we have that:
	\begin{itemize}
		\item there exists a morphism $\sem p ^\perp$ such that $\left( \sem p,
			\sem p ^\perp \right)$ is an independent coproduct;
		\item the morphism $\spp$ is unique up to dagger isomorphism;
		\item the morphism $\sem p$ is a dagger monomorphism;
		\item the morphism $\sem s$ is a dagger isomorphism.
	\end{itemize}
\end{theorem}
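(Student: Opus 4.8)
The plan is to prove all four bullet points \emph{simultaneously} by structural induction on the typing derivations of patterns and unitaries. The mutual recursion is unavoidable: the semantics of \(\ifthen p s\) in \cref{fig:sem} invokes the sub-pattern \(p\) (whose interpretation must be a dagger monomorphism with an orthogonal) and the sub-unitary \(s\) (whose interpretation must be a dagger isomorphism), while the rule promoting a unitary \(s : \unitary n\) to a pattern \(s : \pattern n n\) sends us back to \(s\) qua unitary. Throughout, the second bullet—uniqueness of \(\spp\) up to dagger isomorphism—is immediate from \cref{lem:ic-choice} once existence is established, so the genuine work is the existence of orthogonals, dagger-monicity of patterns, and dagger-invertibility of unitaries.

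For unitaries the base cases are direct: \(\sem{\Phase\theta} = \phi_\theta\) is a dagger isomorphism since \(\phi_\theta\phi_{-\theta} = \phi_0 = \iid\) and \(\phi_\theta\dg = \phi_{-\theta}\), and \(\sem{\id n} = \iid\). Sequential composition of dagger isomorphisms is again one, and so is \(\sem{s \otimes t}\), as \(\otimes\) is a dagger functor with dagger-iso coherence morphisms. The crucial case is \(\ifthen p s\). By the inductive hypotheses, \(\sem p\) is a dagger monomorphism admitting \(\spp\) with \((\sem p, \spp)\) an independent coproduct, and \(\sem s\) is a dagger isomorphism. Then \((\sem p\sem s, \spp)\) is an independent cospan, since \((\sem p\sem s)\dg\spp = \sem s\dg\sem p\dg\spp = 0\), and \(u\) is its mediating morphism, so \(u\sem p = \sem p\sem s\) and \(u\spp = \spp\). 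Defining \(u'\) analogously from \(\sem s\dg\) gives \(uu'\sem p = \sem p\) and \(uu'\spp = \spp\), whence \(uu' = \iid\) by uniqueness of mediating morphisms, and symmetrically \(u'u = \iid\); thus \(u\) is invertible. That it is moreover \emph{dagger}-invertible follows by identifying \((\sem p, \spp)\) with the chosen independent coproduct of its components (via the canonical dagger isomorphism of \cref{lem:ic-choice}), under which \(u\) corresponds to \(\sem s \oplus \iid\): functoriality of \(\oplus\) (\cref{th:ic-smc}), which preserves daggers in the dagger rig structure of \cref{th:rig}, makes this a dagger isomorphism.

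For patterns, the base cases \(\ket 0, \ket 1\) and \(\ket +, \ket -\) are the legs of the chosen and distinguished independent coproducts \((\iota_1, \iota_2)\) and \((r_1, r_2)\), so each leg is a dagger monomorphism whose orthogonal is the companion leg. A unitary \(s : \pattern n n\) is interpreted as the dagger isomorphism \(\sem s\); being dagger epic, its orthogonal is the zero morphism \(0_{O,\qsem n}\), and \((\sem s, 0)\) is readily checked to be an independent coproduct. The two inductive cases carry the content. For \(p \cdot q\), the composite \(\sem p\sem q\) is a dagger monomorphism, and its orthogonal is assembled from \(\spp\) together with \(\sem p\) applied to \(\sem q^{\perp}\)—a transitivity property of independent coproducts along a dagger monomorphism. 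For \(p \otimes q\), the tensor \(\sem p \otimes \sem q\) is a dagger monomorphism, and its orthogonal is built by writing the codomains \(\qsem m\) and \(\qsem{m'}\) as independent coproducts along \((\sem p, \spp)\) and \((\sem q, \sem q^{\perp})\), tensoring, and applying the distributivity isomorphisms of the rig structure (\cref{th:rig}); preservation of independent coproducts by \(\otimes\) (\cref{def:preserved-ic}) then exhibits \(\sem p \otimes \sem q\) as one summand, and \(\sem{p \otimes q}^{\perp}\) is its complement.

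I expect the main obstacle to be precisely the two compositional pattern cases—above all the tensor case—where producing \(\spp\) is not formal: it demands combining the distributivity isomorphisms of the rig structure with \cref{def:preserved-ic}, and repeatedly invoking \cref{lem:ic-choice} to reconcile the various non-canonical complements that arise. By contrast, the \(\ifthen p s\) case, though conceptually central, reduces cleanly to the universal property once the orthogonal is in hand, so the construction of orthogonals for compound patterns is the true engine of the argument and the place where the categorical hypotheses (rig structure, preservation) are genuinely consumed.
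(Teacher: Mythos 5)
Your proposal follows essentially the same route as the paper: mutual structural induction on the typing derivations, uniqueness of \(\spp\) via \cref{lem:ic-choice}, and the same orthogonal constructions \(\sem{p \cdot q}^\perp = \left[\sem p \sem q^\perp, \spp\right] \circ \cong\) and its tensor analogue (whose detailed verification---existence of mediating morphisms and joint epicness---is what the paper spells out, and which you correctly identify as the main labour). The one place you genuinely diverge is the \(\ifthen p s\) case. The paper first invokes \cref{lem:univ-monic} to get \(u\dg u = \iid\) directly (the cospan \((\sem p \sem s, \spp)\) consists of dagger monomorphisms), then builds the mediating morphism \(v\) for the cospan \((\sem p \sem s\dg, \spp)\) and concludes \(uv = \iid\) by joint epicness; dagger monic plus right-invertible immediately gives dagger isomorphism. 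You instead establish two-sided invertibility and then transport \(u\) to \(\sem s \oplus \iid\) along an identification of \((\sem p, \spp)\) with the chosen coproduct, using dagger functoriality of \(\oplus\). This works, but note a citation slip: \cref{lem:ic-choice} only compares two independent coproducts \emph{sharing a common leg} \(p\), whereas your identification needs the fact that any two independent coproducts of the same pair of objects are related by a canonical dagger isomorphism; that fact is true, but proving it requires \cref{lem:univ-monic} (the two mediating morphisms are mutually inverse and each is dagger monic). Since \cref{lem:univ-monic} is needed anyway, the paper's direct argument is the more economical one---the detour through \(\oplus\)-functoriality and \cref{th:rig} buys nothing extra here.
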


Note that given a well-typed pattern $p$, we can give a formula for $\spp$, as follows:
\[
	\begin{array}{c}
		\sem{\ket 0}^\perp = \sem{\ket 1}
		\quad
		\sem{\ket +}^\perp = \sem{\ket -}
		\quad
		\sem s ^\perp = 0
		\\[1.3ex]
		\sem{p \cdot q}^\perp = \left[\sem p \sem q^\perp, \sem p^\perp \right]
		\circ \cong
		\quad
		\sem{p \otimes q}^\perp = ~\cong\circ \left[\sem p^\perp \otimes \sem
		q, \left[ \sem p \otimes \sem q^\perp, \sem p^\perp \otimes \sem
		q^\perp \right] \right] \circ\cong
	\end{array}
\]
in which our use of $\cong$ contains only coherence
morphisms~\cite{laplaza:coherence} that are identities in after the
semi-strictification of rig categories (namely, all coherence morphisms except
the multiplicative symmetry and the right distributor).

\begin{remark}
	The category $\cat{PInj}$ is a (degenerate) model of the language. It has
	all the structure required, with $\phi_\theta = \iid_I$ for all $\theta$,
	and $r_i = \iota_i$ for $i \in \{1,2\}$.
	Naturally, the best model for our language is $\cat{Con}$, where patterns are
	interpreted as isometries, and programs as unitaries.
\end{remark}


We showcase some equations that hold in any model of the language.

\begin{proposition}\label{prop:equations}
	If $p$ and $q$ are well-typed patterns, and $s$ and $t$ are well-typed terms,
	then we have:
	\begin{itemize}
		\item
			\(
				\sem{\ifthen p {\ifthen q s}} = \sem{\ifthen {p \cdot q} s} \text;
			\)
		\item
			\(
				\sem{\ifthen p {(s;t)}} = \sem{\ifthen p s ; \ifthen p t} \text;
			\)
		\item
			\(
				\sem{\ifthen t s} = \sem t \sem s \sem t\dg \text.
			\)
	\end{itemize}
\end{proposition}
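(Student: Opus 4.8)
The plan is to prove all three equations from the universal property underlying the ``if let'' semantics. By \cref{th:well-defined-sem}, the pair $(\sem p, \spp)$ is an independent coproduct and hence in particular \emph{jointly epic}, so two morphisms out of $\qsem m$ coincide as soon as they agree after precomposition with $\sem p$ and with $\spp$. Unwinding diagram~(\ref{eq:iflet-sem}), the morphism $u = \sem{\ifthen p s}$ is precisely the unique morphism satisfying the two characterising equations $u \circ \sem p = \sem p \sem s$ and $u \circ \spp = \spp$. Each of the three claims will then be established by exhibiting the stated right-hand side (or a convenient rewriting of it) as a morphism satisfying the characterising equations of the relevant left-hand side, and appealing to uniqueness.

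For the conjugation equation $\sem{\ifthen t s} = \sem t \sem s \sem t\dg$, I would first observe that when $t$ is read as a pattern we have $\sem{\vdash t \colon \pattern n n} = \sem{\vdash t \colon \unitary n}$, a dagger isomorphism, and that $\sem t^\perp = 0$ by the formula for orthogonal complements recorded after \cref{th:well-defined-sem}. The characterising equations for $u = \sem{\ifthen t s}$ therefore collapse to the single condition $u \circ \sem t = \sem t \sem s$ (the equation on $\spp = 0$ being vacuous). Since $\sem t$ is a dagger isomorphism, $\sem t \sem t\dg = \iid$, so the candidate $\sem t \sem s \sem t\dg$ satisfies $(\sem t \sem s \sem t\dg) \circ \sem t = \sem t \sem s$, and uniqueness delivers the equation.

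For $\sem{\ifthen p {(s;t)}} = \sem{\ifthen p s ; \ifthen p t}$, I would check that the composite $b \circ a$, with $a = \sem{\ifthen p s}$ and $b = \sem{\ifthen p t}$, meets the characterising equations of $\sem{\ifthen p{(s;t)}}$, whose body has semantics $\sem{s;t} = \sem t \sem s$. Associativity gives $(b a)\circ \sem p = b\circ(\sem p \sem s) = (b \circ \sem p)\circ \sem s = \sem p \sem t \sem s = \sem p \sem{s;t}$, while $(b a)\circ \spp = b\circ \spp = \spp$. Uniqueness then yields $b a = \sem{\ifthen p {(s;t)}}$.

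The composite-pattern equation $\sem{\ifthen p {\ifthen q s}} = \sem{\ifthen {p \cdot q} s}$ is the step I expect to be the main obstacle. Writing $u = \sem{\ifthen p{\ifthen q s}}$ and $w = \sem{\ifthen q s}$, I have $u \sem p = \sem p w$, $u \spp = \spp$, $w\sem q = \sem q \sem s$ and $w \sem q^\perp = \sem q^\perp$; the target $v = \sem{\ifthen{p\cdot q}s}$ is mediating for the independent coproduct $(\sem p \sem q, \sem{p\cdot q}^\perp)$, so it suffices to verify that $u$ satisfies $v$'s characterising equations. The first is immediate by associativity, $u(\sem p \sem q) = (u\sem p)\sem q = \sem p w \sem q = \sem p \sem q \sem s$. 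The delicate part is showing $u \circ \sem{p\cdot q}^\perp = \sem{p\cdot q}^\perp$, for which I would unfold the explicit formula $\sem{p\cdot q}^\perp = [\sem p \sem q^\perp, \spp]\circ\cong$ given after \cref{th:well-defined-sem}; since the injections of the cotuple are jointly epic (and $\cong$ is an isomorphism), it is enough to check that $u$ fixes each leg, namely $u(\sem p \sem q^\perp) = \sem p (w \sem q^\perp) = \sem p \sem q^\perp$ and $u \spp = \spp$. Thus $u$ agrees with $v$ on both legs of $(\sem p \sem q, \sem{p\cdot q}^\perp)$, and joint epicness forces $u = v$. The only genuine care required is the bookkeeping of the coherence isomorphisms $\cong$ concealed in the orthogonal-complement formula.
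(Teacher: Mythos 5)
Your proposal is correct and takes essentially the same approach as the paper: each of the three equations is obtained from the uniqueness clause of the independent-coproduct universal property, with the composite-pattern case handled exactly as in the paper by unfolding $\sem{p \cdot q}^\perp = \left[\sem p \sem q^\perp, \spp\right] \circ \cong$ and checking the two legs. (One immaterial slip: in the conjugation case you invoke $\sem t \sem t\dg = \iid$, but the computation actually uses $\sem t\dg \sem t = \iid$; both hold since $\sem t$ is a dagger isomorphism.)
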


The categorical dagger agrees with the syntactic inverse (see
Definition~\ref{def:syn-inverse}).

\begin{proposition}
	Given a well-typed term $s$, we have that $\sem s \dg = \sem{s\dg}$.
\end{proposition}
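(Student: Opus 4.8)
The plan is to prove $\sem s\dg = \sem{s\dg}$ by structural induction on the term $s$, following the clauses of the syntactic inversion in Definition~\ref{def:syn-inverse} and reading off interpretations from Figure~\ref{fig:sem}. Since the dagger is an involutive contravariant functor, it interacts transparently with the compositional clauses of the semantics; the only genuinely interesting case is the ``if let'' constructor, whose interpretation is pinned down by a universal property rather than an explicit formula.

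First I would dispatch the easy cases. For the phase, $\sem{\Phase\theta\dg} = \sem{\Phase{-\theta}} = \phi_{-\theta} = \phi_\theta\dg = \sem{\Phase\theta}\dg$, using the assumed scalar identity $\phi_\theta\dg = \phi_{-\theta}$. For the identity (which is its own syntactic inverse), $\sem{\id n}\dg = \iid_{\qsem n}\dg = \iid_{\qsem n} = \sem{\id n}$. For sequential composition, recall $\sem{s;t} = \sem t \circ \sem s$; contravariance gives $\sem{s;t}\dg = \sem s\dg \circ \sem t\dg$, which by the induction hypothesis equals $\sem{s\dg} \circ \sem{t\dg} = \sem{t\dg ; s\dg} = \sem{(s;t)\dg}$. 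For the tensor, $\otimes$ is a dagger functor and the coherence isomorphisms are dagger isomorphisms, so $\sem{s \otimes t}\dg = {\cong}\circ(\sem s\dg \otimes \sem t\dg)\circ{\cong}$; applying the induction hypothesis to each factor identifies this with $\sem{s\dg \otimes t\dg} = \sem{(s \otimes t)\dg}$.

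For the key ``if let'' case, write $u = \sem{\ifthen p s}$, the mediating morphism determined by $u \circ \sem p = \sem p\,\sem s$ and $u \circ \spp = \spp$ as in (\ref{eq:iflet-sem}). By Theorem~\ref{th:well-defined-sem}, $\sem s$ and $u$ are dagger isomorphisms, so $\sem s\,\sem s\dg = \iid$ and $u\dg u = \iid$. I would then check that $u\dg$ satisfies the two equations characterising $\sem{\ifthen p {s\dg}}$. Precomposing $u \circ \sem p = \sem p\,\sem s$ with $u\dg$ on the left gives $\sem p = u\dg \circ \sem p\,\sem s$, and postcomposing with $\sem s\dg$ then yields $u\dg \circ \sem p = \sem p\,\sem s\dg$; likewise $u \circ \spp = \spp$ gives $u\dg \circ \spp = \spp$. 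By the induction hypothesis $\sem{s\dg} = \sem s\dg$, so these are exactly the defining equations for the mediating morphism of the cospan $(\sem p\,\sem{s\dg}, \spp)$. As $(\sem p, \spp)$ is jointly epic, this mediating morphism is unique, whence $u\dg = \sem{\ifthen p {s\dg}} = \sem{(\ifthen p s)\dg}$.

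The main obstacle is exactly this ``if let'' case: because the semantics of ``if let'' is given only up to its universal property, one cannot evaluate $\sem{(\ifthen p s)\dg}$ directly and must instead recognise $u\dg$ as the unique mediating morphism for the inverted cospan. This recognition relies on $\sem s$ being a genuine dagger isomorphism (supplying $\sem s\,\sem s\dg = \iid$) and on the joint epicness of the distinguished independent coproduct to guarantee uniqueness---both furnished by Theorem~\ref{th:well-defined-sem}.
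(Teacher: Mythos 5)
Your proof is correct: the compositional cases are routine, and your handling of the ``if let'' case---verifying that $u\dg$ satisfies the two equations $u\dg \sem p = \sem p \sem{s\dg}$ and $u\dg \spp = \spp$ characterising $\sem{\ifthen p {s\dg}}$, then invoking joint epicness of $(\sem p, \spp)$ for uniqueness---is exactly the style of argument the paper uses for the adjacent results (Proposition~\ref{prop:equations} and Theorem~\ref{th:well-defined-sem}), relying on the same two facts (that $\sem s$ and $u$ are dagger isomorphisms) that Theorem~\ref{th:well-defined-sem} supplies. The paper in fact states this proposition without giving an explicit proof in Appendix~\ref{app:semantics}, so your argument fills that omission in precisely the intended way.
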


\subsection{Soundness}\label{sub:soundness}

We now show that any categorical model $\cat C$ is sound with respect to the
evaluation function for the compilation. We prove that the compilation
procedure does not alter the semantics of the program.

Each normal clause $c_i$ already has a well-defined semantics (see
Theorem~\ref{th:well-defined-sem}), and we define the semantics of a list of
clauses $[c_1, \dots, c_l]$ as the composition $\sem{c_l} \circ \dots \circ
\sem{c_1}$. In the following, we allow ourselves to loosely write $c$ for a
list of clauses, and $\sem c$ for its semantics in a categorical model $\cat
C$.
Since $q$ designates the subspace on which the program is applied, we should
have that if $\evalu {q,l,r} s = c$, then $\sem c$ is the mediating morphism below:
\begin{equation*}
	\begin{tikzcd}[ampersand replacement=\&, column sep=1.5cm, row sep=5mm]
		\qsem m \& \qsem n \& \ \bullet\  \\
		\qsem l \otimes \qsem k \otimes \qsem r  \&\& \\
		\qsem l \otimes \qsem k \otimes \qsem r  \&\& \\
		\qsem m \& \qsem n \& \ \bullet\
		\arrow["\sem q"', "\shortmid"{marking}, tail, from=1-1, to=1-2]
		\arrow["\sem c"{description}, dashed, from=1-2, to=4-2]
		\arrow["\ {\sem q^\perp}", "\shortmid"{marking}, tail, from=1-3, to=1-2]
		\arrow["\sem q"', from=4-1, to=4-2]
		\arrow["{\sem q^\perp}", from=4-3, to=4-2]
		\arrow["\sem{\id l \otimes s \otimes \id r}"', curve={height=70pt}, from=1-1, to=4-1]
		\arrow["\xi\dg"', from=1-1, to=2-1]
		\arrow["\iid \otimes \sem p \otimes \iid"', from=2-1, to=3-1]
		\arrow["\xi"', from=3-1, to=4-1]
		\arrow["\iid", from=1-3, to=4-3]
	\end{tikzcd}
\end{equation*}
where $\xi \colon \qsem l \otimes \qsem k \otimes \qsem r \to \qsem m$ is the
coherence isomorphism that rewrites one object into the other, since we know
that $l + k + r = m$, but $\xi$ basically acts as the identity. Intuitively, we
get that the semantics of $\evalu {q,l,r} s$ is equal to the one of the term:
$\ifthen q {(\id l \otimes s \otimes \id r)}$.

In the case of $\evalp {q,l,r} p = (c,q')$, the intuition is that the semantics
of the pattern $c \cdot q'$ is equal to the one of $q \cdot (\id l \otimes p
\otimes \id r)$; and their orthogonal morphisms are equal as well.

\begin{lemma}[Induction hypothesis]
	We have the following:
	\begin{itemize}
		\item If $\evalu {q,l,r} s = c$, then $\sem c$ is the mediating morphism in the diagram above; in other words, the unique morphism such that:
			\[
				\left\{ \begin{array}{l}
					\sem c \sem{q} = \sem q \xi (\iid \otimes \sem s \otimes \iid) {\xi}\dg \\
					\sem c \sem{q}^\perp = \sem{q}^\perp \text;
				\end{array}\right.
			\]
		\item if $\evalp {q,l,r} p = (c,q')$, then we have that:
			\(
				\sem c \sem{q'} = \sem q \xi (\iid \otimes \sem p \otimes \iid) {\xi'}\dg \text.
			\)
	\end{itemize}
\end{lemma}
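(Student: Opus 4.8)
The plan is to prove the two statements simultaneously by structural induction on the typing derivations of $s$ and $p$, following the clauses of Figure~\ref{fig:eval}. Two standing facts drive every case. First, each normal clause is interpreted as a dagger isomorphism by Theorem~\ref{th:well-defined-sem}, and the semantics of a list is the composite of its clauses; hence every $\sem c$ is automatically a dagger isomorphism with $\sem{c\dg}=\sem c\dg$, and $\sem{c\doubleplus c'}=\sem{c'}\circ\sem c$. Second, an independent coproduct $(\iota_1,\iota_2)$ is jointly epic, with dagger monic legs satisfying $\iota_1\dg\iota_2=0$; consequently $\iota_1\iota_1\dg+\iota_2\iota_2\dg=\iid$ (both sides agree after precomposing with the jointly epic pair). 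So to identify $\sem c$ with the mediating morphism $u$ in the unitary case it suffices to verify the two defining equations $u\sem q=\sem q\,\xi(\iid\otimes\sem s\otimes\iid)\xi\dg$ and $u\sem q^\perp=\sem q^\perp$, uniqueness then doing the rest.

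The base and structurally transparent cases are routine. For $\Phase\theta$ and $\id k$ the single clause (resp.\ the empty list) satisfies the two equations by the definition of the ``if let'' semantics together with a coherence identity matching $\sem{\Phase\theta\otimes\id{l+r}}$ with $\xi(\iid\otimes\phi_\theta\otimes\iid)\xi\dg$. The sequential case $s;t$ follows from functoriality, writing $\sem{s;t}=\sem t\sem s$ and cancelling $\xi\dg\xi=\iid$. The pattern cases $\ket x$ (substitution into $q$), $s$-as-a-pattern (immediate from the unitary statement, with $q'=q$), and $p_1\cdot p_2$ (chaining the two context equations and cancelling $\xi'\dg\xi'=\iid$) are all of the same flavour. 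The two tensor cases $s\otimes t$ and $p_1\otimes p_2$ are slightly more involved: the evaluation shifts the whiskers ($r\mapsto r+k_2$ on the left factor, $l\mapsto l+k_1$ on the right), and one must check via an interchange/coherence argument that the two whiskered operations act on disjoint tensor factors and therefore combine into $\iid\otimes\sem{s\otimes t}\otimes\iid$; the orthogonal equation is immediate since both factors fix $\sem q^\perp$.

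The crux is the ``if let'' clause $\evalu{q,l,r}{\ifthen p s}=c\dg\doubleplus c'\doubleplus c$, whose semantics unwinds to the conjugate $\sem c\,\sem{c'}\,\sem c\dg$. Here the pattern statement supplies $\sem c\sem{q'}=\sem q P$ with $P:=\xi(\iid\otimes\sem p\otimes\iid)\xi'\dg$, and the body statement supplies that $u':=\sem{c'}$ fixes $\sem{q'}^\perp$ and sends $\sem{q'}\mapsto\sem{q'}\xi'(\iid\otimes\sem s\otimes\iid)\xi'\dg$. Writing $\Pi:=\iid\otimes\sem p\otimes\iid$ and letting $\Pi^\perp$ be its orthogonal complement (which exists because the tensor preserves independent coproducts, Definition~\ref{def:preserved-ic}), the pair $(\xi\Pi,\xi\Pi^\perp)$ is an independent coproduct over $\qsem m$, with $\sem q\xi\Pi=\sem c\sem{q'}\xi'$ and $\sem q\xi\Pi^\perp=\sem q P^\perp$ where $P^\perp=\xi\Pi^\perp$. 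To verify equation~(a) I check $u\sem q$ against $\sem q\,\xi(\iid\otimes\sem{\ifthen p s}\otimes\iid)\xi\dg$ after precomposing with the jointly epic pair $(\xi\Pi,\xi\Pi^\perp)$: on the $\xi\Pi$ part both sides reduce to $\sem q\xi\Pi(\iid\otimes\sem s\otimes\iid)$ using $\sem c\dg\sem c=\iid$ and the body statement; on the $\xi\Pi^\perp$ part both sides reduce to $\sem q\xi\Pi^\perp$, because $\sem c\dg\sem q P^\perp$ is orthogonal to $\sem{q'}$ (as $(\sem qP)\dg\sem q P^\perp=P\dg P^\perp=0$) and is therefore fixed by $u'$, and because the whiskered mediating morphism $\iid\otimes\sem{\ifthen p s}\otimes\iid$ fixes $\Pi^\perp$.

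The orthogonal equation~(b), $u\sem q^\perp=\sem q^\perp$, is the step I expect to be the main obstacle. Expanding $\sem c\dg\sem q^\perp$ through the resolution of identity $\iid=\sem{q'}\sem{q'}\dg+\sem{q'}^\perp(\sem{q'}^\perp)\dg$ and using $\sem{q'}\dg\sem c\dg\sem q^\perp=(\sem c\sem{q'})\dg\sem q^\perp=(\sem qP)\dg\sem q^\perp=P\dg(\sem q\dg\sem q^\perp)=0$, the $\sem{q'}$-component vanishes, so $u'\sem c\dg\sem q^\perp=\sem{q'}^\perp(\sem{q'}^\perp)\dg\sem c\dg\sem q^\perp=\sem c\dg\sem q^\perp$; conjugating back by the dagger isomorphism $\sem c$ gives $\sem c u'\sem c\dg\sem q^\perp=\sem q^\perp$, as required. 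The subtle point throughout is that $\sem p$ selects only part of the subspace cut out by $\sem q$, so one must track carefully how $\sem c$ transports the complementary piece $\xi\Pi^\perp$ and the ambient complement $\sem q^\perp$; the two orthogonality computations above, together with the resolution of identity and the preservation of independent coproducts under tensoring, are exactly what make this bookkeeping go through. Well-foundedness of the induction holds because the ``if let'' and pattern-composition clauses recurse only into structural subderivations of $s$ and $p$.
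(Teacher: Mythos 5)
Your induction has the same skeleton as the paper's proof: the same case split over the evaluation rules, and in the crucial case $\evalu{q,l,r}{\ifthen p s}=c\dg\doubleplus c'\doubleplus c$ the same strategy of verifying the two defining equations of the mediating morphism after precomposing with the jointly epic pair $\left(\xi(\iid\otimes\sem p\otimes\iid),\ \xi(\iid\otimes\spp\otimes\iid)\right)$. The gap is in how you dispose of the complement components. Both your treatment of the $\xi\Pi^\perp$ leg of equation~(a) and your entire proof of equation~(b) rest on the principle ``$\sem{q'}\dg v=0$ implies $u'v=v$'', which you justify by the resolution of identity $\iid=\sem{q'}\sem{q'}\dg+\sem{q'}^\perp(\sem{q'}^\perp)\dg$ (and, in your opening paragraph, $\iota_1\iota_1\dg+\iota_2\iota_2\dg=\iid$). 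No such sum of morphisms exists in the paper's setting: Theorem~\ref{th:soundness} is asserted for an arbitrary dagger symmetric monoidal category with a zero object and independent coproducts preserved by $\otimes$, with no enrichment in commutative monoids. Independent coproducts are not genuine biproducts here: their universal property applies only to \emph{independent} cospans, so the codiagonal $[\iid,\iid]$ one would need to define addition of parallel morphisms is unavailable (the cospan $(\iid,\iid)$ fails the independence condition $\iid\dg\iid=0$). Your parenthetical argument ``both sides agree after precomposing with the jointly epic pair'' presupposes that the sum is a morphism at all, which is exactly what fails. As written, your proof is valid in additively enriched models such as $\cat{Con}$, but not in all models, which is what the lemma claims.

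The paper closes this hole by a different piece of bookkeeping: from the pattern equation $\sem{c\cdot q'}=\sem{q\cdot(\id{l}\otimes p\otimes\id{r})}$ and the explicit complement formulas following Theorem~\ref{th:well-defined-sem}, it records $\sem c\sem{q'}^\perp=\left[\sem q^\perp,\ \sem q\,\xi(\iid\otimes\spp\otimes\iid)\right]\circ\cong$, so that $\sem c\dg\sem q^\perp$ and $\sem c\dg\sem q\,\xi(\iid\otimes\spp\otimes\iid)$ enter the ``if let'' case as \emph{explicit} factorisations through $\sem{q'}^\perp$, which $\sem{c'}$ fixes by its defining equation; no orthogonality-implies-invariance principle is ever invoked. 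Alternatively, your principle can in fact be proved without sums: $u'=\sem{c'}$ is a dagger isomorphism satisfying $u'\sem{q'}=\sem{q'}w$ (with $w$ a dagger isomorphism) and $u'\sem{q'}^\perp=\sem{q'}^\perp$, hence $u'\dg\sem{q'}=\sem{q'}w\dg$ and $u'\dg\sem{q'}^\perp=\sem{q'}^\perp$; for any $v$ with $\sem{q'}\dg v=0$, the morphisms $v\dg u'\dg$ and $v\dg$ agree after precomposition with both legs $\sem{q'}$ and $\sem{q'}^\perp$, so joint epicness gives $v\dg u'\dg=v\dg$, i.e.\ $u'v=v$. With either repair your argument goes through; as submitted, it does not establish the lemma in the stated generality.
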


Soundness is a direct corollary: we can conclude that the compilation
scheme described by the evaluation function does not alter the program.

\begin{theorem}[Soundness]\label{th:soundness}
	Given a well-formed unitary term $s$, we have $\sem{s} = \sem{\evalu
	{\id{}^{\otimes n}, 0, 0} s}$.
\end{theorem}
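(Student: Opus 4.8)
The plan is to obtain the statement, as the surrounding text already indicates, as an immediate corollary of the Induction Hypothesis lemma, specialised to the trivial evaluation context $(\id{}^{\otimes n}, 0, 0) \colon n \to n$. Here the simple pattern is $q = \id{}^{\otimes n} \colon \pattern n n$ and the whiskering counts are $l = r = 0$, so that $k = n = m$ and the term being evaluated is the whole of $s \colon \unitary n$. Writing $\evalu{\id{}^{\otimes n}, 0, 0} s = c$, the lemma tells us that $\sem c$ is the unique morphism $u$ satisfying
\[
  u \, \sem q = \sem q \, \xi \, (\iid \otimes \sem s \otimes \iid) \, \xi\dg
  \qquad\text{and}\qquad
  u \, \sem q ^\perp = \sem q ^\perp.
\]
It therefore suffices to compute the right-hand data for this particular $q$ and to check that the unique such $u$ is exactly $\sem s$.

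First I would identify the semantics of the chosen pattern. Since each $\id{1}$, viewed as a pattern of type $\pattern 1 1$, is interpreted as the identity, the tensor $q = \id{}^{\otimes n}$ is interpreted as $\sem q = \iid_{\qsem n}$ up to the coherence isomorphisms recorded by $\qsem n \cong \qsem 1 \otimes \cdots \otimes \qsem 1$. Moreover, by the explicit formula for orthogonal complements given after Theorem~\ref{th:well-defined-sem}, a unitary pattern has $\sem s ^\perp = 0$; hence $\sem q ^\perp = 0_{O, \qsem n}$, the zero morphism out of the zero object. Thus $(\sem q, \sem q ^\perp) = (\iid_{\qsem n}, 0)$ is precisely the distinguished independent coproduct $\qsem n \oplus O \cong \qsem n$: here $0 \colon O \to \qsem n$ is a dagger monomorphism because $0\dg 0 = 0_{O,O} = \iid_O$, the pair is jointly epic, and the universal property forces any mediating morphism for a cospan $(h, 0)$ to equal $h$.

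With these values the second characterising equation $u \cdot 0 = 0$ is vacuous, and the first reads $u \, \iid_{\qsem n} = \iid_{\qsem n} \, \xi \, (\iid \otimes \sem s \otimes \iid) \, \xi\dg$. Since $l = r = 0$ we have $\qsem l \cong I \cong \qsem r$, so $\xi \colon \qsem l \otimes \qsem k \otimes \qsem r \to \qsem m$ is just a composite of unit coherence isomorphisms and $\xi \, (\iid_{\qsem 0} \otimes \sem s \otimes \iid_{\qsem 0}) \, \xi\dg = \sem s$ by naturality of the coherence morphisms. Therefore $u = \sem s$, giving $\sem{\evalu{\id{}^{\otimes n}, 0, 0} s} = \sem c = u = \sem s$ as required.

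The entire content of the argument is carried by the Induction Hypothesis lemma; the only care needed here is bookkeeping. The one genuinely load-bearing point is the identification of $(\iid_{\qsem n}, 0)$ as an independent coproduct, which supplies both the existence of the mediating morphism and its uniqueness, so that ``the unique $u$'' really is $\sem s$ on the nose, rather than merely a morphism agreeing with $\sem s$ on a subobject. The remaining manipulations---suppressing the coherence isomorphisms $\xi$ together with the rig isomorphism $\qsem{(0+n+0)} \cong \qsem n$---are exactly the strictification steps already flagged in the surrounding text as ``$\xi$ basically acts as the identity''.
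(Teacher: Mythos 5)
Your proposal is correct and follows exactly the paper's route: the paper also obtains soundness as an immediate corollary of the Induction Hypothesis lemma applied to the context $(\id{}^{\otimes n},0,0)$, where $\sem q = \iid_{\qsem n}$, $\sem q^\perp = 0$, and the coherence $\xi$ collapses, forcing the mediating morphism to be $\sem s$. The only difference is that you spell out the bookkeeping (the independent coproduct $(\iid,0)$ and the unitor naturality) that the paper leaves implicit in its one-line "Soundness is then simply a corollary."
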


%

\section{Beyond Combinators}
\label{sec:beyond-combinators}

In this final section, we give two nominal variations of the language, exploring how the ``if let'' construction could manifest in different contexts.
\begin{itemize}
\item A functional language:
  here we allow qubits to be bound to linear variables, and unitaries are treated as functions which consume input qubits, producing new output qubits.
  This allows a more concise syntax where a unitary can be applied to a subset of the available qubits without inserting explicit identity unitaries.
  A consequence of this format is that there arises a native implementation of the ``swap'' gate on two qubits, which in turn allows swaps to appear within ``if let'' blocks.
\item An imperative language:
  in the imperative model, variables are viewed as representing the \emph{location} of a qubit (i.e. their denotation is given by an inclusion into some ambient space) instead of viewing variables as storing the state of a qubit at some fixed point in time.
  In this model, variables need not be treated linearly, and unitaries are given by expressions rather than functions, and are viewed as mutating the variables they act on, rather than consuming them.
\end{itemize}

The following subsections briefly sketch these languages, exploring their differences to the core combinator language.
\subsection{Functional}\label{s:functional}

In the functional variant, we introduce variables which intuitively store the quantum state at intermediate stages of the program.
Gates can then be explicitly applied to any specific qubits available, rather than requiring them to be tensored with appropriate identity operations to apply them to specific qubits.

With access to variables, the typing judgements must now be parameterised by a context. As unitaries are no longer the primitive constructions, we remove the unitary type \(\unitary{n}\), and instead have the following typing judgement for terms:
\[ \Gamma \vdash s : \qreg{n} \]
The context \(\Gamma\) is a list \(x_1 : \qreg{n_1}, x_2 : \qreg{n_2}, \dots\) where the \(x_i\) are variable names. We write \(\cdot\) for the empty context and \(\Gamma, \Delta\) for the concatenation of contexts \(\Gamma\) and \(\Delta\). As is usual in the literature, we treat terms as equal up to \(\alpha\)-equivalence.

Instead of the composition operator present in the combinator language, we now have a ``let'' expression, allowing us to bind the (possibly multiple) outputs of a term. In general this will have the form:
\[\letin{x_1 \otimes \cdots \otimes x_n = s}t\]
We refer to expressions of the form \(x_1 \otimes \cdots \otimes x_n\) as \emph{copatterns}, as they are also exactly the terms that can be matched to a pattern in an ``if let'' statement. To make this precise, we specify three classes of syntax: \(\mathsf{Copattern} \subset \mathsf{Unitary} \subset \mathsf{Pattern}\). All three syntax classes are generated inductively, from the following grammars:
\begin{alignat*}{4}
  &\mathsf{Copattern}\text{:}~~~~&\ &c&\ &\Coloneq&\ &x \mid c_1 \otimes c_2\\
  &\mathsf{Unitary}\text{:}&&s, t&&\Coloneq&&x \mid s \otimes t \mid \Phase{\theta} \mid \letin{c = s}{t} \mid \ifthen{p = c}{s}\\
  &\mathsf{Pattern}\text{:}&&p, q&&\Coloneq&&x \mid p \otimes q \mid \Phase{\theta} \mid \letin{c = p}{q} \mid \ifthen{p = c}{s} \mid \ket{0} \mid \ket{1} \mid \ket{+} \mid \ket{-}
\end{alignat*}
where \(x\) represents a variable.

Within this syntax, we treat variables linearly, which corresponds to the deletion and copying of qubits being forbidden. This is reflected in the typing rules for terms, found below.
\begin{mathpar}
  \inferrule{ }{x : \qreg{n} \vdash x : \qreg{n}}\and
  \inferrule{ }{\cdot \vdash \ket{0} : \qreg{1}}\and
  \inferrule{ }{\cdot \vdash \ket{1} : \qreg{1}}\and
  \inferrule{ }{\cdot \vdash \ket{+} : \qreg{1}}\and
  \inferrule{ }{\cdot \vdash \ket{-} : \qreg{1}}\and
  \inferrule{\Gamma \vdash s : \qreg{n} \and \Delta \vdash t : \qreg{m}}{\Gamma, \Delta \vdash s \otimes t : \qreg{(n + m)}}\and
  \inferrule{ }{\cdot \vdash \Phase{\theta} : \qreg{0}}\and
  \inferrule{\Theta \vdash s : \qreg{m} \and \Delta \vdash c : \qreg{m} \and \Gamma, \Delta \vdash t : \qreg{n}}{\Gamma, \Theta \vdash \letin{c = s}t : \qreg{n}}\and
  \inferrule{\Delta \vdash c : \qreg{m} \and \Theta \vdash p : \qreg{m} \and \Theta \vdash s : \qreg{n}}{\Delta \vdash \ifthen{p = c}{s} : \qreg{m}}
\end{mathpar}
With the above typing rules the term
\[\letin{x' \otimes z' = \left( \ifthen{\ket{1} \otimes \ket{-} = x \otimes z}{\Phase{\pi}} \right)}x' \otimes y \otimes z'\]
in the context \(x : \qreg{1}, y : \qreg{1}, z : \qreg{1}\), which applies a \(\Gate{CX}\) gate to the first and third qubits, is not well-typed, as it uses variables ``in the wrong order''. To remedy this, the following exchange rule can be added.
\begin{mathpar}
  \inferrule{\Gamma, x : \qreg{n}, y: \qreg{m}, \Delta \vdash s : \qreg{l}}{\Gamma, y: \qreg{m}, x : \qreg{n}, \Delta \vdash s : \qreg{l}}
\end{mathpar}

Introducing this exchange rule has an unintended side effect; swaps (and therefore their controlled variants) are now natively representable within the language. For example:
\[x : \qreg{1}, y : \qreg{1}, z: \qreg{1} \vdash \ifthen{\ket{1} \otimes y' \otimes z' = x \otimes y \otimes z}{z' \otimes y'} : \qreg{3}\]
Performs a swap of \(y\) and \(z\), controlled on \(x\).

In the term above, we were required to ``rebind'' \(y\) and \(z\) to \(y'\) and \(z'\), as this functional variant does not allow ``variable capture'': the use of variables in the body of the ``if let'' which were defined outside the pattern. In principle, the typing rules could be modified to allow this, however doing so creates ambiguity in the semantics, as it can be unclear whether swaps occur inside the ``if let'' block (possibly being controlled) as opposed to happening before.

\subsection{Semantics of the Functional Language}\label{s:sem-functional}
Let $(\cat{C}, \otimes, I)$ be a dagger symmetric monoidal category with a zero
object (such that $O \otimes X \cong O$ for all $X$) and independent
coproducts, equipped with a chosen, \emph{distinguished} independent coproduct
\(
	\begin{tikzcd}[ampersand replacement=\&, cramped]
		I \& I \oplus I \& I
		\arrow["r_1", "\shortmid"{marking}, tail, from=1-1, to=1-2]
		\arrow["r_2"', "\shortmid"{marking}, tail, from=1-3, to=1-2]
	\end{tikzcd}
\)
and a family of scalars $\phi_\theta \colon I \to I$ for $\theta \in \mathbb R$
satisfying $\phi_\theta \circ \phi_{\theta'} = \phi_{\theta + \theta'}$,
$\phi_\theta\dg = \phi_{-\theta}$ and $\phi_0 = \iid_I$.

The semantics of a judgement $\Gamma \vdash p \colon \qreg n$ is given as a
morphism
\(
	\sem{\Gamma \vdash p \colon \qreg n} \colon \sem\Gamma \to \sem{\qreg n}
\)
where $\sem\Gamma = \sem{\qreg{n_1}} \otimes \dots \otimes \sem{\qreg{n_k}}$
for $\Gamma = x_1 \colon \qreg{n_1}, \dots, x_k \colon \qreg{n_k}$. Note that
given the latter $\Gamma$, there is a canonical copattern $c_\Gamma = x_1
\otimes \dots \otimes x_k$, such that $\Gamma \vdash c_\Gamma \colon
\qreg{(n_1{+}\dots{+}n_k)}$.

We define the following semantics:
\begin{align*}
	\sem{x \colon \qreg{n} \vdash x \colon \qreg{n}} &= \iid_{\qsem n} \\
 	\sem{\cdot \vdash \ket{0} \colon \qreg{1}} &= \iota_1
	\qquad\qquad\
 	\sem{\cdot \vdash \ket{1} \colon \qreg{1}} = \iota_2 \\
 	\sem{\cdot \vdash \ket{+} \colon \qreg{1}} &= r_1
	\qquad\qquad
 	\sem{\cdot \vdash \ket{-} \colon \qreg{1}} = r_2 \\
  	\sem{\cdot \vdash \Phase{\theta} \colon \qreg{0}} &= \phi_{\theta} \\
 	\sem{\Gamma, \Delta \vdash s \otimes t \colon \qreg{(n + m)}} &= \sem{\Gamma
 	\vdash s \colon \qreg{n}} \otimes \sem{\Delta \vdash t \colon \qreg{m}} \\
  	\sem{\Gamma, \Theta \vdash \letin{c = s}t \colon \qreg{n}} &=
	\sem{\Gamma, \Delta \vdash t \colon \qreg{n}} \circ \left( \iid_{\sem\Gamma}
	\otimes \left( \sem{\Delta \vdash c \colon \qreg{m}}^\dagger \circ \sem{\Theta
	\vdash s \colon \qreg{m}} \right) \right) \\
	\sem{\Delta \vdash \ifthen{p = c}{s} \colon \qreg{m}} &= u \circ
	\sem{\Delta \vdash c \colon \qreg{m}}
\end{align*}
where $u$ is the mediating morphism in:
\begin{equation}\label{eq:fun-iflet-sem}
	\begin{tikzcd}[ampersand replacement=\&, row sep=6mm, column sep=1.5cm]
		\sem \Theta \& \qsem m \& \ \bullet\  \\
		\qsem n \& \& \\
		\sem \Theta \& \qsem m \& \ \bullet \
		\arrow["\sem p"', "\shortmid"{marking}, tail, from=1-1, to=1-2]
		\arrow["\sem p", from=3-1, to=3-2]
		\arrow["\sem s"', from=1-1, to=2-1]
		\arrow["\sem{c_\Theta}"', from=2-1, to=3-1]
		\arrow["\iid", from=1-3, to=3-3]
		\arrow["u"{description}, dashed, from=1-2, to=3-2]
		\arrow["\ {\sem p^\perp}", "\shortmid"{marking}, tail, from=1-3, to=1-2]
		\arrow["{\sem p^\perp}"', from=3-3, to=3-2]
	\end{tikzcd}
\end{equation}

We expect to have the following results, with a similar proof as for
Theorem~\ref{th:well-defined-sem}.
\begin{itemize}
	\item If we have $\Gamma \vdash p \colon \qreg n$ with $p$ a pattern,
		then $\sem{\Gamma \vdash p \colon \qreg n}$ is an isometry.
	\item If we have $\Gamma \vdash c \colon \qreg n$ with $c$ a unitary,
		then $\sem{\Gamma \vdash c \colon \qreg n}$ is a unitary.
\end{itemize}

\subsection{Imperative}\label{s:imperative}

The final version of the syntax is an imperative syntax. By exploiting that unitaries always return the same number of qubits as they are input, we can instead view a unitary as a procedure that \emph{mutates} its input qubits, and has no return value.

In this syntax, quantum computation is \emph{effectful}. Under this view, our qubit variables now act as qubit ``locations'' rather than qubit states, and can be freely copied or unused, allowing the linearity condition to be dropped. We note that in this syntax, copying a variable is more akin to creating an alias, rather than duplicating a qubit state (which cannot be done by the no-cloning theorem).

Removing the need for return values comes with various advantages:
\begin{itemize}
\item The overall syntax becomes more concise.
\item It is no longer possible to natively represent swaps as in the functional variant, yet we do not have a rigid qubit order as in the combinator variant.
\item It is much easier to give a typing rule to the ``if let'' clause which allows ``variable capture'', the use of variables defined outside the pattern.
\end{itemize}
To demonstrate the last point, we consider the following program:
\[ x : \qreg{1}, y : \qreg{1} \vdash (\ifthen{\ket{1} = y}{\Gate{X}[x]}); \Gate{H}[x]\]
We are free here to use the variable \(x\) in the body of the ``if let'' clause, despite it not appearing in the pattern \(\ket{1}\). Contrary to the functional variation of the syntax, the variable \(x\) is not treated linearly, and so we are free to continue using \(x\) after the ``if let'' clause. We also note that this program can be viewed as using the variables \(x\) and \(y\) ``in the wrong order''. The translation of this program into the functional syntax would require multiple uses of the exchange typing rule, yet in this syntax the program can be naturally interpreted without any swap gates.

A disadvantage of the imperative approach is its treatment of patterns. Patterns naturally have a more functional treatment, causing the resulting language to contain functional and imperative components. For example, in the pattern \(\Gate{H} \cdot \ket{0}\), the term \(\Gate{H}\) acts as a function which outputs a pattern, whereas in the expression \(\Gate{H}[x]\), the term \(\Gate{H}\) acts on a qubit \(x\) which it is then viewed as mutating, returning nothing. To remedy this, we add a sort of gates, which bind the free variables in a unitary, and allow them to be embedded into a pattern.

The syntax for this language has the following grammar, where \(\Gamma\) is a context and \(x\) is a variable:
\begin{alignat*}{4}
  &\mathsf{Copattern}\text{:}~~~~&\ &c&\ &\Coloneq&\ &x \mid c_1 \otimes c_2\\
  &\mathsf{Pattern}\text{:}&&p, q&&\Coloneq&&x \mid p \otimes q \mid g \cdot p \mid \ket{0} \mid \ket{1} \mid \ket{+} \mid \ket{-}\\
  &\mathsf{Unitary}\text{:}&&s, t&&\Coloneq&&\id{} \mid s;t \mid \Phase{\theta} \mid \ifthen{p = c}{s}\\
  &\mathsf{Gate}\text{:}&&g &&\Coloneq&&\Gamma \mapsto s
\end{alignat*}
As in the functional syntax, \(\mathsf{Copattern} \subset \mathsf{Pattern}\), but here the syntax for unitaries is separated.

We type this language with three judgements, one for pattern and copattern terms, one for unitary terms, and one for gates.
\[ \Gamma \vdash p : \qreg{n} \qquad \Gamma \vdash s \qquad \vdash g : \unitary{n}\]
The typing rules for patterns and copatterns are:
\vspace*{1mm}
\begin{mathpar}
  \inferrule{ }{x : \qreg{n} \vdash x : \qreg{n}}\and
  \inferrule{\Gamma \vdash s : \qreg{n} \and \Delta \vdash t : \qreg{m}}{\Gamma, \Delta \vdash s \otimes t : \qreg{(n + m)}}\and
  \inferrule{\vdash g : \unitary{n} \and \Gamma \vdash p : \qreg{n}}{\Gamma \vdash g \cdot p : \qreg{n}}
\end{mathpar}
\vspace*{2mm}
\begin{mathpar}
  \inferrule{ }{\cdot \vdash \ket{0} : \qreg{1}}\and
  \inferrule{ }{\cdot \vdash \ket{1} : \qreg{1}}\and
  \inferrule{ }{\cdot \vdash \ket{+} : \qreg{1}}\and
  \inferrule{ }{\cdot \vdash \ket{-} : \qreg{1}}\and
\end{mathpar}
\vspace*{1mm}
The typing rules for unitaries being:
\begin{mathpar}
  \inferrule{ }{\Gamma \vdash \id{}}\and
  \inferrule{\Gamma \vdash s \and \Gamma \vdash t}{\Gamma \vdash s; t}\and
  \inferrule{ }{\Gamma \vdash \Phase{\theta}}\and
  \inferrule{\Delta \vdash c : \qreg{m} \and \Theta \vdash p : \qreg{m} \and \Gamma, \Theta \vdash s}{\Gamma, \Delta \vdash \ifthen{p = c}{s}}
\end{mathpar}
Lastly, the unique typing rule for gates is the following, where \(\# \Gamma\) is the number obtained by summing the indices of each type in \(\Gamma\):
\begin{mathpar}
  \inferrule{\Gamma \vdash s \and \# \Gamma = n}{\vdash \Gamma \mapsto s : \unitary{n}}
\end{mathpar}
In addition, we allow typing rules which permute the context, much like in the functional version of the syntax, though note that due to the lack of return values in this syntax, there is no way to create a ``native'' controlled swap, (i.e. without inserting a swap gate using quantum operations).

We further note that the typing rule here for the ``if let'' expression allows variable capture, without any of the complications that arised in the functional syntax.

\section{Future Work}\label{s:conclusion}

We conclude by discussing some future directions for the development of languages using the constructs introduced in this work. An immediate direction is to extend the compilation and denotational semantics to the functional and imperative variants of the language presented in \cref{sec:beyond-combinators}.

In this paper we have focussed on using the ``if let'' construction as a device for writing quantum programs, yet it could also find use as a representation of quantum programs within an optimising compiler. At this higher level of abstraction than quantum circuits, more global optimisations become apparent. \Cref{s:compiler} already contains one instance of this, simplifying a conjugation appearing in a controlled block.

Additionally, a study of more comprehensive equational systems on our language could extend the semantic equivalences between terms presented in \cref{s:semantics}. This could take the form of a complete representation of equality between quantum programs (as has been done for quantum circuits~\cite{clementetal:equationalcircuits}), or aim to characterise an equality relation which may not be complete (with respect to the semantics in unitary matrices) but could exhibit an efficient normalisation function, with potential applications for optimisation.

Finally, the language presented here describes programs which are purely quantum, with no classical components, omitting even allocation and measurement. A more fully-featured quantum programming language based on the ``if let'' construction requires the addition of these operations. In contrast to naively extending the language, these operations can reuse parts of the pattern infrastructure, with measurement bases specified by patterns, and patterns specified by subprograms with allocation but without measurement.

\bibliographystyle{plain}
\bibliography{bibliography}


\appendix

\section{Proof of Type-Soundness}\label{app:typing}

As stated in the main body, we prove \cref{thm:type-soundness} by mutually inducting on terms and patterns. We assume throughout that \((q, l, r) : k \to n\) is a (valid) evaluation context, with \(\vdash q : \pattern m n\) and \(l + k + r = m\).
\begin{itemize}
\item Case \(\Phase{\theta}\): In this case \(k = 0\) and so \(l + r = m\). Therefore, \(\vdash \Phase{\theta} \otimes \id{l + r} : \unitary m\), and so by the typing rule for ``if let'' we are done.
\item Case \(s;t\): Follows immediately from inductive hypothesis and the typing rule for composition.
\item Case \(s \otimes t\): By assumption, there are \(k_1, k_2\) such that \(k_1 + k_2 = k\) and \(\vdash s : \unitary{k_1}\) and \(\vdash t : \unitary {k_2}\). Therefore, \((q, l, r + k_2) : k_1 \to n\) and \((q, l + k_1, r) : k_2 \to n\) are evaluation contexts. By inductive hypothesis, the evaluations of \(s\) and \(t\) both have type \(\unitary{n}\), and so can be composed.
\item Case \(\id k\): By convention, the resulting normal term is \(\id n\), which has type \(\unitary{n}\).
\item Case \(\ifthen p s\): By the typing rule for ``if let'', we must have \(\vdash p : \pattern j k\) and \(\vdash s : \unitary{j}\). If \(\evalp {q,l,r} {p} = ([c_1, \dots, c_N], q')\), then by inductive hypothesis for patterns we have that \(\vdash c_1 ; \dots ; c_N : \unitary n\) (and also \(\vdash c_l^\dagger ; \dots ; c_1^\dagger : \unitary n\)) and that \((q', l, r) : j \to n\) is a valid evaluation context. Letting \(\evalu {q', l, r} {s} = [c'_1, \dots, c'_{l'}]\), by inductive hypothesis for terms we obtain \(\vdash c'_1 ; \dots ; c'_{l'} : \unitary{n}\). Therefore:
    \[ \vdash c_l^\dagger ; \dots ; c_1^\dagger ; c'_1 ; \dots ; c'_{l'} ; c_1 ; \dots ; c_l : \unitary n\]
  follows from the typing rule for composition.
\item Case \(\ket x\): Again by convention, the composition of the empty list is \(\id n\), which is trivially well-typed. It remains to show that \((q[\ket x / f(0)], l, r) : 0 \to n\) is an evaluation context, which requires \(l + r = m - 1\), but by hypothesis we have \(l + 1 + r = m\).
\item Case \(s\): Follows directly from the inductive hypothesis for terms.
\item Case \(p_1 \cdot p_2\): Suppose \(\vdash p_1 : \pattern j k\) and \(\vdash p_2 : \pattern i j\). Let \(\evalp {q,f} {p_1} = ([c_1, \dots, c_N], q')\). Then by inductive hypothesis we have that \((q', l, r) : j \to n\) is an evaluation context, and \(\vdash c_1 ; \dots ; c_N : \unitary{n}\). If \(\evalp {q', l, r} {p_2} = ([c'_1, \dots, c'_{N'}], q'')\), then by inductive hypothesis we have \((q'', l, r) : i \to n\) and \(\vdash c'_1 ; \dots ; c'_{N'} : \unitary{n}\), and so we are done by the typing rule for compositions.
\item Case \(p_1 \otimes p_2\): By assumption \(\vdash p_1 : \pattern {j_1} {k_1}\) and \(\vdash p_2 : \pattern{j_2} {k_2}\) such that \(k_1 + k_2 = k\). Then \(l + k_1 + k_2 + r = m\) and so we have that \((q, l, r + k_2) : k_1 \to n\) is an evaluation context. If \(\evalp {q, l, r + k_2}{p_1} = ([c_1, \dots, c_N], q')\), then by inductive hypothesis, \(\vdash c_1 ; \dots ; c_N : \unitary n\) and \((q', l, r + k_2) : j_1 \to n\) is an evaluation context. This implies that \(\vdash q' : \pattern {m'} {n}\) and \(l + j_1 + r + k_2 = m'\), and hence \((q', l + j_1, r) : k_2 \to n\) is also an evaluation context. Letting \(\evalp {q', l + j_1, r} {p_2} = ([c'_1, \dots, c'_{N'}], q'')\), by a second application of inductive hypothesis we have \(\vdash c'_1 ; \dots ; c'_{N'} : \unitary n\) and \((q'', l + j_1, r) : j_2 \to n\) is an evaluation context. By the typing rule for compositions we have:
  \[ \vdash c'_1 ; \dots ; c'_{N'} ; c_1 ; \dots ; c_N : \unitary n \]
and since \((q'', l + j_1, r) : j_2 \to n\), we have that \((q'', l, r) : j_1 + j_2 \to n\) is also an evaluation context, as required.
\end{itemize}

\section{Proofs of Section~\ref{s:semantics}} \label{app:semantics}

\begin{lemma}\label{lem:univ-monic}
	If an independent cospan $(f_1,f_2)$ is made up of dagger monomorphisms,
	then the mediating morphism $u$ obtained from an independent coproduct
	$(i_1, i_2)$ is also dagger monic.
\end{lemma}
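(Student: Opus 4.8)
The plan is a short diagram chase that leverages the joint epicness of the independent coproduct $(i_1, i_2)$ twice. First I would record the algebraic facts the hypotheses provide. Because $(i_1, i_2)$ is an independent coproduct, $i_1$ and $i_2$ are dagger monic, so $i_1\dg i_1 = \iid$ and $i_2\dg i_2 = \iid$, and the independence condition gives $i_1\dg i_2 = 0$ (hence $i_2\dg i_1 = 0$ by daggering). The assumption that $(f_1, f_2)$ is an independent cospan of dagger monomorphisms gives $f_1\dg f_1 = \iid$, $f_2\dg f_2 = \iid$, and $f_1\dg f_2 = 0$ (hence $f_2\dg f_1 = 0$). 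The mediating morphism $u$ is characterised by $u i_1 = f_1$ and $u i_2 = f_2$, and the goal is to prove $u\dg u = \iid_X$.

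The central step is to pin down $u\dg$ on the two legs by comparing a cleverly chosen composite against a coprojection adjoint. Consider $f_1\dg u \colon X \to X_1$. Precomposing with the two coprojections and using the facts above yields $f_1\dg u\, i_1 = f_1\dg f_1 = \iid_{X_1}$ and $f_1\dg u\, i_2 = f_1\dg f_2 = 0$. But $i_1\dg \colon X \to X_1$ satisfies exactly these two equations, namely $i_1\dg i_1 = \iid_{X_1}$ and $i_1\dg i_2 = 0$. Since $(i_1, i_2)$ is jointly epic, these agreements force $f_1\dg u = i_1\dg$; taking daggers gives $u\dg f_1 = i_1$. The symmetric argument with $f_2$ gives $u\dg f_2 = i_2$.

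Finally I would assemble these into the claim. Precomposing $u\dg u$ with the coprojections and substituting $u i_j = f_j$, we get $u\dg u\, i_1 = u\dg f_1 = i_1$ and $u\dg u\, i_2 = u\dg f_2 = i_2$, so $u\dg u$ agrees with $\iid_X$ on both legs; a second appeal to joint epicness yields $u\dg u = \iid_X$, i.e.\ $u$ is dagger monic. The only real subtlety — and the step I expect to be the crux — is the middle one: recognising that one should compare $f_1\dg u$ with $i_1\dg$ via joint epicness rather than trying to compute $u\dg$ directly, since in a bare dagger category with zero morphisms there is no ambient biproduct or addition of morphisms to fall back on, and joint epicness is the only structural handle available. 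Everything else is routine manipulation of the dagger-mono and independence identities.
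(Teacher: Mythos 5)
Your proof is correct and is essentially the paper's own argument: both verify the same four identities $f_a\dg u\, i_b = f_a\dg f_b = i_a\dg i_b$ and then cancel the coprojections on both sides using joint epicness (the paper packages the two cancellations as ``double joint epicness'', while you perform them sequentially via the intermediate identities $u\dg f_a = i_a$). The reorganisation is cosmetic; the underlying mechanics are identical.
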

\begin{proof}
	In this proof, we use the fact that an independent coproduct is jointly epic
	in two ways. If we have that $(i_1, i_2)$ is jointly epic, it means that
	\[
		\text{if we have }
		\left\{
			\begin{array}{l}
				g i_1 = h i_1 \\
				g i_2 = h i_2
			\end{array}
		\right.
		\text{, then }
		g = h;
	\]
	since we have a dagger category, the following holds:
	\[
		\text{if we have }
		\left\{
			\begin{array}{l}
			 	i_1\dg g' = i_1\dg h' \\
				i_2\dg g' = i_2\dg h'
			\end{array}
		\right.
		\text{, then }
		g' = h'.
	\]
	We can combine the two: if both $(i_1, i_2)$ and $(f'_1, f'_2)$ are jointly
	epic, it means that
	\[
		\text{if we have }
		\left\{
			\begin{array}{l}
				i_1\dg g f'_1 = i_1\dg h f'_1 \\
				i_2\dg g f'_1 = i_2\dg h f'_1 \\
				i_1\dg g f'_2 = i_1\dg h f'_2 \\
				i_2\dg g f'_2 = i_2\dg h f'_2
			\end{array}
		\right.
		\text{, then }
		\left\{
			\begin{array}{l}
				g f'_1 = h f'_1 \\
				g f'_2 = h f'_2
			\end{array}
		\right.
		\text{, then }
		g = h;
	\]
	we call this \emph{double joint epicness}.

	The following holds.
	\[
		\left\{
			\begin{array}{l}
				\iota_1\dg u\dg u \iota_1 = f_1\dg f_1 = \iid = \iota_1\dg \iota_1 \\
				\iota_1\dg u\dg u \iota_2 = f_1\dg f_2 = 0 = \iota_1\dg \iota_2 \\
				\iota_2\dg u\dg u \iota_1 = f_2\dg f_1 = 0 = \iota_2\dg \iota_1 \\
				\iota_2\dg u\dg u \iota_2 = f_2\dg f_2 = \iid = \iota_2\dg \iota_2 \\
			\end{array}
		\right.
	\]
	We conclude by double joint epicness that $u\dg u =
	\iid$, which is the definition of dagger monic.
\end{proof}

\begin{proof}[Proof of Lemma~\ref{lem:ic-choice}]
	The following holds.
	\[
		\left\{
			\begin{array}{l}
				p\dg cc\dg p = 0 = p\dg dd\dg p \\
				p\dg cc\dg c = 0 = p\dg dd\dg c \\
				d\dg cc\dg p = 0 = d\dg dd\dg p \\
				d\dg cc\dg c = d\dg c = d\dg dd\dg c
			\end{array}
		\right.
	\]
	We conclude by double joint epicness.
\end{proof}

\begin{lemma}\label{lem:med-composition}
	In a category with independent coproducts, if $(f,g)$ is an independent
	cospan and $h$ is a dagger monomorphism, then
	\[
		h \circ [f,g] = [hf,hg]
	\]
\end{lemma}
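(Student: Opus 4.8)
The plan is to exploit the universal property of the chosen independent coproduct $(\iota_1, \iota_2)$ that defines the bracket notation $[-,-]$, and to show that $h \circ [f,g]$ satisfies exactly the defining equations of $[hf, hg]$. Uniqueness of the mediating morphism then closes the argument.

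First I would check that $(hf, hg)$ is itself an independent cospan, so that the right-hand side $[hf,hg]$ is even well-defined. This is the only place the hypothesis on $h$ is genuinely needed: since $h$ is a dagger monomorphism we have $h\dg h = \iid$, and hence $(hf)\dg (hg) = f\dg h\dg h\, g = f\dg g = 0$, the last equality being the independence of $(f,g)$. I expect this preliminary verification—that post-composition with a dagger monomorphism preserves independence of cospans—to be the main (and essentially only) subtlety, and it is where dagger-monicity is indispensable.

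Next I would recall that $[hf,hg]$ is by definition the unique morphism $u$ satisfying $u\iota_1 = hf$ and $u\iota_2 = hg$. I would then verify that $h \circ [f,g]$ meets both equations: using the defining identities $[f,g]\iota_1 = f$ and $[f,g]\iota_2 = g$, associativity gives $(h[f,g])\iota_1 = h(f) = hf$ and likewise $(h[f,g])\iota_2 = hg$.

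Finally, by the uniqueness of the mediating morphism for the independent cospan $(hf,hg)$, I would conclude $h\circ[f,g] = [hf,hg]$. The body of the argument is purely formal manipulation of the universal property; no joint-epicness argument (as used for \cref{lem:univ-monic}) is required here, since I am verifying the triangles directly rather than establishing a new mediating morphism from scratch.
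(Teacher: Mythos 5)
Your proof is correct and follows essentially the same route as the paper's: both verify that $(hf,hg)$ is an independent cospan via $(hf)\dg hg = f\dg h\dg h g = f\dg g = 0$ (using dagger-monicity of $h$), then observe that $h\circ[f,g]$ satisfies the defining equations of the mediating morphism and conclude by uniqueness. Your write-up is just a more detailed spelling-out of the paper's two-line argument.
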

\begin{proof}
	The pair $(hf,hg)$ is indeed an independent cospan since $(hf)\dg hg = f\dg
	h\dg h g = f\dg g = 0$. Additionally, $h [f,g]$ is a mediating morphism,
	therefore we conclude by uniqueness.
\end{proof}

\subsection{Proof of Theorem~\ref{th:ic-smc}}

Let $\cat{C}$ be a dagger category with a zero object and independent coproduct
(meaning, a universal cospan of dagger monomorphisms that are jointly epic).
Fix an independent coproduct
\[
	A \stackrel{\iota_A}{\longrightarrow} A \oplus B \stackrel{\iota_B}{\longleftarrow} B
\]
for each pair of objects $A$ and $B$ of $\cat{C}$. We will prove that
$\big(\cat{C}, \oplus, O \big)$ is a dagger symmetric monoidal category.

\begin{proposition}\label{bifunctoriality}
	There is a dagger functor $\oplus \colon \cat{C} \times \cat{C} \to \cat{C}$.
\end{proposition}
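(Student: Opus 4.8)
The plan is to define $\oplus$ on morphisms by the evident formula and then verify functoriality and dagger-preservation using the universal property of the chosen independent coproducts together with joint epicness. For morphisms $f \colon A \to A'$ and $g \colon B \to B'$ I would set $f \oplus g \coloneq [\iota_{A'} f, \iota_{B'} g] \colon A \oplus B \to A' \oplus B'$. Before this is meaningful I must check that $(\iota_{A'} f, \iota_{B'} g)$ is an independent cospan, which is immediate since $(\iota_{A'} f)\dg(\iota_{B'} g) = f\dg(\iota_{A'}\dg \iota_{B'})g = 0$, using that the chosen cospan $(\iota_{A'}, \iota_{B'})$ is independent, i.e. $\iota_{A'}\dg \iota_{B'} = 0$. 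The mediating morphism with the chosen coproduct $(\iota_A, \iota_B)$ of $A \oplus B$ then exists and is characterised by $(f \oplus g)\iota_A = \iota_{A'} f$ and $(f\oplus g)\iota_B = \iota_{B'} g$.

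Functoriality should then be routine. For identities, both $\iid_A \oplus \iid_B = [\iota_A, \iota_B]$ and $\iid_{A \oplus B}$ satisfy the defining equations of the mediating morphism of $(\iota_A, \iota_B)$, so uniqueness gives $\iid_A \oplus \iid_B = \iid_{A\oplus B}$. For composition, given $f' \colon A' \to A''$ and $g' \colon B' \to B''$, I would show that $(f'\oplus g')(f\oplus g)$ satisfies the equations defining $(f'f)\oplus(g'g) = [\iota_{A''}f'f, \iota_{B''}g'g]$: precomposing with $\iota_A$ gives $(f'\oplus g')(\iota_{A'} f) = ((f'\oplus g')\iota_{A'})f = \iota_{A''}f' f$, and symmetrically for $\iota_B$, so uniqueness of the mediating morphism closes the case.

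The main obstacle is dagger-preservation, $(f \oplus g)\dg = f\dg \oplus g\dg$, because the dagger reverses directions, whereas the defining equations of a mediating morphism constrain \emph{precomposition} with the injections, not postcomposition. My plan is to reduce to the injections twice. Since $(\iota_{A'}, \iota_{B'})$ is jointly epic it suffices to prove $(f\oplus g)\dg \iota_{A'} = \iota_A f\dg$ and $(f\oplus g)\dg \iota_{B'} = \iota_B g\dg$, the right-hand sides being exactly $(f\dg \oplus g\dg)\iota_{A'}$ and $(f\dg \oplus g\dg)\iota_{B'}$ by the definition of $\oplus$ on morphisms. Dualising the first turns it into $\iota_{A'}\dg(f\oplus g) = f \iota_A\dg$, an equation between morphisms $A \oplus B \to A'$; here I invoke joint epicness of $(\iota_A, \iota_B)$, computing $\iota_{A'}\dg(f\oplus g)\iota_A = \iota_{A'}\dg \iota_{A'} f = f = f\iota_A\dg\iota_A$ and $\iota_{A'}\dg(f\oplus g)\iota_B = \iota_{A'}\dg \iota_{B'} g = 0 = f\iota_A\dg\iota_B$, where the relations $\iota_X\dg \iota_X = \iid$ (the injections are dagger monic) and $\iota_{A'}\dg\iota_{B'} = 0 = \iota_A\dg \iota_B$ do all the work.

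The symmetric computation handles the $\iota_{B'}$ component, and a final appeal to joint epicness of $(\iota_{A'}, \iota_{B'})$ yields $(f\oplus g)\dg = f\dg \oplus g\dg$. Together these establish that $\oplus$ is a dagger functor on $\cat{C} \times \cat{C}$. The only genuinely delicate point is the bookkeeping in the dagger step: one must resist trying to apply the mediating-morphism equations directly to $(f\oplus g)\dg$, and instead transpose and use joint epicness on the appropriate side.
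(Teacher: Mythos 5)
Your proposal is correct and follows essentially the same route as the paper: define $f \oplus g$ as the mediating morphism $[\iota_{A'}f, \iota_{B'}g]$, get functoriality from uniqueness, and reduce dagger-preservation to equations sandwiched between injections. The only cosmetic difference is that the paper packages the final step as a single appeal to ``double joint epicness'' (all four equations $\iota_X\dg(f\oplus g)\dg\iota_{Y'} = \iota_X\dg(f\dg\oplus g\dg)\iota_{Y'}$ at once), whereas you unpack it into two sequential applications of joint epicness with a dagger flip in between---the same argument, and you additionally make explicit the independence check $(\iota_{A'}f)\dg(\iota_{B'}g)=0$ that the paper leaves implicit.
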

\begin{proof}
	Let $f \colon A \to A'$ and $g \colon B \to B'$ be morphisms.
	Define $f \oplus g \colon A \oplus B \to A' \oplus B'$ to be the unique
	morphism satisfying $(f \oplus g) \iota_A = \iota_{A'} f$ and $(f \oplus g)
	\iota_B = \iota_{B'} g$.
	The uniqueness immediately shows that $1_A \oplus 1_B = 1_{A \oplus B}$.
	Let $f' \colon A' \to A''$ and $g \colon B' \to B''$ be additional
	morphisms. It again follows directly from the uniqueness property that $(f'
	\oplus g') (f \oplus g) = (f'f) \oplus (g'g)$.

	Additionally, we have:
	\begin{align*}
		\iota_A \dg (f\dg \oplus g\dg) \iota_{A'}
		&= \iota_A\dg \iota_A f\dg
		= f\dg = f\dg \iota_{A'}\dg \iota_{A'} \\
		&= (\iota_{A'} f)\dg \iota_{A'}
		= ((f \oplus g) \iota_A)\dg \iota_{A'}
		= \iota_A\dg (f \oplus g)\dg \iota_{A'} \\
		\iota_A \dg (f\dg \oplus g\dg) \iota_{B'}
		&= \iota_A\dg \iota_B g\dg
		= 0 = f\dg \iota_{A'}\dg \iota_{B'} \\
		&= (\iota_{A'} f)\dg \iota_{B'}
		= ((f \oplus g) \iota_A)\dg \iota_{B'}
		= \iota_A\dg (f \oplus g)\dg \iota_{B'} \\
		\iota_B \dg (f\dg \oplus g\dg) \iota_{A'}
		&= \iota_B\dg \iota_A f\dg
		= 0 = g\dg \iota_{B'}\dg \iota_{A'} \\
		&= (\iota_{B'} g)\dg \iota_{A'}
		= ((f \oplus g) \iota_B)\dg \iota_{A'}
		= \iota_B\dg (f \oplus g)\dg \iota_{A'} \\
		\iota_B \dg (f\dg \oplus g\dg) \iota_{B'}
		&= \iota_B\dg \iota_B g\dg
		= g\dg = g\dg \iota_{B'}\dg \iota_{B'} \\
		&= (\iota_{B'} g)\dg \iota_{B'}
		= ((f \oplus g) \iota_B)\dg \iota_{B'}
		= \iota_B\dg (f \oplus g)\dg \iota_{B'}
	\end{align*}
	by double joint epicness, $(f \oplus g)\dg = f\dg \oplus g\dg$.
\end{proof}

\begin{lemma}\label{unitors}
	There are natural dagger isomorphisms $\lambda_A \colon O \oplus A \to A$
	and $\rho_A \colon A \oplus O \to A$.
\end{lemma}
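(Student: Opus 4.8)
The plan is to exploit that when one summand is the zero object, the corresponding coproduct injection is \emph{already} a dagger isomorphism, and to take the unitor to be its dagger. Fix an object $A$ and consider the chosen independent coproduct $O \xrightarrow{\iota_O} O \oplus A \xleftarrow{\iota_A} A$. I would \emph{define} $\lambda_A \coloneq \iota_A\dg \colon O \oplus A \to A$ and then show $\iota_A$ is a dagger isomorphism, whence $\lambda_A$ is too. Since $\iota_A$ is a dagger monomorphism by the definition of independent coproduct, we already have $\iota_A\dg \iota_A = \iid_A$; the whole content is to establish the opposite composite $\iota_A \iota_A\dg = \iid_{O \oplus A}$. (One could instead define $\lambda_A$ via the universal property applied to the independent cospan $(0_{O,A}, \iid_A)$, but the direct route is cleaner and immediately yields the dagger.)

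The second step proves $\iota_A \iota_A\dg = \iid_{O\oplus A}$ by the \emph{double joint epicness} technique of Lemma~\ref{lem:univ-monic}: it suffices to verify the four equations $\iota_i\dg (\iota_A \iota_A\dg) \iota_j = \iota_i\dg \iota_j$ for $i,j \in \{O,A\}$. These reduce to bookkeeping with three facts: independence gives $\iota_O\dg \iota_A = 0$, and hence $\iota_A\dg \iota_O = (\iota_O\dg \iota_A)\dg = 0$ using $0\dg = 0$; dagger monicity gives $\iota_A\dg \iota_A = \iid_A$; and, crucially, because $O$ is a zero object its identity is the zero endomorphism, $\iid_O = 0_{O,O}$, so $\iota_O\dg \iota_O = \iid_O = 0$. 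For instance the $(A,A)$ entry reads $\iota_A\dg \iota_A \iota_A\dg \iota_A = \iid_A = \iota_A\dg \iota_A$, while the three entries involving $O$ collapse to $0$ on both sides.

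The third step is naturality. For $f \colon A \to A'$ I must show $\lambda_{A'} \circ (\iid_O \oplus f) = f \circ \lambda_A$, i.e.\ $\iota_{A'}\dg (\iid_O \oplus f) = f \iota_A\dg$ as maps $O \oplus A \to A'$. Since $(\iota_O, \iota_A)$ is jointly epic it suffices to precompose with each injection. Precomposing with $\iota_A$ and using $(\iid_O \oplus f)\iota_A = \iota_{A'} f$ from Proposition~\ref{bifunctoriality} gives $f$ on both sides; precomposing with $\iota_O$ and using $(\iid_O \oplus f)\iota_O = \iota_O \iid_O = 0$ together with $\iota_{A'}\dg \iota_O = 0$ gives $0$ on both sides. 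The right unitor $\rho_A \coloneq \iota_A\dg \colon A \oplus O \to A$, where now $\iota_A \colon A \to A \oplus O$, is handled identically, mutatis mutandis.

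The main obstacle is the single observation that makes the argument work, namely $\iid_O = 0_{O,O}$: without identifying the identity of the zero object with the zero morphism, the entries of the double-joint-epicness computation touching $\iota_O$ would fail to match. Everything else is routine dagger manipulation together with the already-established functoriality of $\oplus$ and the independence equations, so I expect no genuine difficulty beyond keeping the four $(i,j)$ cases straight.
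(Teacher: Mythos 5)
Your proof is correct, and it arrives at the same morphism as the paper ($\lambda_A = \iota_A\dg$) by a somewhat different route. The paper does not define $\lambda_A$ as $\iota_A\dg$ outright: it first applies the universal property of the independent coproduct to the independent cospan $(0_{O,A}, \iid_A)$, obtaining a mediating morphism $\lambda_A$ with $\lambda_A \iota_A = \iid_A$, and then gets the hard identity essentially for free from \cref{lem:univ-monic} --- that lemma makes $\lambda_A$ dagger monic, and once $\lambda_A$ is identified with $\iota_A\dg$ (e.g.\ via $\iota_A = \lambda_A\dg\lambda_A\iota_A = \lambda_A\dg$), dagger monicity of $\lambda_A$, i.e.\ $\lambda_A\dg\lambda_A = \iid_{O\oplus A}$, is precisely your $\iota_A\iota_A\dg = \iid_{O\oplus A}$. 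You instead verify that identity by hand with the double-joint-epicness computation; since \cref{lem:univ-monic} is itself proved by exactly this technique, you have in effect inlined its proof for this special case. The trade-offs: the paper's argument is shorter because it reuses an existing lemma and exhibits the unitor as a mediating morphism (parallel to the usual construction of unitors for ordinary coproducts), while yours is self-contained, needs no universal property, and isolates the one place the zero object is genuinely essential, namely $\iota_O\dg\iota_O = \iid_O = 0_{O,O}$ in the $(O,O)$ entry (equivalently, $\iota_O = 0$ because $O$ is initial). Your naturality check by joint epicness is likewise a minor variant of the paper's, which instead rewrites $\lambda_B(\iid \oplus f) = \lambda_B(\iid\oplus f)\iota_A\lambda_A = \lambda_B\iota_B f \lambda_A = f\lambda_A$ using the two-sided invertibility just established; both are routine and correct.
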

\begin{proof}
	Write $(O \oplus A,0,\iota)$ for the chosen independent coproduct of $0$ and $A$.
	Since $(A,0,\iid)$ is another cospan of $0$ and $A$, there is a unique
	dagger monomorphism (see Lemma~\ref{lem:univ-monic}) $\lambda_A \colon O
	\oplus A \to A$ satisfying $\lambda \iota = \iid$. Now $\lambda = \lambda
	\iota\iota^\dag = \iota^\dag$, so $\iota^\dag \iota = \lambda \iota = \iid$.
	Therefore $\lambda=\iota^\dag$ is a dagger isomorphism.
	\[\begin{tikzcd}[ampersand replacement=\&, column sep=2cm, row sep=1cm]
		O \& {O \oplus A} \& A \\
		O \& {O \oplus B} \& B
		\arrow["\iid"', from=1-1, to=2-1]
		\arrow["0"', from=1-1, to=1-2]
		\arrow["{\lambda_A}"', shift right, from=1-2, to=1-3]
		\arrow["{\iid \oplus f}"{description}, dashed, from=1-2, to=2-2]
		\arrow["{\iota_A}"', shift right, from=1-3, to=1-2]
		\arrow["f", from=1-3, to=2-3]
		\arrow["0", from=2-1, to=2-2]
		\arrow["{\lambda_B}", shift left, from=2-2, to=2-3]
		\arrow["{\iota_B}", shift left, from=2-3, to=2-2]
	\end{tikzcd}\]

	If $f \colon A \to B$, then by definition $(\iid \oplus f) \iota_A = \iota_B
	f$. Therefore
	\[
		\lambda_B (\iid \oplus f)
		= \lambda_B (\iid \oplus f) \iota_A \lambda_A
		= \lambda_B \iota_B f \lambda_A
		= f \lambda_A \text.
	\]
	Thus $\lambda$ is natural.
	The proof for $\rho$ is similar.
\end{proof}

\begin{lemma}\label{symmetry}
	There are natural dagger isomorphisms $\sigma_{A,B} \colon A \oplus B \to B
	\oplus A$ satisfying $\sigma_{B,A} \sigma_{A,B} = 1$.
\end{lemma}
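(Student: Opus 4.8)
The plan is to build $\sigma_{A,B}$ from the universal property of the chosen independent coproduct $A \oplus B$. Write $\iota_1 \colon A \to A \oplus B$, $\iota_2 \colon B \to A \oplus B$ for its injections, and $\jmath_1 \colon B \to B \oplus A$, $\jmath_2 \colon A \to B \oplus A$ for the injections of the chosen coproduct $B \oplus A$. The first observation is that $(\jmath_2, \jmath_1)$ is an independent cospan into $B \oplus A$: independence $\jmath_2\dg \jmath_1 = 0$ is obtained by daggering the defining condition $\jmath_1\dg \jmath_2 = 0$ of the independent coproduct $B \oplus A$. I therefore define $\sigma_{A,B} \coloneq [\jmath_2, \jmath_1] \colon A \oplus B \to B \oplus A$, the mediating morphism, characterised by $\sigma_{A,B}\iota_1 = \jmath_2$ and $\sigma_{A,B}\iota_2 = \jmath_1$. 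Symmetrically, $\sigma_{B,A} = [\iota_2, \iota_1]$ is characterised by $\sigma_{B,A}\jmath_1 = \iota_2$ and $\sigma_{B,A}\jmath_2 = \iota_1$.

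Next I would establish $\sigma_{B,A}\sigma_{A,B} = \iid$ by precomposing with each injection of $A \oplus B$ and invoking joint epicness: $\sigma_{B,A}\sigma_{A,B}\iota_1 = \sigma_{B,A}\jmath_2 = \iota_1$ and likewise $\sigma_{B,A}\sigma_{A,B}\iota_2 = \iota_2$, so the composite agrees with $\iid$ on $\iota_1, \iota_2$ and hence equals $\iid$. By the same argument with the roles of $A$ and $B$ interchanged, $\sigma_{A,B}\sigma_{B,A} = \iid$ as well, so $\sigma_{A,B}$ is an isomorphism with two-sided inverse $\sigma_{B,A}$. To upgrade this to a dagger isomorphism, I would note that $\sigma_{A,B}$ is dagger monic: it is the mediating morphism of an independent cospan $(\jmath_2,\jmath_1)$ of dagger monomorphisms, so Lemma~\ref{lem:univ-monic} applies and gives $\sigma_{A,B}\dg\sigma_{A,B} = \iid$. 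Combining this with the inverse just found yields $\sigma_{A,B}\dg = \sigma_{A,B}\dg\sigma_{A,B}\sigma_{B,A} = \sigma_{B,A}$, so $\sigma_{A,B}\dg = \sigma_{B,A} = \sigma_{A,B}^{-1}$ and $\sigma_{A,B}$ is a dagger isomorphism.

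Finally, for naturality I would check, for $f \colon A \to A'$ and $g \colon B \to B'$, that $\sigma_{A',B'}(f \oplus g) = (g \oplus f)\sigma_{A,B}$ by a joint-epicness diagram chase (writing primes for the injections of the primed coproducts). Precomposing the left side with $\iota_1$ and using the functoriality equation $(f\oplus g)\iota_1 = \iota_1' f$ from Proposition~\ref{bifunctoriality} gives $\sigma_{A',B'}\iota_1' f = \jmath_2' f$, while precomposing the right side with $\iota_1$ gives $(g\oplus f)\jmath_2 = \jmath_2' f$; these agree. The case of $\iota_2$ is symmetric, so joint epicness of $(\iota_1,\iota_2)$ forces the two morphisms to coincide. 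I expect the only real friction to be notational: the injection symbols and the mediating-morphism notation each refer to two different coproducts at once, so the main care is in keeping straight which coproduct each $\iota$ and $\jmath$ belongs to. The genuinely load-bearing steps are the single dagger identity $\jmath_2\dg\jmath_1 = 0$ and the appeal to Lemma~\ref{lem:univ-monic} to secure daggerness rather than mere invertibility.
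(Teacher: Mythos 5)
Your proposal is correct and follows essentially the same route as the paper's proof: you define $\sigma_{A,B}$ as the mediating morphism of the swapped independent cospan $(\jmath_2,\jmath_1)$, obtain $\sigma_{B,A}\sigma_{A,B} = \iid$ from joint epicness, secure dagger monicity via Lemma~\ref{lem:univ-monic}, and verify naturality by the same injection-by-injection chase using the functoriality equations of Proposition~\ref{bifunctoriality}. The only difference is that you spell out the upgrade from dagger monomorphism plus two-sided invertibility to dagger isomorphism ($\sigma_{A,B}\dg = \sigma_{B,A}$), which the paper leaves implicit.
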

\begin{proof}
	If $(A \oplus B,i_1,i_2)$ and $(B \oplus A,j_1,j_2)$ are the chosen
	independent coproducts, then $(B \oplus A, j_2, j_1)$ is an independent
	cospan as well, so there is a unique dagger monomorphism
	(see Lemma~\ref{lem:univ-monic}) $\sigma_{A,B} \colon A \oplus B \to B \oplus
	A$ satisfying $j_2 = \sigma i_1$ and $j_1 = \sigma i_2$. It follows from
	uniqueness of mediating morphisms between independent coproducts that
	$\sigma_{B,A} \sigma_{A,B} = \iid$.
	\[\begin{tikzcd}[ampersand replacement=\&]
		A \&\& B \&\& A \\
		\& {A \oplus B} \&\& {B \oplus A} \\
		\& {A' \oplus B'} \&\& {B' \oplus A'} \\
		{A'} \&\& {B'} \&\& {A'}
		\arrow["\iid"{description}, curve={height=-18pt}, from=1-1, to=1-5]
		\arrow["f"', from=1-1, to=4-1]
		\arrow["g"{description}, from=1-3, to=4-3]
		\arrow["f", from=1-5, to=4-5]
		\arrow["{i_1}"', from=1-1, to=2-2]
		\arrow["{i_2}", from=1-3, to=2-2]
		\arrow["{\sigma_{A,B}}"{description}, from=2-2, to=2-4]
		\arrow["{f \oplus g}"{description}, from=2-2, to=3-2]
		\arrow["{j_1}"', from=1-3, to=2-4]
		\arrow["{j_2}", from=1-5, to=2-4]
		\arrow["{g \oplus f}"{description}, from=2-4, to=3-4]
		\arrow["{\sigma_{A',B'}}"{description}, from=3-2, to=3-4]
		\arrow["{i'_1}", from=4-1, to=3-2]
		\arrow["{i'_2}"', from=4-3, to=3-2]
		\arrow["{j'_1}", from=4-3, to=3-4]
		\arrow["{j'_2}"', from=4-5, to=3-4]
		\arrow["\iid"{description}, curve={height=18pt}, from=4-1, to=4-5]
	\end{tikzcd}\]
	If we have $f \colon A \to A'$ and $g \colon B \to B'$, and we write $(A'
	\oplus B',i'_1,i'_2)$ and $(B' \oplus A',j'_1,j'_2)$ for the chosen
	independent coproducts, then
	\begin{align*}
		(g \oplus f) \sigma_{A,B} i_1
		&= (g \oplus f) j_2
		= j'_2 f
		= \sigma_{A',B'} i'_1 f
		= \sigma_{A',B'} (f \oplus g) i_1 \text, \\
		(g \oplus f) \sigma_{A,B} i_2
		&= (g \oplus f) j_1
		= j'_1 g
		= \sigma_{A',B'} i'_2 g
		= \sigma_{A',B'} (f \oplus g) i_2 \text.
	\end{align*}
	It now follows from joint epicness of $(i_1, i_2)$ that $(g \oplus f)
	\sigma_{A,B}= \sigma_{B',A'}(f \oplus g)$. Thus $\sigma$ is natural.
\end{proof}

\begin{lemma}\label{associators}
	There are natural dagger isomorphisms $\alpha_{A,B,C} \colon A \oplus (B
	\oplus C) \to (A \oplus B) \oplus C$.
\end{lemma}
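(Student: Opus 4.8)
The plan is to follow the same pattern as \cref{unitors,symmetry}: build $\alpha_{A,B,C}$ as a nested mediating morphism, then show it is a natural dagger isomorphism purely by joint epicness. Fix the chosen independent coproducts $(A\oplus B, i_1, i_2)$, $(B\oplus C, j_1, j_2)$, $(A\oplus(B\oplus C), \iota_1, \iota_2)$ and $((A\oplus B)\oplus C, \kappa_1, \kappa_2)$. The guiding idea is that each of $A$, $B$, $C$ includes into $(A\oplus B)\oplus C$, via $\kappa_1 i_1$, $\kappa_1 i_2$ and $\kappa_2$ respectively, and $\alpha$ should reassemble the domain copy of $A\oplus(B\oplus C)$ along these. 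Concretely, I would first note that $(\kappa_1 i_2, \kappa_2)$ is an independent cospan, since $(\kappa_1 i_2)\dg \kappa_2 = i_2\dg \kappa_1\dg \kappa_2 = 0$ using $\kappa_1\dg \kappa_2 = 0$, so it has a mediating morphism $[\kappa_1 i_2, \kappa_2]\colon B\oplus C \to (A\oplus B)\oplus C$ along $(j_1, j_2)$. Next, $(\kappa_1 i_1, [\kappa_1 i_2, \kappa_2])$ is again an independent cospan: composing $(\kappa_1 i_1)\dg [\kappa_1 i_2, \kappa_2]$ with $j_1$ and $j_2$ yields $i_1\dg \kappa_1\dg \kappa_1 i_2 = i_1\dg i_2 = 0$ and $i_1\dg \kappa_1\dg \kappa_2 = 0$, so it vanishes by joint epicness of $(j_1, j_2)$. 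I then define $\alpha_{A,B,C} \coloneq [\kappa_1 i_1, [\kappa_1 i_2, \kappa_2]]$ along $(\iota_1, \iota_2)$.

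For the isomorphism property, since $i_1, i_2, \kappa_1, \kappa_2$ are dagger monic so are $\kappa_1 i_1$, $\kappa_1 i_2$ and $\kappa_2$; hence $[\kappa_1 i_2, \kappa_2]$ is dagger monic by \cref{lem:univ-monic}, and therefore so is $\alpha$, by a second application of \cref{lem:univ-monic}. This gives $\alpha\dg \alpha = \iid$. For the other direction I would use that the three maps $e_1 = \kappa_1 i_1$, $e_2 = \kappa_1 i_2$, $e_3 = \kappa_2$ are jointly epic, which follows from joint epicness of $(\kappa_1, \kappa_2)$ and of $(i_1, i_2)$. Writing the three domain inclusions as $d_1 = \iota_1$, $d_2 = \iota_2 j_1$, $d_3 = \iota_2 j_2$, the mediating-morphism identities give $\alpha d_m = e_m$ for each $m$. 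Hence $\alpha\alpha\dg e_m = \alpha\alpha\dg\alpha d_m = \alpha d_m = e_m$, and joint epicness of $(e_1, e_2, e_3)$ yields $\alpha\alpha\dg = \iid$, so $\alpha$ is a dagger isomorphism.

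For naturality, given $f\colon A\to A'$, $g\colon B\to B'$, $h\colon C\to C'$ I would prove $\alpha_{A',B',C'}\circ(f\oplus(g\oplus h)) = ((f\oplus g)\oplus h)\circ\alpha_{A,B,C}$ by precomposing with $d_1, d_2, d_3$ and invoking their joint epicness. Each of the three checks collapses, via the defining equations of $\oplus$ on morphisms (\cref{bifunctoriality}) and the mediating-morphism identities, to a common value; for instance precomposition with $d_1$ reduces both sides to $\kappa_1' i_1' f$ into $(A'\oplus B')\oplus C'$, and the other two are analogous.

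The main obstacle is organisational rather than conceptual: one must juggle four distinct chosen independent coproducts and apply joint epicness at the correct level each time. The only genuinely computational steps are the two nested independent-cospan verifications, which, as above, reduce to the orthogonality relations $\kappa_1\dg \kappa_2 = 0$, $i_1\dg i_2 = 0$ together with $\kappa_1\dg \kappa_1 = \iid$; once these are established, the dagger-isomorphism and naturality arguments repeat exactly the joint-epicness template already used in \cref{unitors,symmetry}. I note that coherence (pentagon, triangle) is not required for this lemma, which asserts only the existence of the natural isomorphisms $\alpha_{A,B,C}$.
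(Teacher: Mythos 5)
Your proposal is correct, and its core coincides with the paper's proof: your $\alpha = [\kappa_1 i_1, [\kappa_1 i_2, \kappa_2]]$ is literally the paper's $\alpha$ (the paper's auxiliary morphism $n$ is your inner mediating morphism, and its defining equations $\alpha i_1 = j'_1 i'_1$, $\alpha i_2 = n$ match yours), and your naturality argument---precompose with the three inclusions $\iota_1$, $\iota_2 j_1$, $\iota_2 j_2$ and apply joint epicness of $(j_1,j_2)$ and then of $(\iota_1,\iota_2)$---is exactly what the paper does. Where you genuinely diverge is the dagger-isomorphism step. The paper constructs the inverse explicitly: it builds two further mediating morphisms, $m \colon A \oplus B \to A \oplus (B \oplus C)$ and $\beta \colon (A\oplus B)\oplus C \to A \oplus (B\oplus C)$ (the mirror image of your construction), proves $\beta\alpha = \iid$ by the same iterated joint epicness, and then combines this with dagger monicity of $\alpha$ and $\beta$ to conclude $\alpha\dg = \beta$. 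You avoid constructing $\beta$ altogether: from $\alpha\dg\alpha = \iid$ (two applications of \cref{lem:univ-monic}) and the identities $\alpha d_m = e_m$, you get $\alpha\alpha\dg e_m = e_m$ and conclude $\alpha\alpha\dg = \iid$ by joint epicness of the family $(\kappa_1 i_1, \kappa_1 i_2, \kappa_2)$---which does follow, as you say, by nesting the joint epicness of $(i_1,i_2)$ inside that of $(\kappa_1,\kappa_2)$, the same two-level argument the paper itself uses for naturality. Your route is slightly more economical (two mediating morphisms instead of four, and no mirrored construction), while the paper's route yields an explicit description of $\alpha^{-1}$ as a mediating morphism, which can be convenient downstream but is not needed for the statement. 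Both arguments are sound.
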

\begin{proof}
	Write $(B \oplus C,j_1,j_2)$, $\big(A \oplus (B \oplus C), i_1,i_2\big)$,
	$(A \oplus B, i'_1,i'_2)$, and $\big((A \oplus B) \oplus C, j'_1,j'_2)$ for
	the chosen independent coproducts.
	Because $\big( A \oplus (B \oplus C), i_1, i_2 j_1 \big)$ is an independent
	cospan of $A$ and $B$, there is a unique dagger monomorphism $m \colon A
	\oplus B \to A \oplus (B \oplus C)$ satisfying $m i'_1 = i_1$ and $m i'_2
	= i_2 j_1$. Similarly, there is a unique dagger monomorphism
	$\beta_{A,B,C} \colon (A \oplus B) \oplus C \to A \oplus (B \oplus C)$
	satisfying $\beta j'_1 = m$ and $\beta j'_2 = i_2 j_2$. In the same way,
	there are unique dagger monomorphisms $n \colon B \oplus C \to (A \oplus B)
	\oplus C$ and $\alpha \colon A \oplus (B \oplus C) \to (A \oplus B) \oplus
	C$ satisfying $n j_1 = j'_1 i'_2$, $n j_2 = j'_2$, $\alpha i_1 = j'_1 i'_1$
	and $\alpha i_2 = n$.
	Now
	\begin{align*}
		\beta \alpha i_1 &= \beta j'_1 i'_1 = m i'_1 = i_1 \text, \\
		\beta \alpha i_2 j_1 &= \beta n j_1 = \beta j'_1 i'_2 = m i'_2 = i_2 j_1 \text, \\
		\beta \alpha i_2 j_2 &= \beta n j_2 = \beta j'_2 = i_2 j_2 \text.
	\end{align*}
	since $(j_1, j_2)$ is jointly epic, the second and third line about give
	$\beta \alpha i_2 = i_2$ and because $(i_1, i_2)$ is jointly epic, we have
	$\beta \alpha = \iid$. It follows that $\alpha^\dag = \beta \alpha
	\alpha^\dag = \beta$, and that $\alpha$ is dagger isomorphism.

	\[\begin{tikzcd}[ampersand replacement=\&, xscale=2]
		A \&\& B \&\& C \\
		\&\&\& {B \oplus C} \\
		\& {A \oplus (B \oplus C)} \\
		\&\&\& {(A \oplus B) \oplus C} \\
		\& {A \oplus B} \\
		A \&\& B \&\& C \\
		{A'} \&\& {B'} \&\& {C'} \\
		\&\&\& {B' \oplus C'} \\
		\& {A' \oplus (B' \oplus C')} \\
		\&\&\& {(A' \oplus B') \oplus C'} \\
		\& {A' \oplus B'} \\
		{A'} \&\& {B'} \&\& {C'}
		\arrow["\iid", from=1-1, to=6-1]
		\arrow["\iid"', from=1-5, to=6-5]
		\arrow["{j_1}"', to=2-4, from=1-3]
		\arrow["{j_2}"', to=2-4, from=1-5]
		\arrow["{g \oplus h}"{description, pos=0.4}, curve={height=-14pt}, dashed, from=2-4, to=8-4]
		\arrow["{i_1}", to=3-2, from=1-1]
		\arrow["{i_2}"{description}, to=3-2, from=2-4]
		\arrow["\alpha", shift left, dashed, from=3-2, to=4-4]
		\arrow["m"{description}, dashed, to=3-2, from=5-2]
		\arrow["{f \oplus (g \oplus h)}"{description, pos=0.7}, curve={height=-14pt}, dashed, from=3-2, to=9-2]
		\arrow["n"{description}, dashed, to=4-4, from=2-4]
		\arrow["\beta", shift left, dashed, from=4-4, to=3-2]
		\arrow["{j'_1}"{description}, to=4-4, from=5-2]
		\arrow["{j'_2}"{description, pos=0.7}, to=4-4, from=6-5]
		\arrow["{(f \oplus g) \oplus h}"{description, pos=0.3}, curve={height=14pt}, dashed, from=4-4, to=10-4]
		\arrow["{i'_1}"', to=5-2, from=6-1]
		\arrow["{i'_2}", to=5-2, from=6-3]
		\arrow["{f \oplus g}"{description, pos=0.6}, curve={height=14pt}, dashed, from=5-2, to=11-2]
		\arrow["f"', from=6-1, to=7-1]
		\arrow["g", from=6-3, to=7-3]
		\arrow["h", from=6-5, to=7-5]
		\arrow["\iid"', from=7-1, to=12-1]
		\arrow["\iid", from=7-5, to=12-5]
		\arrow["{j''_1}"{description}, to=8-4, from=7-3]
		\arrow["{j''_2}"', to=8-4, from=7-5]
		\arrow["{i''_1}"{description}, to=9-2, from=7-1]
		\arrow["{i''_2}"', to=9-2, from=8-4]
		\arrow["{\alpha'}"{description}, dashed, from=9-2, to=10-4]
		\arrow["{m'}"{description}, dashed, to=9-2, from=11-2]
		\arrow["{j'''_1}"', to=10-4, from=11-2]
		\arrow["{j'''_2}"', to=10-4, from=12-5]
		\arrow["{i'''_1}", to=11-2, from=12-1]
		\arrow["{i'''_2}"', to=11-2, from=12-3]
	\end{tikzcd}\]

	Let $f \colon A \to A'$, $g \colon B \to B'$, and $h \colon C \to C'$ be
	morphisms, and write the chosen independent coproducts of $A'$,
	$B'$, and $C'$ as above. Then
	\begin{align*}
		((f \oplus g)\oplus h) \alpha i_2 j_1
		&= ((f \oplus g)\oplus h) n j_1
		= ((f \oplus g)\oplus h) j'_1 i'_2
		= j'''_1 (f \oplus g) i'_2
		= j'''_1 i'''_2 g \\
		&= \alpha' m' i'''_2 g
		= \alpha' i''_2 j''_1 g
		= \alpha' i''_2 (g \oplus h) j_1
		= \alpha' (f \oplus (g \oplus h)) i_2 j_1 \text, \\
		((f \oplus g)\oplus h) \alpha i_2 j_2
		&= ((f \oplus g)\oplus h) n j_2
		= ((f \oplus g)\oplus h) j'_2
		= j'''_2 h \\
		&= \alpha' i''_2 j''_2 h
		= \alpha' i''_2 (g \oplus h) j_2
		= \alpha' (f \oplus (g \oplus h)) i_2 j_2 \text.
	\end{align*}
	Because $(j_1, j_2)$ is jointly epic, it implies that
	\[
		((f \oplus g)\oplus h) \alpha i_2 = \alpha' (f \oplus (g \oplus h)) i_2 \text.
	\]
	Additionally
	\begin{align*}
		((f \oplus g)\oplus h) \alpha i_1
		&= ((f \oplus g)\oplus h) j'_1 i'_1
		= j'''_1 (f \oplus g) i'_1
		= j'''_1 i'''_1 f \\
		&= \alpha' m' i'''_1 f
		= \alpha' i''_1 f
		= \alpha' (f \oplus (g \oplus h)) i_1 \text.
	\end{align*}
	Since $(i_1, i_2)$ is jointly epic, it implies that $((f \oplus g) \oplus h)
	\alpha = \alpha' (f \oplus (g \oplus h))$. That is, $\alpha$ is natural.
\end{proof}

Note that, since $\cat C$ is a dagger category, independent coproduct are
actually independent \emph{biproducts}. Indeed, let $(i_1, i_2)$ be an
independent coproduct and $(f_1, f_2)$ be a \emph{span} this time:
\[
	\begin{tikzcd}
		A & C & B
		\arrow["f_1"', from=1-2, to=1-1]
		\arrow["f_2", from=1-2, to=1-3]
	\end{tikzcd}
\]
We know that there is a unique $u$ such that the following diagram commutes:
\[
	\begin{tikzcd}[column sep=large, row sep=large]
		A & A \oplus B & B \\
		& C
		\arrow["i_1", from=1-1, to=1-2]
		\arrow["i_2"', from=1-3, to=1-2]
		\arrow["f_1\dg"', from=1-1, to=2-2]
		\arrow["f_2\dg", from=1-3, to=2-2]
		\arrow["{u}"{description}, dashed, to=2-2, from=1-2]
	\end{tikzcd}
\]
which implies that $u\dg$ is the unique morphism such that the following diagram commutes:
\[
	\begin{tikzcd}[column sep=large, row sep=large]
		A & A \oplus B & B \\
		& C
		\arrow["i_1\dg"', from=1-2, to=1-1]
		\arrow["i_2\dg", from=1-2, to=1-3]
		\arrow["f_1", from=2-2, to=1-1]
		\arrow["f_2"', from=2-2, to=1-3]
		\arrow["{u\dg}"{description}, dashed, to=1-2, from=2-2]
	\end{tikzcd}
\]
thus making $(i_1\dg, i_2\dg)$ an independent \emph{product}. In other words,
the description of an independent coproduct is equivalent to a description of
an independent product.

\begin{proposition}
	If $\cat{C}$ is a dagger category, then any choice of zero object $O$ and
	independent (co)products $A \oplus B$ makes $\cat{C}$ into a dagger
	symmetric monoidal category.
\end{proposition}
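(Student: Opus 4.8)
The plan is to assemble the structural data already constructed in the preceding lemmas and to verify the coherence axioms that promote it to a symmetric monoidal structure. Concretely, Proposition~\ref{bifunctoriality} gives the dagger functor $\oplus \colon \cat C \times \cat C \to \cat C$, Lemma~\ref{unitors} the unitors $\lambda_A \colon O \oplus A \to A$ and $\rho_A \colon A \oplus O \to A$, Lemma~\ref{symmetry} the symmetry $\sigma_{A,B}$ together with $\sigma_{B,A}\sigma_{A,B} = \iid$, and Lemma~\ref{associators} the associator $\alpha_{A,B,C}$. Each of these natural transformations has already been shown to be a dagger isomorphism, and $\oplus$ is a dagger functor, so once the coherence conditions are established the conclusion that $(\cat C, \oplus, O)$ is dagger symmetric monoidal is immediate.

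First I would record the single technical device that drives every remaining verification: iterated joint epicness. Every coherence morphism between $n$-fold independent coproducts is the unique morphism determined by its composites with the $n$ injections, because independent coproducts are jointly epic and joint epicness propagates through nesting: a composite $\iota_2 j_1$ of injections is again jointly epic with the remaining injections. Thus two morphisms out of $A \oplus (B \oplus C)$ agree as soon as they agree after precomposition with $i_1$, $i_2 j_1$ and $i_2 j_2$, and similarly for higher coproducts. This is exactly the strategy used in the naturality arguments of Lemmas~\ref{symmetry} and~\ref{associators}, and Lemma~\ref{lem:med-composition} lets one push a dagger monomorphism through a mediating morphism when convenient.

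The heart of the proof is then the triangle, pentagon, and hexagon identities, each checked by precomposing both sides with the appropriate family of injections and rewriting using the defining equations of $\lambda$, $\rho$, $\sigma$, $\alpha$. For the triangle, the two composites $A \oplus (O \oplus B) \to A \oplus B$ are compared after precomposition with $i_1$ and $i_2$; using $\alpha i_1 = j'_1 i'_1$, $\alpha i_2 = n$ and the unitor equations (and the fact that the injection out of the zero object $O$ is the zero map) both sides collapse to the same injection. The pentagon is handled identically on the fourfold coproduct $A \oplus (B \oplus (C \oplus D))$: precomposing with the four injection paths, repeated use of $\alpha i_1 = j'_1 i'_1$ and $\alpha i_2 = n$ reduces both the two-step and three-step associator routes to identical composites of injections. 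The two hexagon identities follow the same pattern, combining the symmetry equations $j_2 = \sigma i_1$, $j_1 = \sigma i_2$ with those for $\alpha$; since $\sigma$ is self-inverse, one hexagon suffices.

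The main obstacle is the bookkeeping in the pentagon rather than any conceptual point: the fourfold coproduct carries a proliferation of injection names, and one must track each injection path through both associator composites without error. No new idea is required beyond iterated joint epicness and the defining equations already in hand; the difficulty is purely organisational, of exactly the kind of routine diagram chase already carried out for naturality in Lemma~\ref{associators}.
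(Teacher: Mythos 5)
Your proposal is correct, and its skeleton---assemble Proposition~\ref{bifunctoriality} and Lemmas~\ref{unitors}, \ref{symmetry}, \ref{associators}, then verify the triangle, pentagon, and hexagon identities component-wise---is the same as the paper's. The one real difference is which side of the (co)product you compute on. You stay entirely on the coproduct side: precompose with injections and invoke iterated joint epicness (noting, correctly, that the $O$-component is automatic because the unique map out of a zero object is zero), which is exactly the technique the paper itself uses for the naturality arguments in Lemmas~\ref{symmetry} and~\ref{associators}. The paper instead exploits the remark preceding the proposition---that in a dagger category independent coproducts are independent \emph{products}, with projections $p^{A,B}_A = \iota_A^\dagger$---and checks the triangle by postcomposing with projections and appealing to the universal property of the product; this dualisation lets it dispatch the pentagon and hexagon by citing the analogous algebraic argument for cartesian categories rather than writing them out. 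The two routes are dagger-duals of one another, so they carry identical mathematical content: yours buys a self-contained argument needing no external reference (at the cost of the fourfold-coproduct bookkeeping you flag for the pentagon), while the paper's buys brevity and a direct parallel with a known cartesian-category proof.
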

\begin{proof}
	Combine \Cref{bifunctoriality,unitors,symmetry,associators} with the
	following. Write $p^{A,B}_A \colon A \oplus B \to A$ and $p^{A,B}_B \colon
	A \oplus B \to B$ for the chosen independent products.
	The triangle, pentagon, and hexagon equations can be proven algebraically
	exactly as in the case of cartesian
	categories~\cite[Example~5.3.6]{yanofsky:monoidalcategories}. We show the
	triangle equation as an example.
	\[\begin{tikzcd}[ampersand replacement=\&, row sep=5mm]
		A \&\& O \&\& B \\
		\&\&\& {O \oplus B} \\
		\& {A \oplus (O \oplus B)} \\
		\&\&\& {(A \oplus O) \oplus B} \\
		\& {A \oplus O} \\
		A \&\&\&\& B \\
		\& O \\
		A \&\& {A \oplus B} \&\& B
		\arrow["\iid"', from=1-1, to=6-1]
		\arrow["\iid", from=1-5, to=6-5]
		\arrow["0"', from=2-4, to=1-3]
		\arrow["\lambda", from=2-4, to=1-5]
		\arrow["p^{A,O \oplus B}_A"{description}, from=3-2, to=1-1]
		\arrow["p^{A,O\oplus B}_{O \oplus B}"{description}, from=3-2, to=2-4]
		\arrow["\alpha"', dashed, from=3-2, to=4-4]
		\arrow["m"', dashed, from=3-2, to=5-2]
		\arrow["{1 \oplus \lambda}"{pos=0.6}, dashed, from=3-2, to=8-3]
		\arrow["p^{A \oplus O,B}_{A \oplus O}"{description}, from=4-4, to=5-2]
		\arrow["p^{A \oplus O,B}_B"{description}, from=4-4, to=6-5]
		\arrow["\rho \oplus 1", dashed, from=4-4, to=8-3]
		\arrow["\rho"', from=5-2, to=6-1]
		\arrow["0", from=5-2, to=7-2]
		\arrow["\iid"', from=6-1, to=8-1]
		\arrow["\iid"', from=6-5, to=8-5]
		\arrow["0", from=7-2, to=8-1]
		\arrow["{p^{A,B}_A}", from=8-3, to=8-1]
		\arrow["{p^{A,B}_B}"', from=8-3, to=8-5]
	\end{tikzcd}\]
	In the above diagram:
	\begin{align*}
		p^{A,B}_A (\rho \oplus \iid) \alpha
		& = \rho p^{A \oplus O,B}_{A \oplus O} \alpha
		= \rho m
		= p^{A,O \oplus B}_A
		= p^{A,B}_A (\iid \oplus \lambda)\text, \\
		p^{A,B}_B (\rho \oplus \iid) \alpha
		& = p^{A \oplus O,B}_B \alpha
		= \lambda p^{A,O \oplus B}_{O \oplus B}
		= p^{A \oplus B}_B (\iid \oplus \lambda)\text.
	\end{align*}
	By the universal property of $A \oplus B$, therefore $(\rho \oplus \iid)
	\alpha = \iid \oplus \lambda$.
\end{proof}

\subsection{Proof of Theorem~\ref{th:rig}}

\newcommand{\stimes}{{\otimes}}
\newcommand{\splus}{{\oplus}}

Let $(\cat C, \otimes, I)$ be a dagger symmetric monoidal category with a zero
object $O$, such that there is a dagger isomorphism $O \otimes A \cong A$ for
all $A$, and independent coproducts, preserved by the tensor $\otimes$. With
Theorem~\ref{th:ic-smc}, we know that $(\cat C, \oplus, O)$ is a dagger
symmetric monoidal category.

We will prove that $(\cat C, \oplus, O, \otimes, I)$ is a dagger rig
category.

First, we can define the distributor as the following mediating morphism.
\[
	\begin{tikzcd}[ampersand replacement=\&, row sep=large, column sep=2cm]
		A \otimes C \& (A \otimes C) \oplus (B \otimes C) \& B \otimes C \\
		\& (A \oplus B) \otimes C
		\arrow["i_1"', from=1-1, to=1-2]
		\arrow["0"', bend left=20, from=1-1, to=1-3]
		\arrow["i_1 \otimes \iid_C"', from=1-1, to=2-2]
		\arrow["d"{description},dashed, from=1-2, to=2-2]
		\arrow["i_2", from=1-3, to=1-2]
		\arrow["i_2 \otimes \iid_C", from=1-3, to=2-2]
	\end{tikzcd}
\]
Since both $(i_1, i_2)$ and $(i_1 \otimes \iid, i_2 \otimes \iid)$ are
indenpendent coproducts, $d$ is necessarily a dagger isomorphism.

The rest of the proof consists in verifying all the 23 coherence
diagrams~\cite[Section 2.1]{bimonoidal-book} that define a rig category. We
pick some of them to display the proof strategy.

\begin{lemma}
	The following diagram commutes for all objects $A, B, C$.
	\[
		\begin{tikzcd}[ampersand replacement=\&, row sep=large, column sep=2cm]
			(A \otimes C) \oplus (B \otimes C) \& (A \oplus B) \otimes C \\
			(B \otimes C) \oplus (A \otimes C) \& (B \oplus A) \otimes C 
			\arrow["d", from=1-1, to=1-2]
			\arrow["d", from=2-1, to=2-2]
			\arrow["\sigma", from=1-1, to=2-1]
			\arrow["\sigma \otimes \iid", from=1-2, to=2-2]
		\end{tikzcd}
	\]
\end{lemma}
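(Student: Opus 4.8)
The plan is to exploit that the top-left corner $(A\otimes C)\oplus(B\otimes C)$ is an independent coproduct, hence jointly epic, so that the square can be checked after precomposing with each injection. Recall that the distributor $d$ is, by construction, the unique mediating morphism out of the independent coproduct $(i_1,i_2)$ into the cospan $(i_1\otimes\iid_C,\,i_2\otimes\iid_C)$---which is again an independent coproduct precisely because $\otimes$ preserves independent coproducts---so $d\,i_1=i_1\otimes\iid_C$ and $d\,i_2=i_2\otimes\iid_C$. Since $(i_1,i_2)$ is jointly epic, it suffices to show that the two composites $(\sigma\otimes\iid_C)\circ d$ and $d\circ\sigma$ agree after precomposition with $i_1$ and with $i_2$.

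First I would fix notation for the two distinct symmetries involved, which is the only place care is needed. Write $\sigma\colon A\oplus B\to B\oplus A$ for the symmetry on the summands (so that the right-hand vertical map is $\sigma\otimes\iid_C$) and reuse the letter $\sigma$ for the symmetry $(A\otimes C)\oplus(B\otimes C)\to(B\otimes C)\oplus(A\otimes C)$ on the distributed products (the left-hand vertical map). By Lemma~\ref{symmetry} each symmetry interchanges the two injections: writing $k_1,k_2$ for the injections of $B\oplus A$ and $j_1,j_2$ for those of $(B\otimes C)\oplus(A\otimes C)$, we have $\sigma i_1=k_2$, $\sigma i_2=k_1$ on summands, and $\sigma i_1=j_2$, $\sigma i_2=j_1$ on products, while the bottom distributor satisfies $d\,j_1=k_1\otimes\iid_C$ and $d\,j_2=k_2\otimes\iid_C$.

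The computation then closes immediately using bifunctoriality of $\otimes$:
\begin{align*}
  (\sigma\otimes\iid_C)\,d\,i_1 &= (\sigma\otimes\iid_C)(i_1\otimes\iid_C) = (\sigma i_1)\otimes\iid_C = k_2\otimes\iid_C = d\,j_2 = d\,\sigma i_1,\\
  (\sigma\otimes\iid_C)\,d\,i_2 &= (\sigma\otimes\iid_C)(i_2\otimes\iid_C) = (\sigma i_2)\otimes\iid_C = k_1\otimes\iid_C = d\,j_1 = d\,\sigma i_2.
\end{align*}
By joint epicness of $(i_1,i_2)$ this yields $(\sigma\otimes\iid_C)\circ d = d\circ\sigma$, as required.

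The hard part is essentially only bookkeeping: one must keep the two symmetries (on summands versus on the tensored objects) distinct and invoke the correct defining equation of each distributor, remembering that the universal property of $d$ is available only because $\otimes$ preserves independent coproducts. No genuinely new categorical input is needed beyond Lemma~\ref{symmetry}, bifunctoriality of $\otimes$, and joint epicness.
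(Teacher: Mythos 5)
Your proposal is correct and follows essentially the same route as the paper's proof: precompose the square with the injections of the independent coproduct $(A\otimes C)\oplus(B\otimes C)$, use the defining equations of the two distributors together with the injection-swapping property of the symmetries (Lemma~\ref{symmetry}) and bifunctoriality of $\otimes$, then conclude by joint epicness. The only difference is notational (you write $j_i,k_i$ where the paper overloads $i'_i$), so nothing further is needed.
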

\begin{proof}
	By definition, the following diagram commutes:
	\[
		\begin{tikzcd}[row sep=large, column sep=large]
			{A \otimes C} &&&& {B \otimes C} \\
			& {(A \otimes C) \oplus (B \otimes C)} && {(A \oplus B) \otimes C} \\
			\\
			& {(B \otimes C) \oplus (A \otimes C)} && {(B \oplus A) \otimes C} \\
			{B \otimes C} &&&& {A \otimes C}
			\arrow["{i_1}"{description}, from=1-1, to=2-2]
			\arrow["{i_1 \otimes C}"{description}, from=1-1, to=2-4]
			\arrow["{i_2}"{description}, from=1-5, to=2-2]
			\arrow["{i_2 \otimes C}"{description}, from=1-5, to=2-4]
			\arrow["d"{description}, dashed, from=2-2, to=2-4]
			\arrow["\sigma"{description}, dashed, from=2-2, to=4-2]
			\arrow["{\sigma \otimes C}"{description}, dashed, from=2-4, to=4-4]
			\arrow["d'"{description}, dashed, from=4-2, to=4-4]
			\arrow["{i'_1}"{description}, from=5-1, to=4-2]
			\arrow["{i'_1 \otimes C}"{description}, from=5-1, to=4-4]
			\arrow["{i'_2}"{description}, from=5-5, to=4-2]
			\arrow["{i'_2 \otimes C}"{description}, from=5-5, to=4-4]
		\end{tikzcd}
	\]
	We have then
	\begin{align*}
		(\sigma \otimes \iid) d i_1
		&= (\sigma \otimes \iid) (i_1 \otimes \iid)
		= (i'_2 \otimes \iid)
		= d' i'_2
		= d' \sigma i_1\text, \\
		(\sigma \otimes \iid) d i_2
		&= (\sigma \otimes \iid) (i_2 \otimes \iid)
		= (i'_1 \otimes \iid)
		= d' i'_1
		= d' \sigma i_2 \text.
	\end{align*}
	By joint epicness, $(\sigma \otimes \iid) d = d' \sigma$.
\end{proof}

\begin{lemma}
	The following diagram commutes for all objects $A, B, C, D$.
	\[
		\begin{tikzcd}[ampersand replacement=\&, row sep=large, column sep=2cm]
			(A \otimes D) \oplus ((B \otimes D) \oplus (C \otimes D)) \& ((A
			\otimes D) \oplus (B \otimes D)) \oplus (C \otimes D) \\
			(A \otimes D) \oplus ((B \oplus C) \otimes D)) \& ((A
			\oplus B) \otimes D) \oplus (C \otimes D) \\
			(A \oplus (B \oplus C)) \otimes D \& ((A \oplus B) \oplus C) \otimes D 
			\arrow["\alpha^\oplus", from=1-1, to=1-2]
			\arrow["\alpha^\oplus \otimes \iid", from=3-1, to=3-2]
			\arrow["\iid \oplus d", from=1-1, to=2-1]
			\arrow["d", from=2-1, to=3-1]
			\arrow["d \oplus \iid", from=1-2, to=2-2]
			\arrow["d", from=2-2, to=3-2]
		\end{tikzcd}
	\]
\end{lemma}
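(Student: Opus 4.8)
The plan is to mirror the strategy used for the preceding two lemmas: both sides of the square are morphisms out of a triple independent coproduct, so it suffices to check that they agree after precomposition with each of its three injections, and then invoke joint epicness. Write the top-left object as $X = (A \otimes D) \oplus \big((B \otimes D) \oplus (C \otimes D)\big)$, with outer injections $i_1 \colon A \otimes D \to X$ and $i_2 \colon (B \otimes D) \oplus (C \otimes D) \to X$, and with $j_1, j_2$ the injections of the inner coproduct. Since $(i_1, i_2)$ and $(j_1, j_2)$ are each jointly epic, so is the triple $(i_1,\, i_2 j_1,\, i_2 j_2)$, exactly as in the \emph{double joint epicness} argument already used above. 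Let $\kappa_A, \kappa_B, \kappa_C$ denote the three corresponding legs of $(A \oplus B) \oplus C$.

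First I would evaluate the clockwise composite $d \circ (d \oplus \iid) \circ \alpha$ on each leg, and then the counterclockwise composite $(\alpha \otimes \iid) \circ d \circ (\iid \oplus d)$, where $\alpha$ is the $\oplus$-associator of \cref{associators}. Using the defining property of the distributor (it sends each coproduct injection $\iota$ to $\iota \otimes \iid_D$), the functoriality of $\oplus$ from \cref{bifunctoriality}, and the routing equations for $\alpha$ (which carry each leg to the matching leg), each composite collapses to the same value on each generator; for the $A$-leg, for instance,
\[
  d \circ (d \oplus \iid) \circ \alpha \circ i_1
  = \kappa_A \otimes \iid_D
  = (\alpha \otimes \iid) \circ d \circ (\iid \oplus d) \circ i_1,
\]
and symmetrically the $B$- and $C$-legs both evaluate to $\kappa_B \otimes \iid_D$ and $\kappa_C \otimes \iid_D$ respectively. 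Having verified the three equalities, joint epicness of $(i_1, i_2 j_1, i_2 j_2)$ immediately gives equality of the two composites, which is the claimed commutativity.

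The main obstacle is purely bookkeeping rather than conceptual. The square involves five distinct instances of the distributor and two instances of the associator spread over six objects, so in the leg-wise calculations one must carefully track which coproduct each injection belongs to and re-associate at each stage; conflating the different $d$'s and $\alpha$'s is the only real pitfall. I would therefore present the $A$-leg computation in full, making each application of the distributor identity and of the $\alpha$-routing explicit, and then note that the $B$- and $C$-legs are entirely analogous, differing only in which injections are threaded through.
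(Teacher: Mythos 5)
Your proposal is correct and matches the paper's proof in essence: the paper likewise verifies the square by chasing the defining equations of the distributors and the $\oplus$-associator through the injections, and concludes by successive joint epicness of the inner pair $(j_1, j_2)$ and then the outer pair $(i_1, i_2)$, which is exactly your observation that the triple of legs $(i_1,\, i_2 j_1,\, i_2 j_2)$ is jointly epic. Your sample $A$-leg computation, reducing both composites to $\kappa_A \otimes \iid_D$ via the distributor identity, $\oplus$-functoriality, and the $\alpha$-routing equations, is precisely the calculation the paper's large diagram encodes.
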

\begin{proof}
	By definition, the following diagram commutes:
	\[
		\begin{tikzcd}[ampersand replacement=\&, row sep=large, column sep=0.5cm]
			A \stimes D \&\& B \stimes D \&\& C \stimes D \\
			\& (A \stimes D) \splus (B \stimes D) \&\& (B \stimes D) \splus (C \stimes D) \& \\
			\& (A \stimes D) \splus ((B \stimes D) \splus (C \stimes D)) \&\& ((A
			\stimes D) \splus (B \stimes D)) \splus (C \stimes D) \& \\
			\& (A \stimes D) \splus ((B \splus C) \stimes D)) \&\& ((A
			\splus B) \stimes D) \splus (C \stimes D) \& \\
			\& (A \splus (B \splus C)) \stimes D \&\& ((A \splus B) \splus C) \stimes D \& \\
			\& (A \splus B) \stimes D \&\& (B \splus C) \stimes D \& \\
			A \stimes D \&\& B \stimes D \&\& C \stimes D 
			\arrow["{j_1}"{description}, from=1-3, to=2-4]
			\arrow["{j_2}"{description}, from=1-5, to=2-4]
			\arrow["{i_1}"{description, pos=0.3}, curve={height=15pt}, from=1-1, to=3-2]
			\arrow["{i_2}"{description,pos=0.3}, from=2-4, to=3-2]
			\arrow["{j'_1}"{description}, from=1-1, to=2-2]
			\arrow["{j'_2}"{description}, from=1-3, to=2-2]
			\arrow["{i'_1}"{description, pos=0.3}, from=2-2, to=3-4]
			\arrow["{i'_2}"{description,pos=0.3}, curve={height=-15pt}, from=1-5, to=3-4]
			\arrow["{\alpha^\splus}"{description}, dashed, from=3-2, to=3-4]
			\arrow[dashed, from=2-2, to=3-2]
			\arrow[dashed, from=2-4, to=3-4]
			\arrow["{l_1 \stimes\iid}"{description}, from=7-3, to=6-4]
			\arrow["{l_2 \stimes\iid}"{description}, from=7-5, to=6-4]
			\arrow["{k_1 \stimes\iid}"{description,pos=0.7}, curve={height=-15pt}, from=7-1, to=5-2]
			\arrow["{k_2 \stimes\iid}"{description,pos=0.3}, from=6-4, to=5-2]
			\arrow["{l'_1 \stimes\iid}"{description}, from=7-1, to=6-2]
			\arrow["{l'_2 \stimes\iid}"{description}, from=7-3, to=6-2]
			\arrow["{k'_1 \stimes\iid}"{description, pos=0.3}, from=6-2, to=5-4]
			\arrow["{k'_2 \stimes\iid}"{description,pos=0.7}, curve={height=15pt}, from=7-5, to=5-4]
			\arrow[dashed, from=6-2, to=5-2]
			\arrow[dashed, from=6-4, to=5-4]
 			\arrow["{\alpha^\splus \stimes \iid}"{description}, dashed, from=5-2, to=5-4]
			\arrow["{p_1}"{description,pos=0.7}, curve={height=-30pt}, from=7-1, to=4-2]
			\arrow["{p_2}"{description,pos=0.7}, curve={height=15pt}, from=6-4, to=4-2]
			\arrow["{q_1}"{description,pos=0.7}, curve={height=-15pt}, from=6-2, to=4-4]
			\arrow["{q_2}"{description,pos=0.7}, curve={height=30pt}, from=7-5, to=4-4]
 			\arrow["{d'_1}"{description}, bend right=90, dashed, from=2-2, to=6-2]
 			\arrow["{d'_2}"{description}, bend left=90, dashed, from=2-4, to=6-4]
 			\arrow["{\iid \splus d'_2}"{description}, dashed, from=3-2, to=4-2]
 			\arrow["{d_1}"{description}, dashed, from=4-2, to=5-2]
 			\arrow["{d'_1 \splus \iid}"{description}, dashed, from=3-4, to=4-4]
 			\arrow["{d_2}"{description}, dashed, from=4-4, to=5-4]
		\end{tikzcd}
	\]
	and we conclude by successive joint epicness of $(j_1,j_2)$ then of $(i_1,i_2)$.
\end{proof}

\begin{lemma}
	The following diagram commutes for all objects $A, B, C, D$.
	\[
		\begin{tikzcd}[ampersand replacement=\&, row sep=large, column sep=2cm]
			((A \otimes B) \otimes C) \oplus ((A \otimes B) \otimes D)
			\&
			(A \otimes (B \otimes C)) \oplus (A \otimes (B \otimes D))
			\\
			\&
			A \otimes ((B \otimes C) \oplus (B \otimes D))
			\\
			(A \otimes B) \otimes (C \oplus D)
			\&
			A \otimes (B \otimes (C \oplus D))
 			\arrow["\alpha^\otimes \oplus \alpha^\otimes", from=1-1, to=1-2]
 			\arrow["\alpha^\otimes", from=3-1, to=3-2]
 			\arrow["d", from=1-1, to=3-1]
 			\arrow["d", from=1-2, to=2-2]
 			\arrow["\iid \otimes d", from=2-2, to=3-2]
		\end{tikzcd}
	\]
\end{lemma}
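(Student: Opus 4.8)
The plan is to follow exactly the template of the two preceding coherence lemmas: reduce the equality of the two legs of the square to a statement about the coproduct injections, and then conclude by joint epicness. Write $i_1, i_2$ for the injections of the outer coproduct $((A \otimes B) \otimes C) \oplus ((A \otimes B) \otimes D)$. Since these are jointly epic, it suffices to check that the two composites agree after precomposition with $i_1$ and with $i_2$. I will also name the remaining injections: $\kappa_k$ for $C \oplus D$, $\mu_k$ for $(B \otimes C) \oplus (B \otimes D)$, and $\nu_k$ for $(A \otimes (B \otimes C)) \oplus (A \otimes (B \otimes D))$, with $k \in \{1,2\}$. The key structural input is the defining property of the distributor for the second tensor factor, $d \colon (X \otimes C') \oplus (X \otimes D') \to X \otimes (C' \oplus D')$ (defined symmetrically to the one exhibited at the start of the proof of \cref{th:rig}): namely $d$ composed with the first injection equals $\iid_X \otimes \kappa'_1$, where $\kappa'_1$ is the first injection of $C' \oplus D'$, and likewise on the second injections.

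First I compute the top-then-right leg on $i_1$. By functoriality of $\oplus$ we have $(\alpha^\otimes \oplus \alpha^\otimes) \circ i_1 = \nu_1 \circ \alpha^\otimes$; applying the top distributor gives $d \circ \nu_1 = \iid_A \otimes \mu_1$, and then applying $\iid_A \otimes d$ together with the defining property of the inner distributor ($d \circ \mu_1 = \iid_B \otimes \kappa_1$) yields
\[
	(\iid_A \otimes d) \circ d \circ (\alpha^\otimes \oplus \alpha^\otimes) \circ i_1 = \bigl( \iid_A \otimes (\iid_B \otimes \kappa_1) \bigr) \circ \alpha^\otimes.
\]
Dually, the left-then-bottom leg on $i_1$ is computed from $d \circ i_1 = \iid_{A \otimes B} \otimes \kappa_1$, giving
\[
	\alpha^\otimes \circ d \circ i_1 = \alpha^\otimes \circ (\iid_{A \otimes B} \otimes \kappa_1).
\]
These two expressions are equated precisely by naturality of the associator $\alpha^\otimes$ in its third argument, instantiated at $f = \iid_A$, $g = \iid_B$, $h = \kappa_1$, using $\iid_A \otimes \iid_B = \iid_{A \otimes B}$.

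The $i_2$ case is identical after replacing every subscript $1$ by $2$ (and $\kappa_1, \mu_1, \nu_1$ by $\kappa_2, \mu_2, \nu_2$), again landing on the same naturality square. Joint epicness of $(i_1, i_2)$ then yields the desired identity $(\iid_A \otimes d) \circ d \circ (\alpha^\otimes \oplus \alpha^\otimes) = \alpha^\otimes \circ d$. As with the earlier coherence lemmas, the only real work is bookkeeping: tracking which of the second-factor distributors each $d$ denotes and matching the chains of injections so that both legs collapse onto the single naturality square for $\alpha^\otimes$. I would record all of this once in a large pasting diagram in the style of the previous proofs, from which the two displayed computations above can be read off directly, so that the substance of the argument remains the one-line appeal to naturality of the associator.
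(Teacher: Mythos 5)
Your proof is correct and takes essentially the same route as the paper's: precompose both legs with the injections of the outer coproduct, unfold the defining properties of the (second-factor) distributors together with functoriality of $\oplus$, match the two resulting composites via naturality of $\alpha^\otimes$, and conclude by joint epicness. The paper merely packages these identical computations into one large pasting diagram, which is exactly what you propose to do as well.
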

\begin{proof}
	By definition, the following diagram commutes:
	\[
		\begin{tikzcd}[ampersand replacement=\&, row sep=large, column sep=1.3cm]
			(A \stimes B) \stimes C
			\&
			A \stimes (B \stimes C)
			\\
			(A \stimes B) \stimes D
			\&
			A \stimes (B \stimes D)
			\\
			((A \stimes B) \stimes C) \splus ((A \stimes B) \stimes D)
			\&
			(A \stimes (B \stimes C)) \splus (A \stimes (B \stimes D))
			\\
			\&
			A \stimes ((B \stimes C) \splus (B \stimes D))
			\\
			(A \stimes B) \stimes (C \splus D)
			\&
			A \stimes (B \stimes (C \splus D))
			\\
			(A \stimes B) \stimes C
			\&
			A \stimes (B \stimes C)
			\\
			(A \stimes B) \stimes D
			\&
			A \stimes (B \stimes D)
  			\arrow["\alpha^\stimes", from=1-1, to=1-2]
  			\arrow["\alpha^\stimes", from=2-1, to=2-2]
			\arrow["{j_1}"{description, pos=0.3}, curve={height=50pt}, from=1-1, to=3-1]
			\arrow["{j_2}"{description}, from=2-1, to=3-1]
			\arrow["{i_1}"{description, pos=0.3}, curve={height=50pt}, from=1-2, to=3-2]
			\arrow["{i_2}"{description}, from=2-2, to=3-2]
  			\arrow["\alpha^\stimes \splus \alpha^\stimes", from=3-1, to=3-2]
			\arrow["{\iid \stimes k_2}"{description, pos=0.3}, curve={height=-50pt}, from=7-1, to=5-1]
			\arrow["{\iid \stimes k_1}"{description}, from=6-1, to=5-1]
			\arrow["{\iid \stimes (\iid \stimes k_2)}"{description, pos=0.3}, curve={height=-50pt}, from=7-2, to=5-2]
			\arrow["{\iid \stimes (\iid \stimes k_1)}"{description}, from=6-2, to=5-2]
  			\arrow["\alpha^\stimes", from=5-1, to=5-2]
  			\arrow["\alpha^\stimes", from=6-1, to=6-2]
  			\arrow["\alpha^\stimes", from=7-1, to=7-2]
			\arrow["{\iid \stimes i'_2}"{description}, bend right=90, from=7-2, to=4-2]
			\arrow["{\iid \stimes i'_1}"{description}, curve={height=60pt}, from=6-2, to=4-2]
			\arrow["{d_1}"{description}, dashed, from=3-1, to=5-1]
			\arrow["{d_2}"{description}, dashed, from=3-2, to=4-2]
			\arrow["{\iid \stimes d'_2}"{description}, dashed, from=4-2, to=5-2]
		\end{tikzcd}
	\]
	and we conclude by joint epicness of $(j_1,j_2)$.
\end{proof}

\subsection{Proof of Theorem~\ref{th:well-defined-sem}}

We wish to prove that for all well-typed patterns $p$ and unitaries $s$, we have that:
\begin{itemize}
	\item there exists a morphism $\sem p ^\perp$ such that $\left( \sem p,
		\sem p ^\perp \right)$ is an independent coproduct;
	\item the morphism $\spp$ is unique up to dagger isomorphism;
	\item the morphism $\sem p$ is a dagger monomorphism;
	\item the morphism $\sem s$ is a dagger isomorphism.
\end{itemize}
\begin{proof}
	The second point is a direct consequence of Lemma~\ref{lem:ic-choice}.
	We prove all other three statements by induction on the typing derivations.
	Note that the two first statements only apply to patterns, while the last
	one only applies to unitaries.
	\begin{itemize}
		\item Case $\Phase \theta$. Direct.
		\item Case $s;t$. The induction hypothesis gives that $\sem s$ and
			$\sem t$ are dagger isomorphisms. By definition, we have $\sem{s;t}
			= \sem t \sem s$, and dagger isomorphisms are stable by
			composition.
		\item Case $s \otimes t$. We have by definition $\sem{s \otimes t} =
			\sem s \otimes \sem t$, and $\otimes$ preserves dagger
			isomorphisms.
		\item Case $\id n$. Direct.
		\item Case $\ifthen p s$. We write $u$ for $\sem{\ifthen p s}$.
			Lemma~\ref{lem:univ-monic} ensures that $u$ is dagger monic.
			Additionally, since $(\sem p, \sem p^\perp)$ is an independent
			coproduct, we have unique $u$ and $v$ such that $u \sem p = \sem p
			\sem s$, $u \spp = \spp$, $v \sem p = \sem p \sem s\dg$ and $v \spp
			= \spp$.
			\[
				\begin{tikzcd}[ampersand replacement=\&, row sep=large, column sep=1.5cm]
					\qsem n \& \qsem m \& \bullet \\
					\qsem n \& \qsem m \& \bullet
					\arrow["\sem p", from=1-1, to=1-2]
					\arrow["\sem p"', from=2-1, to=2-2]
					\arrow["\spp"', from=1-3, to=1-2]
					\arrow["\spp", from=2-3, to=2-2]
					\arrow["\sem s"', from=1-1, to=2-1]
					\arrow["\iid", from=1-3, to=2-3]
					\arrow["{u}"{description}, shift right=5pt, dashed, from=1-2, to=2-2]
					\arrow["{v}"{description}, shift right=5pt, dashed, from=2-2, to=1-2]
				\end{tikzcd}
			\]
			We have:
			\[
				\begin{array}{l}
					uv \sem p
					= u \sem p \sem s \dg
					= \sem p \sem s \sem s \dg
					= \sem p \text, \\
					uv \spp
					= u \spp
					= \spp \text.
				\end{array}
			\]
			Since $(\sem p, \spp)$ is jointly epic, we have $uv = \iid$,
			therefore $u$ is a dagger isomorphism.
		\item Case $\ket x$. Direct, by definition.
		\item Case $p \cdot q$. Dagger monomorphisms are preserved by
			composition. We define
			\[
				\sem{p \cdot q}^\perp = \left[\sem p \sem
				q^\perp, \sem p^\perp \right] ~\circ \cong \text.
			\]
			and we wish to show that $(\sem{p \cdot q}, \sem{p \cdot q}^\perp)$
			is an independent coproduct. First, if we have an independent
			cospan $(h_1, h_2)$, then $u$ exists because $(\sem p, \spp)$ is an
			independent coproduct and $v$ exists because $(\sem q, \sem
			q^\perp)$ is an independent coproduct, such that the following
			diagram commutes.
			\[
				\begin{tikzcd}[ampersand replacement=\&, row sep=large, column sep=1.5cm]
					\qsem l \& \qsem n \& \bullet \& \bullet'' \\
					\& \qsem n \& \qsem m \& \bullet' \\
					\& \& \qsem m \&
					\arrow["\sem q", from=1-1, to=1-2]
					\arrow["\sem q^\perp"', from=1-3, to=1-2]
					\arrow["\iid"', from=1-2, to=2-2]
					\arrow["\shortmid"{marking}, tail, from=1-3, to=1-4]
					\arrow["\shortmid"{marking}, tail, from=2-4, to=1-4]
					\arrow["\sem p", from=2-2, to=2-3]
					\arrow["\sem p^\perp"', from=2-4, to=2-3]
					\arrow["{\sem{p \cdot q}^\perp}"{description}, dashed, from=1-4, to=2-3]
					\arrow["{h_1}"{description}, curve={height=30pt}, from=1-1, to=3-3]
					\arrow["{h_2}"{description, pos=0.7}, curve={height=-10pt}, from=1-4, to=3-3]
					\arrow["{u}"{description, pos=0.7}, curve={height=15pt}, dashed, from=1-2, to=3-3]
					\arrow["{v}"{description}, dashed, from=2-3, to=3-3]
				\end{tikzcd}
			\]
			Therefore, there exists a unique mediating morphism.
			Now, if we have $f$ and $g$ such that $f\sem{p \cdot q} = g\sem{p
			\cdot q}$ and $f \sem{p \cdot q}^\perp = g \sem{p \cdot q}^\perp$, then
			\[
				\left\{
					\begin{array}{l}
						f \sem p \sem q = g \sem p \sem q \\
						f \left[\sem p \sem q^\perp, \sem p^\perp \right]
						=
						g \left[\sem p \sem q^\perp, \sem p^\perp \right] \text,
					\end{array}
				\right.
			\]
			thus
			\[
				\left\{
					\begin{array}{l}
						f \sem p \sem q = g \sem p \sem q \\
						f \sem p \sem q^\perp = g \sem p \sem q^\perp \\
						f \sem q^\perp = g \sem q^\perp
					\end{array}
				\right.
			\]
			The two first lines imply that $f \sem p = g \sem p$ because $(\sem
			q, \sem q^\perp)$ is jointly epic, which implies with the last line
			that $f = g$ because $(\sem p, \spp)$ is jointly epic. Thus $(\sem{p
			\cdot q}, \sem{p \cdot q}^\perp)$ is jointly epic. We conclude that
			$(\sem{p \cdot q}, \sem{p \cdot q}^\perp)$ is an independent
			coproduct.
		\item Case $p \otimes q$ is very similar to the previous one, with
			\[
				\sem{p \otimes q}^\perp = ~\cong\circ \left[\sem p^\perp
				\otimes \sem q, \left[ \sem p \otimes \sem q^\perp, \sem
				p^\perp \otimes \sem q^\perp \right] \right] \circ\cong
			\]
	\end{itemize}
\end{proof}

\subsection{Semantics of ``if let''}

We prove Proposition~\ref{prop:equations} in separate lemmas.

\begin{lemma}
	We have that
	\(
		\sem{\ifthen p {\ifthen q s}} = \sem{\ifthen {p \cdot q} s}.
	\)
\end{lemma}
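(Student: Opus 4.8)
The plan is to unfold both sides into their defining mediating morphisms and then invoke the uniqueness clause of the independent-coproduct universal property. First I fix the types: if $\ifthen p {\ifthen q s}$ is well-typed then $\vdash p : \pattern n m$, $\vdash q : \pattern k n$ and $\vdash s : \unitary k$, so that $\vdash p \cdot q : \pattern k m$ and the right-hand term $\ifthen{p \cdot q}{s}$ is well-typed too. Write $w \coloneq \sem{\ifthen q s}$, $u_1 \coloneq \sem{\ifthen p {\ifthen q s}}$, and $u_2 \coloneq \sem{\ifthen{p \cdot q}{s}}$. By the semantics of ``if let'' in \eqref{eq:iflet-sem} these are characterised by $w \sem q = \sem q \sem s$ and $w \sem q^\perp = \sem q^\perp$; by $u_1 \sem p = \sem p \, w$ and $u_1 \sem p^\perp = \sem p^\perp$; and $u_2$ is the unique morphism with $u_2 \sem{p \cdot q} = \sem{p \cdot q}\sem s$ and $u_2 \sem{p \cdot q}^\perp = \sem{p \cdot q}^\perp$, where $\sem{p \cdot q} = \sem p \sem q$.

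I would then show that $u_1$ satisfies the two equations that characterise $u_2$, so that uniqueness of the mediating morphism forces $u_1 = u_2$. The first equation is a direct calculation: $u_1 \sem{p \cdot q} = (u_1 \sem p)\sem q = \sem p \, w \, \sem q = \sem p (\sem q \sem s) = \sem{p \cdot q}\sem s$, using the defining equation of $u_1$ followed by that of $w$. The second equation, $u_1 \sem{p \cdot q}^\perp = \sem{p \cdot q}^\perp$, is where the work lies. Here I would use the explicit formula $\sem{p \cdot q}^\perp = [\sem p \sem q^\perp, \sem p^\perp]\circ\cong$ stated in the text, together with \cref{lem:med-composition}, which rewrites $h \circ [f,g] = [hf, hg]$ for a dagger monomorphism $h$ and an independent cospan $(f,g)$.

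To apply \cref{lem:med-composition} I first check its hypotheses: $u_1$ is a dagger isomorphism by \cref{th:well-defined-sem}, hence dagger monic; and $(\sem p \sem q^\perp, \sem p^\perp)$ is an independent cospan, since $(\sem p \sem q^\perp)\dg \sem p^\perp = (\sem q^\perp)\dg(\sem p\dg \sem p^\perp) = 0$ using that $(\sem p, \sem p^\perp)$ is an independent coproduct. Then $u_1 \circ [\sem p \sem q^\perp, \sem p^\perp] = [u_1 \sem p \sem q^\perp,\, u_1 \sem p^\perp]$, and I evaluate each component: $u_1 \sem p^\perp = \sem p^\perp$ directly, while $u_1 \sem p \sem q^\perp = \sem p\, w\, \sem q^\perp = \sem p \sem q^\perp$ by the second defining equation of $w$. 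Thus $u_1 \circ [\sem p \sem q^\perp, \sem p^\perp] = [\sem p \sem q^\perp, \sem p^\perp]$, and post-composing with $\cong$ (which $u_1$ does not touch) gives $u_1 \sem{p \cdot q}^\perp = \sem{p \cdot q}^\perp$. With both equations verified, uniqueness yields $u_1 = u_2$. The main obstacle I anticipate is purely bookkeeping: keeping the coherence isomorphisms $\cong$ and the exact form of the complement morphism $\sem{p \cdot q}^\perp$ straight, and confirming the independent-cospan side condition so that \cref{lem:med-composition} genuinely applies; the algebra itself is short once those are in place.
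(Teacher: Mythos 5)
Your proof is correct and takes essentially the same route as the paper's: both sides are characterised by the universal property of the independent coproduct $(\sem{p\cdot q}, \sem{p\cdot q}^\perp)$, and you verify that $\sem{\ifthen p {\ifthen q s}}$ satisfies the two defining equations of $\sem{\ifthen{p \cdot q}{s}}$, concluding by uniqueness of the mediating morphism. The only cosmetic difference is in the complement equation: the paper unfolds $u\sem{p\cdot q}^\perp = \sem{p\cdot q}^\perp$ into three equations by precomposing with the coprojections of $[\sem p \sem q^\perp, \sem p^\perp]$, whereas you verify it directly via \cref{lem:med-composition} using dagger monicity of $u_1$ --- the same calculation packaged slightly differently (note only that $\cong$ sits on the right of $\sem{p\cdot q}^\perp$, so it is \emph{pre}composition that $u_1$ leaves untouched).
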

\begin{proof}
	The morphism $\sem{\ifthen {p \cdot q} s}$ is, by definition, the unique
	$u$ such that
	\[
		\left\{
			\begin{array}{l}
				u\sem{p \cdot q} = \sem p \sem q \sem s \\
				u \sem{p \cdot q}^\perp = \sem{p \cdot q}^\perp
			\end{array}
		\right.
	\]
	which means that it is the unique $u$ such that
	\[
		\left\{
			\begin{array}{l}
				u \sem p \sem q = \sem p \sem q \sem s \\
				u \sem p \sem q^\perp = \sem p \sem q^\perp \\
				u \sem p^\perp = \sem p^\perp
			\end{array}
		\right.
	\]
	Moreover, $\sem{\ifthen q s}$ is, by definition, the unique $v$ such that
	\[
		\left\{
			\begin{array}{l}
				v\sem q = \sem q \sem s \\
				v \sem q^\perp = \sem q^\perp
			\end{array}
		\right.
	\]
	and $\sem{\ifthen p {\ifthen q s}}$ is, by definition, the unique $w$ such
	that
	\[
		\left\{
			\begin{array}{l}
				w\sem p = \sem p v \\
				w \sem p^\perp = \sem p^\perp
			\end{array}
		\right.
	\]
	If we precompose the first line of the last observation by $\sem q$ (resp.
	$\sem q^\perp$), we then obtain the following:
	\[
		\left\{
			\begin{array}{l}
				w \sem p \sem q = \sem p \sem q \sem s \\
				w \sem p \sem q^\perp = \sem p \sem q^\perp \\
				w \sem p^\perp = \sem p^\perp
			\end{array}
		\right.
	\]
	We conclude that $u = w$ by uniqueness.
\end{proof}

\begin{lemma}
	We have that
	\(
		\sem{\ifthen p {(s;t)}} = \sem{\ifthen p s ; \ifthen p t}.
	\)
\end{lemma}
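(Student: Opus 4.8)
The plan is to characterise both sides through the universal property defining the semantics of ``if let'', and then conclude by joint epicness of the distinguished independent coproduct $(\sem p, \spp)$. Recall from \eqref{eq:iflet-sem} that $\sem{\ifthen p s}$ is the unique morphism $u_s$ satisfying $u_s \sem p = \sem p \sem s$ and $u_s \spp = \spp$, and likewise $\sem{\ifthen p t}$ is the unique $u_t$ with $u_t \sem p = \sem p \sem t$ and $u_t \spp = \spp$. Using the semantics of sequential composition, $\sem{s;t} = \sem t \sem s$, the left-hand side $\sem{\ifthen p {(s;t)}}$ is the unique morphism $w$ with $w \sem p = \sem p \sem t \sem s$ and $w \spp = \spp$.

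I would then verify that the composite $u_t u_s$, which is by definition the semantics $\sem{\ifthen p s ; \ifthen p t}$ of the right-hand side, satisfies precisely these two defining equations for $w$. Precomposing with $\sem p$ yields $u_t u_s \sem p = u_t \sem p \sem s = \sem p \sem t \sem s$, and precomposing with $\spp$ yields $u_t u_s \spp = u_t \spp = \spp$. Both computations use only the defining equations of $u_s$ and $u_t$; in particular, the fact that each of $u_s$ and $u_t$ acts as the identity on $\spp$ ensures that their composite does too.

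Finally, since $(\sem p, \spp)$ is an independent coproduct (Theorem~\ref{th:well-defined-sem}) it is jointly epic, so any morphism agreeing with $w$ after precomposition with both $\sem p$ and $\spp$ coincides with $w$. Hence $u_t u_s = w$, which is exactly $\sem{\ifthen p s ; \ifthen p t} = \sem{\ifthen p {(s;t)}}$. I expect no genuine obstacle here: the argument is a direct unfolding of the universal property, and the only point requiring care is tracking the diagrammatic composition convention $\sem{s;t} = \sem t \sem s$, which places $\sem s$ innermost on the left and correspondingly makes $u_s$ the first-applied (rightmost) factor of the composite $u_t u_s$ on the right.
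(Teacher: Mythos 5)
Your proposal is correct and is essentially identical to the paper's own proof, which simply pastes the two defining squares for $\sem{\ifthen p s}$ and $\sem{\ifthen p t}$ on top of one another and concludes ``by uniqueness'' of the mediating morphism; your algebraic unfolding (precomposing with $\sem p$ and $\spp$ and invoking joint epicness of the independent coproduct) is exactly what that diagram chase encodes.
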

\begin{proof}
	By uniqueness.
	\[
		\begin{tikzcd}[column sep=2cm]
			\ & \ & \ \\
			\ & \ & \ \\
			\ & \ & \
			\arrow["\sem p"', from=1-1, to=1-2]
			\arrow["\sem p"', from=2-1, to=2-2]
			\arrow["\sem p"', from=3-1, to=3-2]
			\arrow["\sem p^\perp", from=1-3, to=1-2]
			\arrow["\sem p^\perp", from=2-3, to=2-2]
			\arrow["\sem p^\perp", from=3-3, to=3-2]
			\arrow["\sem s"', from=1-1, to=2-1]
			\arrow["\sem t"', from=2-1, to=3-1]
			\arrow["\iid", from=1-3, to=2-3]
			\arrow["\iid", from=2-3, to=3-3]
			\arrow[dashed, from=1-2, to=2-2]
			\arrow[dashed, from=2-2, to=3-2]
			\arrow[dashed, curve={height=20pt}, from=1-2, to=3-2]
		\end{tikzcd}
	\]
\end{proof}

\begin{lemma}
	We have that
	\(
		\sem{\ifthen t s} = \sem t \sem s \sem t\dg.
	\)
\end{lemma}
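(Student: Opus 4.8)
The plan is to argue directly from the definition of the ``if let'' semantics in~(\ref{eq:iflet-sem}), specialised to the situation where the pattern is itself a unitary term $t$. Recall that $\sem{\ifthen t s}$ is by definition the unique morphism $u$ fitting the mediating-morphism diagram, i.e. the unique $u$ satisfying $u\sem t = \sem t\sem s$ and $u\sem t^\perp = \sem t^\perp$, where $(\sem t, \sem t^\perp)$ is the distinguished independent coproduct associated to the pattern $t$. So the entire task is to identify this $u$ explicitly.

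The crucial observation is that a unitary $t$, when regarded as a pattern through the rule $\vdash t : \pattern n n$, is interpreted by exactly the same morphism $\sem t$ as its interpretation as a unitary (this is one of the clauses of Figure~\ref{fig:sem}); by Theorem~\ref{th:well-defined-sem} this morphism is a dagger isomorphism, so in particular $\sem t\dg\sem t = \iid$ and $\sem t\sem t\dg = \iid$. Moreover the explicit formula for orthogonal complements recorded just after Theorem~\ref{th:well-defined-sem} gives $\sem t^\perp = 0$ in this case. Hence the second defining condition $u\sem t^\perp = \sem t^\perp$ degenerates to $u\circ 0 = 0$, which every morphism satisfies and which therefore imposes no constraint.

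Consequently $u$ is pinned down entirely by the single equation $u\sem t = \sem t\sem s$. Since $\sem t$ is invertible with inverse $\sem t\dg$, I would simply compose on the right with $\sem t\dg$ to obtain $u = u\sem t\sem t\dg = \sem t\sem s\sem t\dg$. To close the argument cleanly I would then verify the converse direction, namely that the candidate $\sem t\sem s\sem t\dg$ really does meet both defining conditions: indeed $(\sem t\sem s\sem t\dg)\sem t = \sem t\sem s(\sem t\dg\sem t) = \sem t\sem s$, and the orthogonal condition holds trivially since both sides are $0$. By the uniqueness of the mediating morphism this forces $\sem{\ifthen t s} = \sem t\sem s\sem t\dg$.

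The only point demanding any real care — and where I expect the main subtlety to lie — is the legitimacy of the step $\sem t^\perp = 0$: one must know that $(\sem t, 0)$ genuinely constitutes an independent coproduct whose complementary apex is the zero object, so that the mediating morphism in~(\ref{eq:iflet-sem}) is well-defined and unique in the first place. This is precisely what the already-established Theorem~\ref{th:well-defined-sem} guarantees, so once it is invoked the remainder is a one-line consequence of $\sem t$ being a dagger isomorphism, and no delicate manipulation of the independent-coproduct universal property is needed beyond what that theorem already supplies.
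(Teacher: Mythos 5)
Your proposal is correct and follows essentially the same route as the paper's own proof: both reduce to the observation that $\sem t^\perp = 0$ makes the orthogonal condition vacuous, so the mediating morphism is pinned down by $u\sem t = \sem t\sem s$ alone, which $\sem t\sem s\sem t\dg$ satisfies since $\sem t$ is a dagger isomorphism. The only cosmetic difference is that you additionally solve for $u$ by right-composing with $\sem t\dg$ before invoking uniqueness, whereas the paper goes straight to the uniqueness argument.
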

\begin{proof}
	The morphism $\sem{\ifthen t s}$ is, by definition, the unique $u$ such that
	\[
		\left\{
			\begin{array}{l}
				u \sem t = \sem t \sem s \\
				u 0 = 0
			\end{array}
		\right.
	\]
	Since the second equation always holds, the morphism $u$ is the unique one
	such that $u \sem t = \sem t \sem s$, which also holds for $\sem t \sem s
	\sem t\dg$, thus we conclude that $u = \sem t \sem s \sem t\dg$.
\end{proof}

\subsection{Proof of soundness}\label{app:soundness}

Since $q$ designates the subspace on which the program is applied, we should
have that if $\evalu {q,l,r} s = c$, then $\sem c$ is the mediating morphism $u$
in:
\begin{equation}\label{eq:eval-sem}
	\begin{tikzcd}[ampersand replacement=\&, column sep=1.5cm]
		\qsem m \& \qsem n \& \ \bullet\  \\
		\qsem l \otimes \qsem k \otimes \qsem r  \&\& \\
		\qsem l \otimes \qsem k \otimes \qsem r  \&\& \\
		\qsem m \& \qsem n \& \ \bullet\
		\arrow["\sem q"', "\shortmid"{marking}, tail, from=1-1, to=1-2]
		\arrow["u"{description}, dashed, from=1-2, to=4-2]
		\arrow["\ {\sem q^\perp}", "\shortmid"{marking}, tail, from=1-3, to=1-2]
		\arrow["\sem q"', from=4-1, to=4-2]
		\arrow["{\sem q^\perp}", from=4-3, to=4-2]
		\arrow["\xi\dg"', from=1-1, to=2-1]
		\arrow["\iid \otimes \sem s \otimes \iid"', from=2-1, to=3-1]
		\arrow["\xi"', from=3-1, to=4-1]
		\arrow["\iid", from=1-3, to=4-3]
	\end{tikzcd}
\end{equation}
where $\xi \colon \qsem l \otimes \qsem k \otimes \qsem r \to \qsem m$ is the
coherence isomorphism between the two objects, since we know that $l + k + r =
m$.

We have the following:
\begin{itemize}
	\item If $\evalu {q,l,r} s = c$, then $\sem c$ is the mediating morphism
		$u$ of the diagram above;
	\item if $\evalp {q,l,r} p = (c,q')$, then
		\(
			\sem c \sem{q'} = \sem q \xi (\iid \otimes \sem p \otimes \iid) {\xi'}\dg \text.
		\)
		Note that it implies that $\sem{c \cdot q'} = \sem{q \cdot (\id l \otimes p \otimes \id r)}$,
		and therefore
		\begin{align*}
			\sem c \sem{q'}^\perp
			&= [0, \sem c \sem{q'}^\perp]
			= \sem{c \cdot q'}^\perp \\
			&= \sem{q \cdot (\id l \otimes p \otimes \id r)}^\perp \\
			&= \left[ \sem q ^\perp , \sem q \xi (\iid \otimes \spp \otimes
			\iid) \right] \circ {\cong}\text.
		\end{align*}
\end{itemize}

\begin{proof}
	\begin{mathpar}
		\inferrule{ }{\evalu {q,l,r} {\Phase{\theta}} = [\ifthen q {\Phase{\theta} \otimes \id{l+r}}]}\and
		\inferrule{\evalu {q,l,r} s = c \and \evalu {q,l,r} t = c'}{\evalu {q,l,r} {s ; t} = c \doubleplus c'}\and
		\inferrule{\vdash s : \unitary{k_1} \and \vdash t : \unitary{k_2} \and \evalu {q,l,r+k_2} s = c \and \evalu {q,l+k_1,r} t = c'}{\evalu {q,l,r} {s \otimes t} = c \doubleplus c'}\and
		\inferrule{ }{\evalu {q,l,r} {\id{k}} = []}\and
		\inferrule{\evalp {q,l,r} p = (c, q') \and \evalu {q', l, r} s = c'}{\evalu {q,l,r} {\ifthen p s} = c^\dagger \doubleplus c' \doubleplus c}\and
		\inferrule{x \in \{0, 1, +, -\}}{\evalp {q,l,r} {\ket{x}} = ([], q[\ket{x}/l])}\and
		\inferrule{\evalu {q, l, r} s = c} {\evalp {q,l, r} s = (c, q)}\and
		\inferrule{\evalp {q,l,r} {p_1} = (c, q') \and \evalp {q', l, r} {p_2} = (c', q'')}{\evalp {q,l,r} {p_1 \cdot p_2} = (c' \doubleplus c, q'')}\and
		\inferrule{\vdash p_1 : \pattern {j_1} {k_1} \and \vdash p_2 : \pattern {j_2} {k_2} \and \evalp {q,l,r + k_2} {p_1} = (c, q') \and \evalp {q', l + j_1, r} {p_2} = (c', q'')}{\evalp {q, l, r} {p_1 \otimes p_2} = (c' \doubleplus c, q'')}
	\end{mathpar}
	\begin{itemize}
		\item Case $\evalu {q,l,r} {\Phase{\theta}}$. This one is true by definition.
		\item Case $\evalu {q,l,r} {s ; t}$. Direct by composing the two diagrams
			for the definition of the semantics of the evaluation (see
			(\ref{eq:eval-sem})), because $\xi$ is a dagger isomorphism.
		\item Case $\evalu {q, l,r} {s \otimes t}$.
			Without loss of generality, we will work here with
			\[
				\xi \colon \qsem l \otimes \qsem{k_1} \otimes \qsem {k_2} \otimes \qsem r \to \qsem m
			\]
			and thus,
			\begin{align*}
				&\! \xi (\iid \otimes \iid \otimes \sem t \otimes \iid) \xi\dg
				\xi (\iid \otimes \sem s \otimes \iid \otimes \iid) \xi\dg
				\\
				&= \xi (\iid \otimes \iid \otimes \sem t \otimes \iid)
				(\iid \otimes \sem s \otimes \iid \otimes \iid) \xi\dg
				\\
				&= \xi (\iid \otimes \sem s \otimes \sem t \otimes \iid) \xi\dg
			\end{align*}
			which is enough to conclude.
		\item Case $\evalu {q,l,r} {\id{k}}$ is trivial.
		\item Case $\evalu {q,l,r} {\ifthen p s}$. Let $u_1$ denote the
			semantics of $c$, and $u_2$ the semantics of $c'$ and $u$ the
			semantics of $\ifthen p s$. We wish to prove that $u_1 u_2 u_1\dg$
			is the mediating morphism in the following diagram:
			\[
				\begin{tikzcd}[column sep=1cm]
					& {\qsem m} && {\qsem n} && \bullet \\
					&& {\qsem{m'}} & {\qsem n} & \bullet \\
					\bullet & {\qsem l \otimes \qsem k \otimes \qsem r} &
					{\qsem l \otimes \qsem j \otimes \qsem r} \\
					\bullet & {\qsem l \otimes \qsem k \otimes \qsem r} &
					{\qsem l \otimes \qsem j \otimes \qsem r} \\
					&& {\qsem{m'}} & {\qsem n} & \bullet \\
					& {\qsem m} && {\qsem n} && \bullet
					\arrow["{\sem q}", from=1-2, to=1-4]
					\arrow["\xi\dg", from=1-2, to=3-2]
					\arrow["{u_1\dg}"', from=1-4, to=2-4]
					\arrow["{\sem q ^\perp}"', from=1-6, to=1-4]
					\arrow["\iid"{description}, from=1-6, to=6-6]
					\arrow["{\sem{q'}}", from=2-3, to=2-4]
					\arrow["{\xi'}\dg", from=2-3, to=3-3]
					\arrow["{u_2}"{description}, dashed, from=2-4, to=5-4]
					\arrow["{\sem{q'}^\perp}"', from=2-5, to=2-4]
					\arrow["\iid"{description}, from=2-5, to=5-5]
					\arrow["{\iid \otimes \sem{p}^\perp \otimes \iid}", from=3-1, to=3-2]
					\arrow["\iid"{description}, from=3-1, to=4-1]
					\arrow["u", dashed, from=3-2, to=4-2]
					\arrow["{\iid \otimes \sem p \otimes \iid}"', from=3-3, to=3-2]
					\arrow["{\iid \otimes \sem s \otimes \iid}"', from=3-3, to=4-3]
					\arrow["{\iid \otimes \sem{p}^\perp \otimes \iid}"', from=4-1, to=4-2]
					\arrow["\xi", from=4-2, to=6-2]
					\arrow["{\iid \otimes \sem p \otimes \iid}", from=4-3, to=4-2]
					\arrow["\xi'", from=4-3, to=5-3]
					\arrow["{\sem{q'}}"', from=5-3, to=5-4]
					\arrow["{u_1}"', from=5-4, to=6-4]
					\arrow["{\sem{q'}^\perp}", from=5-5, to=5-4]
					\arrow["{\sem{q}}"', from=6-2, to=6-4]
					\arrow["{\sem{q}^\perp}", from=6-6, to=6-4]
				\end{tikzcd}
			\]
			First, we know by induction hypothesis that
			\[
				u_1 \sem{q'}^\perp = \left[ \sem q ^\perp ,
				\sem q \xi (\iid \otimes \spp \otimes \iid) \right] \circ {\cong}
			\]
			which means that
			\[
				\left\{
				\begin{array}{l}
					\sem{q'}^\perp {\cong} i_1 = u_1\dg \sem q ^\perp \\
					\sem{q'}^\perp {\cong} i_2 = u_1\dg \sem q \xi (\iid \otimes \spp \otimes \iid)
				\end{array}
				\right.
			\]
			and we also know that:
			\[
				u_1 \sem{q'} = \sem q \xi (\iid \otimes \sem p \otimes \iid) {\xi'}\dg
			\]
			and that $u_2$ is the mediating morphism for the right internal square.
			We need to prove that $u_1 u_2 u_1\dg$ is indeed the mediating
			morphism of the diagram, meaning that:
			\[
				\left\{
				\begin{array}{l}
					u_1 u_2 u_1\dg \sem q = \sem q \xi u \xi \dg \\
					u_1 u_2 u_1\dg \sem q ^\perp = \sem q ^\perp
				\end{array}
				\right.
			\]
			The first equation is equivalent to:
			\[
				u_2 u_1\dg \sem q \xi = u_1\dg \sem q \xi u
			\]
			and since $(\iid \otimes \sem p \otimes \iid, \iid \otimes \spp
			\otimes \iid)$ is jointly epic, it suffices to prove that
			\[
				\left\{
				\begin{array}{l}
					u_2 u_1\dg \sem q \xi (\iid \otimes \sem p \otimes \iid) =
					u_1\dg \sem q \xi u (\iid \otimes \sem p \otimes \iid) \\
					u_2 u_1\dg \sem q \xi (\iid \otimes \spp \otimes \iid) =
					u_1\dg \sem q \xi u (\iid \otimes \spp \otimes \iid)
				\end{array}
				\right.
			\]
			Now,
			\begin{align*}
				u_2 u_1\dg \sem q \xi (\iid \otimes \sem p \otimes \iid)
				&= u_2 \sem{q'} \xi' \\
				&= \sem{q'} {\xi'} (\iid \otimes \sem s \otimes \iid) \\
				&= u_1\dg \sem q \xi (\iid \otimes \sem p \otimes \iid) (\iid \otimes \sem s \otimes \iid) \\
				&= u_1\dg \sem q \xi u (\iid \otimes \sem p \otimes \iid) \\[1.5ex]
				u_2 u_1\dg \sem q \xi (\iid \otimes \spp \otimes \iid)
				&= u_2 \sem{q'}^\perp {\cong} i_2 \\
				&= \sem{q'}^\perp {\cong} i_2 \\
				&= u_1\dg \sem q \xi (\iid \otimes \spp \otimes \iid)
			\end{align*}
			and finally,
			\begin{align*}
				u_2 u_1\dg \sem q ^\perp
				&= u_2 \sem{q'}^\perp {\cong} i_1 \\
				&= \sem{q'}^\perp {\cong} i_1 \\
				&= u_1\dg \sem q ^\perp
			\end{align*}
			which concludes.
		\item Case $\evalp {q,l,r} {\ket{x}}$. Direct.
		\item Case $\evalp {q,l,r} s$. Let $u$ be the mediating morphism in the
			induction hypothesis. We then have
			\begin{align*}
				u \sem{q}
				&= \sem q \xi (\iid \otimes \sem s \otimes \iid) \xi\dg
			\end{align*}
			which concludes.
		\item Case $\evalp {q,f} {p_1 \cdot p_2}$. By induction hypothesis, we
			know that:
			\[
				u_1 \sem{q'} = \sem q \xi (\iid \otimes \sem{p_1} \otimes \iid) {\xi'}\dg
			\]
			and
			\[
				u_2 \sem{q''} = \sem{q'} \xi' (\iid \otimes \sem{p_2} \otimes \iid) {\xi''}\dg
			\]
			Thus,
			\begin{align*}
				u_1 u_2 \sem{q''}
				&= u_1 \sem{q'} \xi' (\iid \otimes \sem p_2 \otimes \iid) {\xi''}\dg \\
				&= \sem q \xi (\iid \otimes \sem{p_1} \otimes \iid) {x'}\dg
				\xi' (\iid \otimes \sem{p_2} \otimes \iid) {\xi''}\dg \\
				&= \sem q \xi (\iid \otimes \sem{p_1} \sem{p_2} \otimes \iid)
				{\xi''}\dg \\
				&= \sem q \xi (\iid \otimes \sem{p_1 \cdot p_2} \otimes \iid)
				{\xi''}\dg
			\end{align*}
			which concludes.
		\item Case $\evalp {q,l,r} {p_1 \otimes p_2}$. Without loss of
			generality, we will work with
			\[
				\begin{array}{l}
					\xi_1 \colon \qsem l \otimes \qsem{k_1} \otimes \qsem {k_2}
					\otimes \qsem r \to \qsem m \text, \\
					\xi_2 \colon \qsem l \otimes \qsem{j_1} \otimes \qsem {k_2}
					\otimes \qsem r \to \qsem m'  \text, \\
					\xi_3 \colon \qsem l \otimes \qsem{j_1} \otimes \qsem {j_2}
					\otimes \qsem r \to \qsem m  \text.
				\end{array}
			\]
			By induction hypothesis, we know that:
			\[
				u_1 \sem{q'} = \sem q \xi_1 (\iid \otimes \sem{p_1} \otimes
				\iid \otimes \iid) {\xi_2}\dg
			\]
			and
			\[
				u_2 \sem{q''} = \sem{q'} \xi_2 (\iid \otimes \iid \otimes
				\sem{p_2} \otimes \iid) {\xi_3}\dg
			\]
			Thus,
			\begin{align*}
				u_1 u_2 \sem{q''}
				&= u_1 \sem{q'} \xi_2 (\iid \otimes \iid \otimes \sem{p_2}
				\otimes \iid) {\xi_3}\dg \\
				&= \sem q \xi_1 (\iid \otimes \sem{p_1} \otimes \iid \otimes
				\iid) {\xi_2}\dg \xi_2 (\iid \otimes \iid \otimes \sem{p_2}
				\otimes \iid) {\xi_3}\dg \\
				&= \sem q \xi_1 (\iid \otimes \sem{p_1} \otimes \iid \otimes
				\iid) (\iid \otimes \iid \otimes \sem{p_2} \otimes \iid)
				{\xi_3}\dg \\
				&= \sem q \xi_1 (\iid \otimes \sem{p_1} \otimes \sem{p_2}
				\otimes \iid) {\xi_3}\dg  \\
				&= \sem q \xi_1 (\iid \otimes \sem{p_1 \otimes p_2}
				\otimes \iid) {\xi_3}\dg
			\end{align*}
			which concludes.
	\end{itemize}
\end{proof}

Soundness is then simply a corollary.

\end{document}